\newcommand{\calF}{\mathcal{F}}
\newcommand{\calM}{\mathcal{M}}
\newcommand{\calP}{\mathcal{P}}
\newcommand{\calZ}{\mathcal{Z}}
\newcommand{\A}{\mathcal{A}}
\newcommand{\B}{\mathcal{B}}
\newcommand{\calB}{\mathcal{B}}
\newcommand{\D}{\mathcal{D}}
\newcommand{\calE}{\mathcal{E}}
\newcommand{\F}{\mathcal{F}}
\newcommand{\Z}{\mathcal{Z}}
\newcommand{\calO}{\mathcal{O}}
\newcommand{\V}{\mathcal{V}}
\newcommand{\M}{\mathcal{M}}
\newcommand{\X}{\mathcal{X}}
\newcommand{\Prob}{\mathit{Prob}}
\newcommand{\Nset}{\mathbb{N}}
\newcommand{\Zset}{\mathbb{Z}}
\newcommand{\Qset}{\mathbb{Q}}
\newcommand{\Rset}{\mathbb{R}}
\newcommand{\fpath}{\mathit{FPath}}
\newcommand{\run}{\mathit{Run}}
\newcommand{\len}{\mathit{length}}
\newcommand{\change}{\mathit{change}}
\newcommand{\rt}{\vec{t}_{oc}}
\renewcommand{\vec}[1]{\mathit{\pmb{#1}}}
\newcommand{\pmin}{p_{\mathit{min}}}
\newcommand{\eps}{\varepsilon}
\newcommand{\amax}{a_{\mathit{max}}}
\newcommand{\mindiv}{\mathit{mindiv}}
\newcommand{\Div}{\mathit{Div}}
\newcommand{\qup}{q_{\mathit{up}}}
\newcommand{\qdown}{q_{\mathit{down}}}
\newcommand{\Over}{\mathit{Over}}
\newcommand{\Zall}{\Z_{all}}
\newcommand{\Zminusi}[1]{\Z_{{-}#1}}
\newcommand{\runzc}[2]{\run(#1,#2)}
\newcommand{\bottomfin}{\mathit{botfin}}
\newcommand{\bottominf}{\mathit{botinf}}
\newcommand{\totalrew}[4]{\mathit{tot}_{#1}(#2;#4)}
\newcommand{\cval}{\mathit{cval}}
\newcommand{\state}{\mathit{state}}
\newcommand{\Zminusid}{\{1,\ldots,d-1\}}
\newcommand{\uin}{\vec{u}^{in}}
\newcommand{\msi}[1]{m^{(#1)}_i}
\newcommand{\ms}[1]{m^{(#1)}}
\newcommand{\ps}[1]{p^{(#1)}}
\newcommand{\xs}[1]{x^{(#1)}}
\newcommand{\emax}{e_{\mathit max}}
\newcommand{\gmax}{g_{\mathit max}}
\newcommand{\gmin}{g_{\mathit min}}
\newcommand{\ymin}{y_{\mathit min}}
\newcommand{\tran}[1]{{}\mathchoice%
    {\stackrel{#1}{\rightarrow}}
    {\mathop {\smash\rightarrow}\limits^{\vrule width 0pt height 0pt
                                                depth 4pt\smash{#1}}}
    {\stackrel{#1}{\rightarrow}}
    {\stackrel{#1}{\rightarrow}}
{}}
\newcommand{\ltran}[1]{{}\mathchoice%
    {\stackrel{#1}{\longrightarrow}}
    {\mathop {\smash\longrightarrow}\limits^{\vrule width 0pt height 0pt
                                                depth 4pt\smash{#1}}}
    {\stackrel{#1}{\longrightarrow}}
    {\stackrel{#1}{\longrightarrow}}
{}}
\newcommand{\btran}[1]{{}\mathchoice%
    {\stackrel{#1}{\hookrightarrow}}
    {\mathop {\smash\hookrightarrow}\limits^{\vrule width 0pt height 0pt
                                                depth 4pt\smash{#1}}}
    {\stackrel{#1}{\hookrightarrow}}
    {\stackrel{#1}{\hookrightarrow}}
{}}
\theoremstyle{plain}
\newtheorem{lemma}{Lemma}
\newtheorem{proposition}{Proposition}
\newtheorem{definition}{Definition}
\newtheorem{theorem}{Theorem}
\newtheorem{observation}{Observation}
\newenvironment{reftheorem}[2]{\begin{trivlist}
\item[\hskip \labelsep {\bfseries #1}\hskip \labelsep {\bfseries #2}]\itshape}{\end{trivlist}}
\begin{document}

%\runningtitle{Probabilistic Multi-Counter Automata}
\title{Zero-Reachability in Probabilistic Multi-Counter Automata}

\author{\IEEEauthorblockN{%
Tom\'{a}\v{s} Br\'{a}zdil$^{1,}$\IEEEauthorrefmark{1},\quad
Stefan Kiefer$^{2,}$\IEEEauthorrefmark{3},\quad
Anton\'{\i}n Ku\v{c}era$^{1,}$\IEEEauthorrefmark{1},\quad
Petr Novotn\'{y}$^{1,}$\IEEEauthorrefmark{1},\quad
Joost-Pieter Katoen\IEEEauthorrefmark{2}}\\
\IEEEauthorblockA{\IEEEauthorrefmark{1}%
Faculty of Informatics, Masaryk University, Brno, Czech Republic\\
\{brazdil,kucera\}@fi.muni.cz, petr.novotny.mail@gmail.com}\smallskip
\IEEEauthorblockA{\IEEEauthorrefmark{2}%
Department of Computer Science, RWTH Aachen University, Germany\\
katoen@cs.rwth-aachen.de}\smallskip
\IEEEauthorblockA{\IEEEauthorrefmark{3}%
Department of Computer Science, University of Oxford, United Kingdom\\
stefan.kiefer@cs.ox.ac.uk}
\IEEEcompsocitemizethanks{\IEEEcompsocthanksitem M.
Shell is with the Georgia Institute of Technology.
\IEEEcompsocthanksitem J. Doe and J. Doe are with An
onymous University.}}

\maketitle

\footnotetext[1]{T.~Br\'{a}zdil, A.~Ku\v{c}era, and P.~Novotn\'{y} are supported
by the Czech Science Foundation, Grant No.~P202/10/1469.}
\footnotetext[2]{S. Kiefer is supported by a Royal Society University Research Fellowship.}

\begin{abstract}
We study the qualitative and quantitative zero-reachability problem
in probabilistic multi-counter systems. We identify the undecidable
variants of the problems, and then we concentrate on the remaining
two cases. In the first case, when we are interested in the
probability of all runs that visit zero in \emph{some} counter,
we show that the qualitative zero-reachability is decidable in time
which is polynomial in the size of a given pMC and doubly exponential
in the number of counters. Further, we show that the probability of all
zero-reaching runs can be effectively approximated up to an arbitrarily
small given error $\varepsilon > 0$ in time which is polynomial
in $\log(\varepsilon)$, exponential in the size of a given pMC, and doubly
exponential in the number of counters. In the second case, we are interested
in the probability of all runs that visit zero in some counter different
from the last counter. Here we show that the qualitative zero-reachability
is decidable and \textsc{SquareRootSum}-hard, and the probability of all
zero-reaching runs can be effectively approximated up to an arbitrarily
small given error $\varepsilon > 0$ (these result
applies to pMC satisfying a suitable technical condition that can
be verified in polynomial time). The proof techniques invented in the
second case allow to construct counterexamples for some classical
results about ergodicity in stochastic Petri nets.

\end{abstract}

\section{Introduction}
\label{sec-intro}

A \emph{probabilistic multi-counter automaton (pMC)} $\A$ of dimension $d \in
\Nset$ is an abstract fully probabilistic computational device
equipped with a finite-state control unit and $d$~unbounded counters
that can store non-negative integers. A \emph{configuration} $p\vec{v}$ of
$\A$ is given by the current control state $p$ and the vector of
current counter values $\vec{v}$. The dynamics of $\A$ is defined
by a finite set of \emph{rules} of the form $(p,\alpha,c,q)$ where $p$ is the
current control state, $q$ is the next control state, $\alpha$ is 
a $d$-dimensional vector of counter changes ranging over $\{-1,0,1\}^d$,
and $c$ is a subset of counters that are tested for zero. Moreover,
each rule is assigned a positive integer \emph{weight}. A rule
$(p,\alpha,c,q)$ is \emph{enabled} in a configuration $p\vec{v}$ if 
the set of all counters with zero value in $\vec{v}$ is precisely~$c$ 
and no component of $\vec{v} + \alpha$ is negative; such an enabled
rule can be \emph{fired} in $p\vec{v}$ and generates a \emph{probabilistic
transition} $p\vec{v} \tran{x} q(\vec{v}{+}\alpha)$ where the probability
$x$ is equal to the weight of the rule divided by the total weight of all
rules enabled in $p\vec{v}$. 
A special subclass of pMC are \emph{probabilistic
vector addition systems with states (pVASS)}, which are equivalent
to (discrete-time) \emph{stochastic Petri nets (SPN)}. Intuitively,
a pVASS is a pMC where no subset of counters is tested for
zero explicitly (see Section~\ref{sec-prelim} for a precise definition).

The decidability and complexity of basic qualitative/quantitative
problems for pMCs has so far been studied mainly in the one-dimensional
case, and there are also some results about unbounded SPN (a more
detailed overview of the existing results is given below).
In this paper, we consider \emph{multi-dimensional} pMC and the associated
\emph{zero-reachability} problem. That is, we are interested in the
probability of all runs initiated in a given $p\vec{v}$
that eventually visit a ``zero configuration''. Since there are 
several counters, the notion of ``zero configuration'' can be formalized 
in various ways (for example, we might want to have zero in some counter, 
in all counters simultaneously, or in a given subset of counters).
Therefore, we consider a general \emph{stopping criterion} $\calZ$
which consists of \emph{minimal} subsets of counters that are required
to be simultaneously zero. For example, if 
$\calZ = \Zall = \{\{1\},\ldots,\{d\}\}$, then a run is stopped when
reaching a configuration with zero in \emph{some} counter; and if 
we put $\calZ = \{\{1,2\}\}$,
then a run is stopped when reaching a configuration with zero in 
counters~$1$ and~$2$ (and possibly also in other counters).
We use $\calP(\run(p\vec{v},\calZ))$ to denote the probability of
all runs initiated in $p\vec{v}$ that reach a configuration satisfying
the stopping criterion~$\calZ$. The main algorithmic problems considered
in this paper are the following:
\begin{itemize}
\item \emph{Qualitative $\calZ$-reachability:} 
   Is $\calP(\run(p\vec{v},\calZ)) = 1$? 
\item \emph{Approximation:} Can 
   $\calP(\run(p\vec{v},\calZ))$ be approximated up 
   to a given absolute/relative 
   error \mbox{$\varepsilon > 0$}?
\end{itemize}
We start by observing that the above problems are not effectively solvable
in general, and we show that there are only two potentially decidable
cases, where $\calZ$ is equal either to $\Zall$ (Case I) or to
$\Zminusi{i} = \Zall \smallsetminus \{\{i\}\}$ (Case II). Recall that if
$\calZ = \Zall$, then a run is stopped when some counter reaches
zero; and if  $\calZ = \Zminusi{i}$, then a run is stopped when a counter
different from~$i$ reaches~zero. Cases~I and~II are analyzed independently
and the following results are achieved:

\textbf{Case I}: We show that the qualitative $\Zall$-reachability problem
is decidable in time polynomial in $|\A|$ and doubly exponential in~$d$.
In particular, this means that the problem is decidable in 
\emph{polynomial time for every fixed $d$}. Then, we show that
$\calP(\run(p\vec{v},\Zall))$ can be effectively approximated up to
a given absolute/relative error \mbox{$\varepsilon > 0$} in time which
is polynomial in $|\varepsilon|$, exponential in $|\A|$, and doubly exponential 
in~$d$ (in the special case when $d = 1$, the problem is known to be 
solvable in time polynomial in $|\A|$ and $|\varepsilon|$, see
\cite{ESY:polynomial-time-termination}).

\textbf{Case II}:  We analyze Case~II only under a
technical assumption that counter~$i$ is not critical; roughly 
speaking, this means that counter~$i$ has either a 
tendency to increase or a tendency to decrease when the other counters are 
positive. The problem whether counter~$i$ is critical or not is 
solvable in time polynomial in $|\A|$, so we can efficiently check 
whether a given pMC can be analyzed by our methods.

Under the mentioned  assumption, we show how to construct
a suitable martingale which captures the behaviour of certain runs
in~$\A$. Thus, we obtain a new and versatile tool for analyzing quantitative
properties of runs in multi-dimensional pMC, which is 
more powerful than the martingale of \cite{BKK:pOC-time-LTL-martingale} 
constructed for one-dimensional pMC. Using this martingale and the
results of \cite{BFLZ:VASSz-model-checking-LMCS}, we show that the qualitative 
\mbox{$\Zminusi{i}$-reachability} problem is decidable. We also
show that the problem is \textsc{Square-Room-Sum}-hard, even
for two-dimensional pMC satisfying the mentioned technical assumption.
Further, we show that $\calP(\run(p\vec{v},\Zminusi{i}))$ can be 
effectively approximated up to a given absolute error 
\mbox{$\varepsilon > 0$}. The main reason why we do not provide any
upper complexity bounds in Case~II is a missing upper bound for coverability
in VAS with one zero test (see \cite{BFLZ:VASSz-model-checking-LMCS}).

\begin{figure}[t]
\begin{center}
\begin{tikzpicture}[x=2cm,y=2cm,>=stealth',
every transition/.style={draw,minimum size=6mm},
every place/.style={draw,minimum size=6mm},
bend angle=45]
\node[place] (1) at (0,0) {};
\node[place] (2) at (2,0) {};
\node at (-0.7,0) {counter 1};
\node at (2.7,0) {counter 2};
\node[transition] (middle) at (1,0) {$100$}
 edge[pre] (1)
 edge[pre] (2);
\node[transition] at (0,0.5) {$1$}
 edge[post] (1);
\node[transition] at (2,0.5) {$1$}
 edge[post] (2);
\node[transition] (1d) at (0,-0.5) {$10$}
 edge[pre] (1);
\draw (1d.west) edge[post,bend left] (1);
\draw (1d.east) edge[post,bend right] (1);
\node[transition] (2d) at (2,-0.5) {$10$}
 edge[pre] (2);
\draw (2d.west) edge[post,bend left] (2);
\draw (2d.east) edge[post,bend right] (2);
\node at (-0.3,-.5) {$t_1$};
\node at (2.3,-.5) {$t_2$};
\end{tikzpicture}
\end{center}
\caption{Firing process may not be ergodic.}
\label{fig-SPN}
\end{figure}
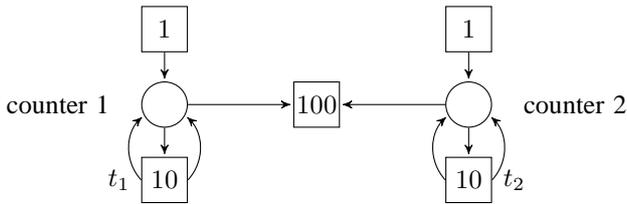

It is worth noting that the techniques developed in Case~II reveal 
the existence of phenomena that should not exist according to the
previous results about ergodicity in SPN. A classical paper in this area
\cite{DBLP:journals/tse/FlorinN89} has been written by Florin \&
Natkin in 80s. In the paper, it is claimed that if the state-space of
a given SPN (with arbitrarily many unbounded places) is strongly
connected, then the firing process is ergodic (see Section IV.B.{}
in \cite{DBLP:journals/tse/FlorinN89}).  In the setting of
discrete-time probabilistic Petri nets, this means that for almost
all runs, the limit frequency of transitions performed along a run is
defined and takes the same value.  However, in Fig.~\ref{fig-SPN} there is an
example of a pVASS (depicted as SPN with weighted transitions) with two  
counters (places) and strongly connected state space 
where the limit frequency of
transitions may take two eligible values (each with probability
$1/2$). Intuitively, if both counters are positive, then both
of them have a tendency to decrease (i.e., the trend of the only BSCC
of $\F_\A$ is negative in both components, see Section~\ref{sec-case1}). 
However, if we reach a configuration
where the first counter is zero and the second counter is sufficiently 
large, then the second counter starts to \emph{increase}, i.e., it
never becomes zero again with some positive probability (cf.{} the
\textit{oc-trend} of the only BSCC~$D$ of $\calB_1$ introduced
in Section~\ref{sec-case2}). The first counter stays zero for most
of the time, because when it becomes positive, it is immediatelly
emptied with a very large probability. This means
that the frequency of firing $t_2$ will be much higher than the frequency
of firing $t_1$. When we reach a configuration where the first counter
is large and the second counter is zero, the situation is symmetric, i.e.,
the frequency of firing $t_1$ becomes much higher than the frequency
of firing $t_2$. Further, almost every run eventually behaves according to
one the two scenarios, and therefore there are two eligible limit 
frequencies of transitions, each of which is taken with probability~$1/2$. 
So, we must unfortunately conclude that the results
of \cite{DBLP:journals/tse/FlorinN89} are not valid for general SPN. 
\smallskip

\noindent
\textbf{Related Work.} 
One-dimensional pMC and their extensions into decision processes
and games were studied in 
\cite{BBEKW:OC-MDP,EWY:one-counter,BKK:pOC-time-LTL-martingale,ESY:polynomial-time-termination,BBEK:OC-games-termination-approx,EWY:one-counter-PE,BBE:OC-games}. In particular, in \cite{ESY:polynomial-time-termination} it was shown that
termination probability (a ``selective'' variant of zero-reachability)
 in one-dimensional pMC can be approximated up to an arbitrarily small 
given error in polynomial time. In \cite{BKK:pOC-time-LTL-martingale},
it was shown how to construct a martingale for a given one-dimensional
pMC which allows to derive tail bounds on termination time (we use
this martingale in Section~\ref{sec-case1}). 

There are also many 
papers about SPN (see, e.g., \cite{DBLP:journals/tc/Molloy82,%
DBLP:journals/tocs/MarsanCB84}), and some of these works also
consider algorithmic aspects of unbounded SPN (see, e.g.,
\cite{AHM:decisive-Markov-chains,DBLP:journals/jss/FlorinN86,%
DBLP:journals/tse/FlorinN89}). 

Considerable amount of papers has been devoted to algorithmic analysis
of so called probabilistic lossy channel systems (PLCS) and their game
extensions
(see~e.g.~\cite{IN:probLCS-TACAS,BE:probLCS-algorithms-ARTS,AHMS:Eager-limit,ACMS:stoch-parity-games-lossy-QEST,ABRS:IC}). PLCS
are a stochastic extension of lossy channel systems, i.e., an
infinite-state model comprising several interconnected queues coupled
with a finite-state control unit. The main ingredient, which makes
results about PLCS incomparable with our results on pMCs, is that 
queues may lose messages with a fixed
loss-rate, which substantially simplifies the associated analysis.

\section{Preliminaries}
\label{sec-prelim}

\noindent
We use $\Zset$, $\Nset$, $\Nset^+$, $\Qset$, and $\Rset$
to denote the set of all integers, non-negative
integers, positive integers, rational numbers, and real numbers, 
respectively. 

Let $\V = (V,L,\tran{})$, where $V$ is a non-empty set of vertices,
$L$ a non-empty set of \emph{labels}, and
${\tran{}} \subseteq V \times L \times V$ a \emph{total} relation
(i.e., for every $v \in V$ there is at least one \emph{outgoing}
transition $(v,\ell,u) \in {\tran{}}$). As usual, we write 
$v \tran{\ell} u$ instead of $(v,\ell,u) \in {\tran{}}$, and
$v \tran{} u$ iff $v \tran{\ell} u$ for some $\ell \in L$.
The reflexive and transitive closure of $\tran{}$ is denoted by $\tran{}^*$. 
A \emph{finite path} in $\V$ of
\emph{length} $k \geq 0$ is a finite sequence
of the form $v_0\ell_0v_1\ell_1\ldots\ell_{k-1} v_k$, where
$v_i \tran{\ell_{i}} v_{i+1}$ for all $0 \leq i <k$. The length of a finite
path $w$ is denoted by $\len(w)$. 
A \emph{run} in $\V$ is an infinite
sequence $w$ of vertices such that every finite prefix of $w$ ending in
a vertex is a finite path in $\V$. The individual vertices of $w$ are 
denoted by $w(0),w(1),\ldots$. The sets of all finite paths and all runs
in $\V$ are denoted by $\fpath_{\V}$ and $\run_{\V}$, respectively.
The sets of all finite paths and all runs in $\V$
that start with a given finite path $w$ are denoted by
$\fpath_{\V}(w)$ and $\run_{\V}(w)$, respectively.
A \emph{strongly connected component (SCC)} of $\V$ is a
maximal subset $C \subseteq V$ such that for all $v,u \in C$
we have that $v \tran{}^* u$. A SCC~$C$ of $\V$ is 
a \emph{bottom SCC (BSCC)} of $\V$ if for all 
$v \in C$ and $u \in V$ such that $v \tran{} u$ we have that
$u \in C$.

We assume familiarity with basic notions of probability theory, e.g.,
\emph{probability space}, \emph{random variable}, or the \emph{expected 
value}. As usual, a \emph{probability distribution} over a finite or
countably infinite set $A$ is a function
$f : A \rightarrow [0,1]$ such that \mbox{$\sum_{a \in A} f(a) = 1$}.
We call $f$ \emph{positive} if
$f(a) > 0$ for every $a \in A$, and \emph{rational} if $f(a) \in
\Qset$ for every $a \in A$.

\begin{definition}
\label{def-Markov-chain}
  A \emph{labeled Markov chain} is a tuple \mbox{$\M = (S,L,\tran{},\Prob)$}
  where $S \neq \emptyset$ is a finite or countably infinite
  set of \emph{states}, $L \neq \emptyset$ is a finite or countably 
  infinite set of \emph{labels},
  \mbox{${\tran{}} \subseteq S \times L \times S$} is a total 
  \emph{transition relation}, and $\Prob$ is a function that assigns 
  to each state $s \in S$
  a positive probability distribution over the outgoing transitions
  of~$s$. We write $s \ltran{\ell,x} t$ when $s \tran{\ell} t$
  and $x$ is the probability of $(s,\ell,t)$.
\end{definition}

\noindent
If $L = \{\ell\}$ is a singleton, we say that $\M$ is 
\emph{non-labeled}, and we omit both $L$ and $\ell$ when specifying $\M$
(in particular, we write $s \tran{x} t$ instead of $s \ltran{\ell,x} t$).
To every $s \in S$ we associate the standard probability space
$(\run_{\M}(s),\calF,\calP)$ of runs starting at $s$, where $\calF$ is
the \mbox{$\sigma$-field} generated by all \emph{basic cylinders}
$\run_{\M}(w)$, where $w$ is a finite path starting at~$s$, and
$\calP: \calF \rightarrow [0,1]$ is the unique probability measure
such that $\calP(\run_{\M}(w)) = \prod_{i{=}1}^{\len(w)} x_i$ where
$x_i$ is the probability of $w(i{-}1) \ltran{\ell_{i-1}} w(i)$ for
every $1 \leq i \leq \len(w)$.  If $\len(w) = 0$, we put
$\calP(\run_{\M}(w)) = 1$.

Now we introduce probabilistic multi-counter automata (pMC). 
For technical convenience, we consider \emph{labeled} rules, 
where the associated finite 
set of labels always contains a distinguished element~$\tau$.
The role of the labels becomes clear in Section~\ref{sec-case2} where
we abstract a (labeled) one-dimensional pMC from a given 
multi-dimensional one.

\begin{definition}
\label{def-pVASS}
Let $L$ be a finite set of labels such that $\tau \in L$, and let 
$d \in \Nset^+$. An $L$-labeled $d$-dimensional 
\emph{probabilistic multi-counter automaton (pMC)} is a 
triple  $\A = (Q,\gamma,W)$, where 
\begin{itemize}
\item $Q$ is a finite set of \emph{states}, 
\item \mbox{$\gamma \subseteq Q \times \{-1,0,1\}^d \times 2^{\{1,\ldots,d\}} 
   \times L \times Q$} is a set of \emph{rules} such that for all
   $p \in Q$ and $c \subseteq \{1,\ldots,d\}$ there is at least one 
   outgoing rule of the form $(p,\vec{\alpha},c,\ell,q)$,
\item $W : \gamma \rightarrow \Nset^+$ is a \emph{weight assignment}.
\end{itemize}
\end{definition}

\noindent
The encoding size of $\A$ is denoted by $|\A|$,
where the weights used in $W$ and the counter indexes used in 
$\gamma$ are encoded in binary. 

A \emph{configuration} of $\A$ is an element of $Q \times \Nset^d$,
written as $p\vec{v}$. We use 
$Z(p\vec{v}) = \{ i \mid 1\leq i \leq d, \vec{v}[i] = 0\}$ 
to denote the set of all counters that are zero in $p\vec{v}$. 
A rule $(p,\vec{\alpha},c,\ell,q) \in \gamma$ is \emph{enabled} in a 
configuration $p\vec{v}$ if $Z(p\vec{v}) = c$ and 
for all $1 \leq i \leq d$ where $\vec{\alpha}[i] = -1$ we have that 
$\vec{v}[i] > 0$.

The semantics of a $\A$ is given by the associated $L$-labeled 
Markov chain $\M_\A$ whose states are the configurations 
of~$\A$, and the outgoing transitions of a configuration 
$p\vec{v}$ are determined as follows:
\begin{itemize}
\item If no rule of $\gamma$ is enabled in $p\vec{v}$, then
  $p\vec{v} \ltran{\tau,1} p\vec{v}$ is the only outgoing transition 
  of $p\vec{v}$;
\item otherwise, for every rule $(p,\vec{\alpha},c,\ell,q) \in \gamma$ enabled in
  $p\vec{v}$ there is a transition  
  $p\vec{v} \ltran{x,\ell} q\vec{u}$ such that 
  $\vec{u} = \vec{v}+\vec{\alpha}$ and 
  $x = W((p,\vec{\alpha},c,\ell,q))/T$, where $T$ is the total weight of all 
  rules enabled in $p\vec{v}$.
\end{itemize}

\noindent
When $L = \{\tau\}$,
we say that $\A$ is \emph{non-labeled}, and both $L$ and $\tau$ 
are omitted when specifying $\A$. We say that $\A$ is a \emph{probabilistic
vector addition system with states (pVASS)} if no subset of counters 
is tested for zero, i.e., for every $(p,\vec{\alpha},\ell,q) \in 
Q \times \{-1,0,1\}^d \times L \times Q$ we have that $\gamma$ contains
either all rules of the form $(p,\vec{\alpha},c,\ell,q)$ (for all 
$c \subseteq \{1,\ldots,d\}$) with the same weight, or no such rule.
For every configuration $p\vec{v}$, %
we use  $\state(p\vec{v})$ and $\cval(p\vec{v})$ to denote the 
control state $p$ and the vector of counter values $\vec{v}$, 
respectively. We also use $\cval_i(p\vec{v})$ to denote~$\vec{v}[i]$.

\medskip

\noindent
\textbf{Qualitative zero-reachability.} 
A \emph{stopping criterion} is a non-empty set 
\mbox{$\calZ \subseteq 2^{\{1,\ldots,d\}}$} of pairwise incomparable
non-empty subsets of counters. 
For every configuration $p\vec{v}$, let $\run(p\vec{v},\calZ)$ be the set 
of all $w \in \run(p\vec{v})$ such that there exist $k \in \Nset$ and
$\varrho \in \calZ$ satisfying $\varrho \subseteq Z(w(i))$. Intuitively, 
$\calZ$ specifies the minimal subsets of counters that must be 
\emph{simultaneously} zero to stop a run.
The \emph{qualitative \mbox{$\calZ$-reachability} problem} is formulated 
as follows: 
\smallskip

\noindent
\textbf{Instance:} A \mbox{$d$-dimensional} pMC $\A$ and a control state 
$p$ of~$\A$.
\textbf{Question:} Do we have $\calP(\run(p\vec{1},\calZ)) = 1$ ? 
\smallskip

\noindent
Here $\vec{1}=(1,\ldots,1)$ is a $d$-dimensional vector of $1$'s.
We also use $\run(p\vec{v},\neg\calZ)$ to denote 
$\run(p\vec{v}) \smallsetminus \run(p\vec{v},\calZ)$, and
we say that $w \in \fpath(p\vec{v})$ is \emph{$\calZ$-safe}
if for all $w(i)$ where $0 \leq i < \len(w)$ and all 
$\varrho \in \calZ$ we have that $\varrho \not\subseteq Z(w(i))$.

\section{The Results}
\label{sec-results}

We start by observing that the qualitative zero-reachability 
problem is undecidable in general, and we identify potentially
decidable subcases. 

\begin{observation}
\label{thm-undecidable}
  Let \mbox{$\calZ \subseteq 2^{\{1,\ldots,d\}}$} be a stopping criterion
  satisfying one of the following conditions:
  \begin{itemize}
  \item[(a)] there is $\varrho \in \calZ$ with more than one element;
  \item[(b)] there are $i,j \in \{1,\ldots,d\}$ such that $i \neq j$ and
    for every $\varrho \in \calZ$ we have that $\{i,j\} \cap \varrho 
    = \emptyset$.
  \end{itemize}
  Then, the qualitative $\calZ$-reachability problem is \emph{undecidable},
  even if the set of instances is restricted to pairs $(\A,p)$
  such that $\calP(\run(p\vec{1},\calZ))$ is either~$0$ or~$1$ (hence, 
  $\calP(\run(p\vec{1},\calZ))$ cannot be effectively approximated 
  up to an absolute error smaller than~$0.5$).
\end{observation}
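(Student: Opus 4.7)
The plan is to reduce the halting problem for deterministic two-counter Minsky machines to the qualitative $\calZ$-reachability problem. Given a Minsky machine $M$, I will construct a pMC $\A$ of the prescribed dimension and a control state $p$ so that, starting from $p\vec{1}$, the essentially deterministic run of $\A$ reaches a $\calZ$-stopping configuration iff $M$ halts. Since every rule I use has weight $1$ and is the only one enabled at its configuration, the Markov chain on reachable configurations is deterministic, so $\calP(\run(p\vec{1},\calZ)) \in \{0,1\}$, and the claimed inapproximability follows immediately: any algorithm approximating this probability within absolute error $<1/2$ would decide halting.

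In case (a) I would fix some $\varrho \in \calZ$ with $|\varrho| \geq 2$, select two distinct indices $i_1, i_2 \in \varrho$, and represent the two Minsky counters by pMC counters $i_1$ and $i_2$ offset by $+1$. The remaining counters are never modified and stay at $1$ throughout. A zero-test on the $k$-th Minsky counter is implemented by a short subroutine that temporarily decrements $i_k$, inspects whether it is zero, and then increments back. During the whole simulation, counters $i_1$ and $i_2$ are both $\geq 1$ (apart from the atomic test moment for one of them, while the other still holds $\geq 1$), so $\varrho$ is never simultaneously zero. When $M$ enters its halt state, $\A$ deterministically decrements every element of $\varrho$ down to $0$, which triggers stopping.

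In case (b) I pick indices $i \neq j$ that are disjoint from every $\varrho \in \calZ$, first decrement the initial values of counters $i$ and $j$ from $1$ down to $0$, and then use them directly as Minsky counters. Because no element of $\calZ$ contains $i$ or $j$ and all other counters still hold $1$, this preliminary phase cannot trigger any stopping criterion. If $M$ halts, I then fix an arbitrary $\varrho \in \calZ$ and decrement its counters (each still at $1$) down to $0$ to trigger stopping.

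The principal obstacle, and the thing I would verify most carefully, is that no unintended $\varrho' \in \calZ$ fires prematurely during either the simulation or the cleanup phase. I will discharge this uniformly via the pairwise incomparability of elements of $\calZ$: for any $\varrho' \in \calZ$ distinct from the chosen $\varrho$, incomparability forces some index $k \in \varrho' \setminus \varrho$ to exist, and by construction the counter indexed by $k$ is never modified (in case (b) because $k \notin \{i,j\}$, which follows from $\varrho' \cap \{i,j\} = \emptyset$) and still equals $1$, so $\varrho'$ cannot be simultaneously zero. Consequently, stopping occurs exactly when the cleanup phase empties $\varrho$, which happens iff $M$ halts, completing the reduction.
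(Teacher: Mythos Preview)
Your proposal is correct and follows essentially the same route as the paper: a deterministic simulation of a two-counter Minsky machine inside a pMC, with the simulating counters shielded by an offset in case~(a) and used directly in case~(b), and with a final ``cleanup'' phase that empties a chosen $\varrho$ upon halting. The paper only spells out the minimal instances $\calZ=\{\{1,2\}\}$ in dimension~$2$ and $\calZ=\{\{3\}\}$ in dimension~$3$ and then asserts that higher dimensions are handled ``trivially''; your write-up goes further and makes explicit the one point the paper leaves implicit, namely the use of pairwise incomparability of the elements of~$\calZ$ to rule out that any $\varrho'\neq\varrho$ fires during simulation or cleanup.
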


\noindent
A proof of Observation~\ref{thm-undecidable} is immediate. 
For a given Minsky machine $M$ (see \cite{Minsky:book}) with two counters 
initialized to one,
we construct pMCs $\A_a$ and $\A_b$ of dimension $2$ and $3$, respectively,
and a control state $p$ such that
\begin{itemize}
\item if $M$ halts, then $\calP(\run_{\M_{\A_a}}(p\vec{1},\{\{1,2\}\})) = 1$
   and $\calP(\run_{\M_{\A_b}}(p\vec{1},\{\{3\}\})) = 1$;
\item if $M$ does not halt, then 
   $\calP(\run_{\M_{\A_a}}(p\vec{1},\{\{1,2\}\})) = 0$
   and $\calP(\run_{\M_{\A_b}}(p\vec{1},\{\{3\}\})) = 0$.
\end{itemize}
The construction of $\A_a$ and $\A_b$ is trivial (and hence omitted).
Note that $\A_b$ can faithfully simulate the instructions of $M$ using
the counters $1$ and $2$. The third counter
is decreased to zero only when a control state corresponding to the halting 
instruction of $M$ is reached. Similarly, $\A_a$ simulates the instructions of 
$M$ using its two counters, but here we need to ensure that a configuration
where \emph{both} counters are simultaneously zero is entered iff 
a control state corresponding to the halting instruction of $M$ is reached. 
This is achieved by increasing both counters by $1$ initially, and then
decreasing/increasing counter~$i$ before/after simulating a given instruction
of $M$ operating on counter~$i$.

Note that the construction of $\A_a$ and $\A_b$ can trivially be adapted
to pMCs of higher dimensions satisfying the conditions~(a) and~(b)
of Observation~\ref{thm-undecidable}, respectively. However,
there are two cases not covered by Observation~\ref{thm-undecidable}: 
\begin{itemize}
\item[I.] $\Zall = \{\{1\},\ldots,\{d\}\}$, i.e., a run is stopped 
   when \emph{some} counter reaches zero.
\item[II.] $\Zminusi{i} = \{\{1\},\ldots,\{d\}\} \smallsetminus \{\{i\}\}$ where
   $i \in \{1,\ldots,d\}$, i.e., a run is stopped when a counter different 
   from~$i$ reaches zero. The counters different from~$i$ are called 
   \emph{stopping counters}.
\end{itemize}
These cases are analyzed in the following subsections.

\subsection{Zero-Reachability, Case~I}
\label{sec-case1}

For the rest of this section, let us fix a (non-labeled) 
pMC $\A = (Q,\gamma,W)$ of dimension $d \in \Nset^+$ and 
a configuration $p\vec{v}$. 

Our aim is to identify the conditions under which
$\calP(\run(p\vec{v},\neg\Zall)) > 0$. To achieve that, we 
first consider a (non-labeled) finite-state Markov chain
$\F_{\A} = (Q,\btran{},\Prob)$ where $q \btran{x} r$ iff 
\[
  x \quad = \quad 
  \sum_{(q,\vec{\alpha},\emptyset,r) \in \gamma} P_\emptyset(q,\vec{\alpha},\emptyset,r) 
  \quad > \quad 0.
\]
Here $P_\emptyset : \gamma \rightarrow [0,1]$ is the probability assignment
for the rules defined as follows (we write $P_\emptyset(q,\vec{\alpha},\emptyset,r)$
instead of $P_\emptyset((q,\vec{\alpha},\emptyset,r))$):
\begin{itemize}
\item For every rule $(p,\vec{\alpha},c,q)$ where $c \neq \emptyset$
  we put $P_\emptyset(p,\vec{\alpha},c,q) = 0$.
\item $P_\emptyset(p,\vec{\alpha},\emptyset,q) = W((p,\alpha,\emptyset,q))/T$,
  where $T$ is the total weight of all rules of the form
  $(p,\vec{\alpha}',\emptyset,q')$.
\end{itemize}
Intuitively, a state $q$ of $\F_{\A}$ captures the behavior of 
configurations $q \vec{u}$ where all components of $\vec{u}$ are
positive. 

Further, we partition the states of 
$Q$ into SCCs $C_1,\ldots,C_m$ according to~$\hookrightarrow$. Note that every 
run $w \in \run(p\vec{v})$ eventually \emph{stays} in precisely
one $C_j$, i.e., there is precisely one $1\leq j \leq m$ 
such that for some $k \in \Nset$, the control state of every $w(k')$,
where $k' \geq k$, belongs to~$C_i$. 
We use $\runzc{p\vec{v}}{C_j}$ to denote 
the set of all $w \in \run(p\vec{v},\neg\Zall)$ that stay in~$C_j$. 
Obviously,
\[
  \run(p\vec{v},\neg\Zall) = 
  \run(p\vec{v},C_1) \uplus \cdots \uplus \runzc{p\vec{v}}{C_m}.
\]

  For any $n \in \Nset$ denote by $P_n$ the probability that a run $w$ initiated in $p\vec{v}$ satisfies the following for every $0\leq i \leq n$: $\state(w(i))$ does not belong to any BSCC of $\F_\A$ and $Z(w(i))=\emptyset$. The following lemma shows that $P_n$ decays exponentially fast.

\begin{lemma}\label{lem:F_A-BSCC}
  For any $n \in \Nset$ we have $$P_n \leq (1 - \pmin^{|Q|})^{\lfloor\frac{n}{|Q|}\rfloor},$$ where $\pmin$ is the minimal positive transition probability in $\M_\A$.
  In particular, for any non-bottom SCC $C$ of $\F_\A$ we have 
  $\calP(\run(p\vec{v},C)) = 0$.
\end{lemma}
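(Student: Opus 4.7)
Let $A_n$ denote the event whose probability is $P_n$. The plan is to prove the decay bound by a block argument: I will show $\calP(A_{(m+1)|Q|}\mid A_{m|Q|})\leq 1-\pmin^{|Q|}$, and iterating this estimate across $\lfloor n/|Q|\rfloor$ windows yields the claim. The ``in particular'' statement will then follow because every $w\in\runzc{p\vec{v}}{C}$ with $C$ a non-bottom SCC of $\F_\A$ lies in $A_n$ for all $n$, forcing $\calP(\runzc{p\vec{v}}{C})\leq P_n\to 0$.

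\textbf{Block step.} The driving observation is that whenever $Z(r\vec{u})=\emptyset$, the enabled rules of $\M_\A$ at $r\vec{u}$ are exactly the $c=\emptyset$ rules out of $r$, each firing with probability at least $\pmin$; consequently, while all counters stay positive the state performs an $\F_\A$-walk. Assume at time $m|Q|$ we sit at $q\vec{u}$ with $q$ outside every BSCC of $\F_\A$ and $\vec{u}\geq\vec{1}$. I fix a shortest $\F_\A$-path $q=r_0\btran{}r_1\btran{}\cdots\btran{}r_k$ to some BSCC state (so $k\leq|Q|-1$), and for each step pick an underlying rule $\rho_j=(r_j,\vec{\beta}_j,\emptyset,r_{j+1})\in\gamma$ of positive weight. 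Setting $\vec{u}_0=\vec{u}$ and $\vec{u}_{j+1}=\vec{u}_j+\vec{\beta}_j$, let $j^*$ be the first index in $\{1,\dots,k\}$ for which $\vec{u}_{j^*}$ has a zero coordinate, or $j^*=k$ otherwise. For $j<j^*$ we have $\vec{u}_j\geq\vec{1}$, so $\rho_j$ is enabled at $r_j\vec{u}_j$ and fires with probability at least $\pmin$; the witnessing event ``the next $j^*$ transitions use $\rho_0,\dots,\rho_{j^*-1}$ in order'' thus has probability at least $\pmin^{j^*}\geq\pmin^{|Q|}$, and upon its realisation either $j^*=k$ and the state enters the BSCC of $r_k$, or $\vec{u}_{j^*}$ has a zero coordinate -- either way $A_{(m+1)|Q|}$ fails.

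\textbf{Consequence and main obstacle.} Iterating the block bound yields $P_n\leq(1-\pmin^{|Q|})^{\lfloor n/|Q|\rfloor}$. For the second statement, every $w\in\runzc{p\vec{v}}{C}$ satisfies $Z(w(i))=\emptyset$ for all $i$, so only $c=\emptyset$ rules ever fire and the state sequence is an $\F_\A$-random walk; since BSCCs of $\F_\A$ are absorbing, a run that eventually settles in the non-BSCC $C$ never visits any BSCC, hence $w\in A_n$ for every $n$. The one subtlety is the lifting of the $\F_\A$-path into $\M_\A$: a counter can drop to zero along the lifted prefix, but this is favourable rather than harmful, since a zero counter already breaks $A_n$; truncating the lift at the first such step gives exactly the uniform witnessing event of probability $\geq\pmin^{|Q|}$.
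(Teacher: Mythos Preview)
Your proof is correct and follows essentially the same approach as the paper. The paper's proof is a single sentence asserting the key fact---that from every configuration there is a path of length at most $|Q|$ to a configuration that either has a zero counter or sits in a BSCC of $\F_\A$---and declares the lemma immediate; your argument is a careful expansion of exactly this idea, making the block decomposition, the lifting of the $\F_\A$-path into $\M_\A$, and the absorption argument for the ``in particular'' part explicit.
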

\begin{proof}
 The lemma immediately follows from the fact that for every configuration $p\vec{v}$ there is a path (in $\A$) of length at most $|Q|$ to a configuration $q\vec{u}$ satisfying either $Z(q\vec{u})\neq \emptyset$ or $q\in D$ for some BSCC $D$ of $\F_\A$.
\end{proof}

Now, let $C$ be a BSCC of $\F_{\A}$. For every $q \in C$, let 
$\change^q$ be a $d$-dimensional vector of \emph{expected counter changes}
given by 
\[
   \change^q_i = 
   \sum_{(q,\vec{\alpha},\emptyset,r) \in \gamma} 
     P_{\emptyset}(q,\vec{\alpha},\emptyset,r) \cdot \vec{\alpha}[i] \,.
\]
Note that $C$ can be seen
as a finite-state irreducible Markov chain, and hence there exists the
unique \emph{invariant distribution} $\mu$ on the states of~$C$ 
(see, e.g.,~\cite{KS:book}) satisfying
\[
\mu(q)\quad =\quad \sum_{r\btran{x} q} \mu(r)\cdot x \,.
\]
The \emph{trend} of $C$ is a $d$-dimensional vector $\vec{t}$  defined by
\[
  \vec{t}[i] \quad = \quad \sum_{q \in C} \mu(q) \cdot \change^q_i \,.
\]
Further, for every $i \in \{1,\ldots,d\}$ and every $q\in C$, we denote
by $\bottomfin_i(q)$ the \emph{least} $j \in \Nset$ such that 
for every configuration $q \vec{u}$ where $\vec{u}[i] = j$,
there is \emph{no} $w \in \fpath_{\M_\A}(q \vec{u})$ where
counter~$i$ is zero in the last configuration of $w$ and all counters
stay positive in every $w(k)$, where $0 \leq k < \len(w)$.
If there is no such $j$, we put $\bottomfin_i(q) = \infty$.
It is easy to show that if $\bottomfin_i(q) < \infty$, then $\bottomfin_i(q)
\leq |C|$; and if $\bottomfin_i(q) = \infty$, then $\bottomfin_i(r)
= \infty$ for all $r \in C$. Moreover, if $\bottomfin_i(q) < \infty$, then there is a $\Zminusi{i}$-safe finite path of length at most $|C|-1$ from $q\vec{u}$ to a configuration with $i$-th counter equal to 0, where $\vec{u}[i]=\bottomfin_i(q)-1$ and $\vec{u}[\ell]=|C|$ for $\ell\neq i$. In particular, the number $\bottomfin_i(q)$ is computable in time polynomial in $|C|$.

We say that counter~$i$ is
\emph{decreasing} in $C$ if $\bottomfin_i(q) = \infty$ for some (and
hence all) $q \in C$.

\begin{definition}
  Let $C$ be a BSCC of $\F_{\A}$ with trend $\vec{t}$, and let 
  $i \in \{1,\ldots,d\}$. We say that counter~$i$ is \emph{diverging}
  in $C$ if either $\vec{t}[i] > 0$, or $\vec{t}[i] = 0$ and the counter~$i$ 
  is not decreasing in~$C$.
\end{definition}
Intuitively, our aim is to prove that $\calP(\run(p\vec{v},C))>0$ iff
all counters are diverging in $C$ and $p\vec{v}$ can reach a configuration 
$q\vec{u}$ (via a $\Zall$-safe finite path) where all components of 
$\vec{u}$ are ``sufficiently large''.
To analyze the individual counters, for every $i \in \{1,\ldots,d\}$
we introduce a~(labeled) \emph{one-dimensional} pMC
which faithfully simulates the behavior of counter~$i$
and ``updates'' the other counters just symbolically in the labels.

\begin{definition} 
  Let $L = \{-1,0,1\}^{d-1}$, and let $\B_i = (Q,\hat{\gamma},\hat{W})$
  be an $L$-labeled pMC of dimension one such that 
  \begin{itemize}\itemsep1ex
  \item $(q,j,\emptyset,\vec{\beta},r) \in \hat{\gamma}$ \ iff \ 
        $(q,\langle \vec{\beta},j\rangle_i,\emptyset,r) \in \gamma$;
  \item $(q,j,\{1\},\vec{\beta},r) \in \hat{\gamma}$ \ iff \ 
        $(q,\langle \vec{\beta},j\rangle_i,\{i\},r) \in \gamma$;
  \item $\hat{W}(q,j,\emptyset,\vec{\beta},r) = 
         W(q,\langle \vec{\beta},j \rangle_i,\emptyset,r)$.
  \item $\hat{W}(q,j,\{1\},\vec{\beta},r) = 
         W(q,\langle \vec{\beta},j \rangle_i,\{i\},r)$.
  \end{itemize}
  Here, 
  $\langle (j_1,\ldots,j_{d-1}),j\rangle_i = 
   (j_1,\ldots,j_{i-1},j,j_i,\ldots,j_{d-1})$.
\end{definition}

\noindent
Observe that the symbolic updates of the counters different from~$i$ 
``performed'' in the labels of $\B_i$ mimic the real updates performed
by $\A$ in configurations where all of these counters are
positive. 

Given a run $w\equiv p_0(v_0) \, \vec{\alpha}_0 \, 
p_1(v_1) \, \vec{\alpha}_1 \, p_2(v_2) \, \vec{\alpha}_2 \, \ldots$ in $\run_{\M_{\B}}(p_0(v_0))$ and $k\in \Nset$, we denote by $\totalrew{}{w}{0}{k}$ the vector $\sum_{n=0}^{k-1} \vec{\alpha}_n$, and given $j\in \{1,\ldots,d\}\smallsetminus \{i\}$, we denote by $\totalrew{j}{w}{0}{k}$ the number $\sum_{n=0}^{k-1} \vec{\alpha}_n[j]$ (i.e., the $j$-th component of $\sum_{n=0}^{k-1} \vec{\alpha}_n$).

Let $\Upsilon_i$ be a function which for a given run 
$w \equiv p_0\vec{v}_0\,p_1\vec{v}_1\, p_2\vec{v}_2\ldots$ of
$\run_{\M_\A}(p\vec{v},\neg\Zminusi{i})$ returns a run
$\Upsilon_i(w) \equiv p_0(\vec{v}_0[i]) \, \vec{\alpha}_0 \, 
p_1(\vec{v}_1[i]) \, \vec{\alpha}_1 \, p_2(\vec{v}_2[i]) \, \vec{\alpha}_2 \, \ldots$
of $\run_{\M_{\B_i}}(p(\vec{v}[i]))$ where the label $\vec{\alpha}_j$ 
corresponds to the update in the abstracted counters performed in
the transition $p_j\vec{v}_j \tran{} p_{j+1}\vec{v}_{j+1}$,
i.e., $\vec{v}_{j+1} - \vec{v}_j = 
\langle \vec{\alpha}_j,\vec{v}_{j+1}[i] - \vec{v}_j[i] \rangle_i$. 
The next lemma is immediate.
\begin{lemma}\label{prop:one-counter-runs}
For all $w \in \run_{\M_\A}(p\vec{v},\neg\Zminusi{i})$ and 
$k \in \Nset$ we have that
\begin{itemize}
\item $\state(w(k))=\state(\Upsilon_i(w)(k))$,
\item $\cval(w(k)) = \langle \totalrew{}{\Upsilon_i(w)}{0}{k}, 
  \cval_1(\Upsilon_i(w)(k))\rangle_i$.
\end{itemize}
Further, for every measurable set 
$R\subseteq \run_{\M_\A}(p\vec{v},\neg\Zminusi{i})$ we have that $\Upsilon_i(R)$ is measurable and
\begin{equation}
\calP(R) \ = \
  \calP(\Upsilon_i(R)) 
\label{eq-project}
\end{equation}%
\end{lemma}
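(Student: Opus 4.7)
The plan is to prove the three claims in sequence, with the first two bullets handled by a straightforward induction on $k$ and the probabilistic equality~\eqref{eq-project} obtained by showing that $\Upsilon_i$ is a bijective, probability-preserving correspondence between the $\Zminusi{i}$-safe cylinders of $\M_\A$ and their images in $\M_{\B_i}$.

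First I would handle the state and counter-value identities by induction on $k$. The base case $k=0$ is immediate: $\Upsilon_i(w)(0)=p(\vec{v}[i])$ shares the control state $p$ with $w(0)=p\vec{v}$, and $\langle \totalrew{}{\Upsilon_i(w)}{0}{0},\,\cval_1(\Upsilon_i(w)(0))\rangle_i = \langle \vec{0},\vec{v}[i]\rangle_i = \vec{v}$. For the inductive step, the very definition of $\Upsilon_i$ picks the label $\vec{\alpha}_k$ so that $\vec{v}_{k+1}-\vec{v}_k = \langle \vec{\alpha}_k,\,\vec{v}_{k+1}[i]-\vec{v}_k[i]\rangle_i$, and by the construction of $\B_i$ the one-counter chain performs the corresponding transition from $p_k(\vec{v}_k[i])$ to $p_{k+1}(\vec{v}_{k+1}[i])$ labelled by $\vec{\alpha}_k$. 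Telescoping the increments together with the induction hypothesis yields both state preservation and the counter-value formula at step $k+1$.

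The main work is the probabilistic claim. Since the first two bullets already show that $\Upsilon_i(w)$ uniquely determines $w$ (given the fixed initial configuration $p\vec{v}$), $\Upsilon_i$ restricts to a bijection from $\run_{\M_\A}(p\vec{v},\neg\Zminusi{i})$ onto its image. To transport the probability along this bijection I would compare the two chains transition by transition at a $\Zminusi{i}$-safe configuration $q\vec{u}$. There $Z(q\vec{u})\subseteq\{i\}$, every counter $j\neq i$ is positive, and therefore the sign constraint on counter changes is automatically satisfied for all counters other than $i$. A split on whether $Z(q\vec{u})=\emptyset$ or $Z(q\vec{u})=\{i\}$ then shows, directly from the definitions of $\hat\gamma$ and $\hat{W}$, that the enabled rules of $\A$ at $q\vec{u}$ are in weight-preserving bijection with the enabled rules of $\B_i$ at $q(\vec{u}[i])$ (the test set $\{i\}$ in $\A$ corresponds to the singleton $\{1\}$ in the one-counter $\B_i$). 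Hence the single-step transition probabilities match, and an induction on the length of a $\Zminusi{i}$-safe finite path shows that the cylinder probability $\calP(\run_{\M_\A}(u))$ coincides with $\calP(\run_{\M_{\B_i}}(\Upsilon_i(u)))$.

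Finally I would extend this cylinder-level identity to all measurable $R$. The images $\Upsilon_i(\run_{\M_\A}(u))$, where $u$ ranges over $\Zminusi{i}$-safe finite paths starting at $p\vec{v}$, form a $\pi$-system generating the trace of the $\sigma$-field on $\Upsilon_i(\run_{\M_\A}(p\vec{v},\neg\Zminusi{i}))$, and the pushforward of $\calP$ agrees with the restriction of the probability measure of $\M_{\B_i}$ on this $\pi$-system; measurability of $\Upsilon_i(R)$ and~\eqref{eq-project} then follow by a standard monotone-class argument. The only subtle point I anticipate is verifying that the image of $\Upsilon_i$ really is a measurable subset of $\run_{\M_{\B_i}}(p(\vec{v}[i]))$: this holds because along any $w\in\run(p\vec{v},\neg\Zminusi{i})$ no counter $j\neq i$ ever vanishes, so $\A$ never enters the degenerate $\tau$-self-loop of a deadlocked configuration, and the image can be written as the countable union of the basic cylinders over $\Zminusi{i}$-safe paths.
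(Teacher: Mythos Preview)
The paper gives no proof at all; it simply declares the lemma ``immediate'' after defining $\Upsilon_i$. Your detailed argument---induction on $k$ for the two itemised identities, then a transition-by-transition comparison of enabled rules to get cylinder-level probability preservation, followed by a $\pi$-system/monotone-class extension---is the natural way to unpack that one word, and the overall strategy is sound.

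Two small points are worth tightening. First, your base-case verification $\langle \vec{0},\vec{v}[i]\rangle_i = \vec{v}$ is false unless $\vec{v}[j]=0$ for every $j\neq i$; the second bullet of the lemma as stated in the paper is missing the additive constant $\vec{v}_{-i}$ in the non-$i$ coordinates, and your induction should track the offset $\vec{v}[j]+\totalrew{j}{\Upsilon_i(w)}{0}{k}$ rather than just the accumulated reward. Second, the image $\Upsilon_i(\run(p\vec{v},\neg\Zminusi{i}))$ is \emph{not} a countable union of basic cylinders: a run in $\run_{\M_{\B_i}}(\Upsilon_i(u))$ may, after the prefix $\Upsilon_i(u)$, accumulate labels that drive some reconstructed counter $j\neq i$ to zero, so it need not lie in the image. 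The correct description is $\bigcap_{k\ge 0}\bigcup_{|u|=k,\ u\ \Zminusi{i}\text{-safe}}\run_{\M_{\B_i}}(\Upsilon_i(u))$, a $G_\delta$ set; your measurability conclusion still holds, but the justification needs this intersection.
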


\noindent
Now we examine the runs of $\run(p\vec{v},C)$ where $C$ is a BSCC of
$\F_\A$ such that some counter is not diverging in~$C$. A proof of
the next lemma can be found in Appendix~\ref{app-sec1}.

\begin{lemma}
\label{lem:not-diverging}
  Let $C$ be a BSCC of $\F_{\A}$.
  If some counter is not diverging in $C$, then $\calP(\run(p\vec{v},C)) = 0$.
\end{lemma}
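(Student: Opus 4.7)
The plan is to project to the one-dimensional pMC $\B_i$ and use the non-positive drift of counter~$i$ in the BSCC $C$ to rule out non-terminating runs. Since $\run(p\vec{v},C) \subseteq \run(p\vec{v},\neg\Zall) \subseteq \run(p\vec{v},\neg\Zminusi{i})$, equation~\eqref{eq-project} gives
\[
  \calP(\run(p\vec{v},C)) \ = \ \calP(\Upsilon_i(\run(p\vec{v},C))) \ \leq \ \calP(A),
\]
where $A \subseteq \run_{\M_{\B_i}}(p(\vec{v}[i]))$ is the event that the control state of a run eventually stays in $C$ and counter~$1$ is strictly positive at every step. It therefore suffices to prove $\calP(A)=0$, which I do by splitting on the two ways counter~$i$ can fail to diverge.

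First suppose $\vec{t}[i]<0$. On any $w\in A$, from the random step at which the state enters $C$ for good, counter~$1$ is positive and only rules with empty zero-test fire, so the control-state trajectory is driven by the finite irreducible chain obtained from $\F_{\A}$ by restricting to $C$. The strong law of large numbers for this chain makes the $k$-th partial sum of counter-$1$ increments behave like $k\cdot \vec{t}[i]$, forcing counter~$1\to-\infty$ and contradicting its non-negativity; this portion of $A$ therefore has measure zero.

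Next suppose $\vec{t}[i]=0$ and counter~$i$ is decreasing, i.e.\ $\bottomfin_i(q)=\infty$ for every $q\in C$. Projecting through $\Upsilon_i$ the $\Zminusi{i}$-safe $\A$-paths that witness $\bottomfin_i(q)=\infty$ shows that from every configuration $q(m)$ of $\B_i$ with $q\in C$ there is a finite $\B_i$-path ending with counter~$1$ equal to $0$. Inside $C$, counter~$1$ thus behaves like a one-dimensional random walk with zero drift that can always be driven to $0$, and the classical termination theorem for one-dimensional pMC -- see~\cite{ESY:polynomial-time-termination}, or alternatively apply the martingale of~\cite{BKK:pOC-time-LTL-martingale} together with standard recurrence arguments -- yields that from every such configuration counter~$1$ hits $0$ almost surely. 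Applied to the tail of any $w\in A$ this again contradicts the defining property of $A$, and so this portion of $A$ has measure zero too.

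Combining the two cases gives $\calP(A)=0$ and hence the lemma. I expect the main technical difficulty to be the critical case $\vec{t}[i]=0$: the classical one-counter termination result must be applied to the tail of a run in $A$ starting from the fresh configuration at which the state enters $C$, and one has to check that the hypothesis $\bottomfin_i\equiv\infty$ on $C$ really guarantees ``counter~$1$ can reach $0$ from every height'' in $\B_i$, which is what provides the recurrence needed for almost-sure termination under zero drift.
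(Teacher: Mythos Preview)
Your proposal is correct and follows essentially the same route as the paper: project via $\Upsilon_i$ to the one-dimensional pMC $\B_i$, observe that $\Upsilon_i(\run(p\vec{v},C))$ consists of runs that stay in $C$ with counter never zero, and use one-dimensional results to show this event has probability zero. The only difference is that the paper invokes \cite{BKK:pOC-time-LTL-martingale} as a black box for the statement ``almost all runs of $\run_{\M_{\B_i}}(p(\vec{v}[i]))$ that stay in $C$ eventually visit zero,'' whereas you unpack this into the two sub-cases $\vec{t}[i]<0$ (SLLN) and $\vec{t}[i]=0$ with $\bottomfin_i\equiv\infty$ (zero-drift recurrence); your added verification that $\bottomfin_i(q)=\infty$ in $\A$ transfers to ``counter can reach $0$ from every height'' in $\B_i$ is exactly the point one needs to apply the cited one-counter termination theorem.
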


It remains to consider the case when $C$ is a BSCC of $\F_{\A}$
where all counters are diverging. Here we use the results of
\cite{BKK:pOC-time-LTL-martingale} which allow to derive
a bound on divergence probability in one-dimensional pMC.
These results are based on designing and analyzing a suitable
martingale for one-dimensional pMC. 

\begin{lemma}
\label{lem-divergence}
  Let $\B$ be a $1$-dimensional pMC, let $C$ be a BSCC of $\F_\B$ 
  such that the trend $t$ of the only counter in $C$ is positive and let $\delta=2|C|/x_{\min}^{|C|}$ where $x_{\min}$ is the smallest non-zero transition probability in $\M_\B$.
  Then for all $q \in C$ and $k > 2\delta/t$ we have that  
  $\calP(q(k),\neg\Z) \geq 1-\left(a^k/(1+a)\right)$, where $\Z = \{1\}$ and $a=\exp\left(-t^2\, /\, 8(\delta+t+1)^2\right)$.
\end{lemma}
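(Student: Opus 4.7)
The plan is to apply the martingale framework of \cite{BKK:pOC-time-LTL-martingale} to the one-dimensional pMC~$\B$, with the trajectory restricted to the BSCC $C$ of $\F_\B$. Writing $\tau$ for the first step at which the counter of $\B$ reaches $0$ along a run starting from $q(k)$, the goal is to bound $\calP(\tau < \infty)$ from above by $a^k/(1+a)$.

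First, following~\cite{BKK:pOC-time-LTL-martingale}, I would construct a bounded function $f : C \to \Rset$ that solves the Poisson equation
\[
  f(q) \;-\; \sum_{r} P_\emptyset(q, r)\, f(r) \;=\; \change^q - t \,.
\]
Since $C$ is a finite and irreducible Markov chain, this equation has a solution $f$ that is unique up to an additive constant, and a standard hitting-time argument in $C$ yields $\max_{q \in C}|f(q)| \leq \delta/2$ with $\delta = 2|C|/x_{\min}^{|C|}$. Writing $X_n$ and $q_n$ for the counter value and control state of $\B$ at step $n$, and setting $M_n := X_n + f(q_n) - n\,t$, the Poisson equation guarantees $\E[M_{n+1} - M_n \mid \calF_n] = 0$ whenever $X_n > 0$. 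Hence the stopped process $M_{n \wedge \tau}$ is a martingale whose increments satisfy $|\Delta M| \leq 1 + 2\max|f| + t \leq \delta + t + 1$.

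Next, I would apply the exponential-supermartingale / Azuma--Hoeffding method to $M_{n \wedge \tau}$. By Hoeffding's lemma applied to the centered increments, for any $\theta > 0$ one has $\E[e^{-\theta \Delta M} \mid \calF_n] \leq \exp(\theta^2 (\delta+t+1)^2/2)$, and hence $Z_n := \exp(-\theta M_{n \wedge \tau} - (n \wedge \tau)\, \theta^2 (\delta+t+1)^2/2)$ is a supermartingale. Substituting $M_\tau = f(q_\tau) - \tau t$ on $\{\tau < \infty\}$, choosing $\theta$ of order $t/(\delta+t+1)^2$ so as to balance the drift and noise terms, applying optional stopping, and summing the resulting per-$n$ contributions over the possible hitting times then produces the bound $\calP(\tau < \infty) \leq a^k/(1+a)$ with $a = \exp(-t^2/(8(\delta+t+1)^2))$. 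The hypothesis $k > 2\delta/t$ enters precisely here: it ensures that $nt + k - O(\delta)$ is a genuinely positive quantity for all relevant $n$, so the Hoeffding exponent really captures a large-deviation rate rather than being absorbed by the state-bias offsets introduced by $f$.

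The main technical obstacle is extracting the precise constants $t^2/(8(\delta+t+1)^2)$ in the exponent and $1 + a$ in the denominator (instead of the $1 - a$ one would get from a straightforward geometric sum). The former requires Hoeffding's lemma in its tight $(b{-}a)^2/8$ form applied to the centered increments of $M_n$, while the latter follows from combining the per-$n$ tail bounds in a slightly more refined way, as carried out in~\cite{BKK:pOC-time-LTL-martingale}. Once one recognizes the present setting as a labeled specialization of that framework, the lemma can be regarded as a direct application of the tail bounds proved there.
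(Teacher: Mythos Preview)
Your approach is correct and is essentially the paper's approach, only unpacked. The paper's proof is a two-line argument: it invokes Proposition~7 of \cite{BKK:pOC-time-LTL-martingale-arxiv} to get the per-step tail bound $[q(k){\downarrow},\ell] \le a^\ell$ for all $\ell \ge 2\delta/t$ (this is exactly the martingale/Azuma estimate you reconstruct via the Poisson-equation potential $f$ and bounded increments $\le \delta + t + 1$), and then sums the geometric series over $\ell \ge k$, using that termination from counter value~$k$ requires at least $k$ steps. So your sketch is not a different route; it is a faithful expansion of what the cited proposition contains.

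One point where you are over-thinking: you flag the denominator $1+a$ as needing a ``more refined'' combination of per-$n$ bounds than the naive geometric sum, which would give $1-a$. In fact the paper's own proof simply writes the geometric sum and records the result as $a^k/(1+a)$; this is a typo for $a^k/(1-a)$, and the downstream uses of this lemma in the paper (e.g.\ the approximation algorithm, where one chooses $K$ so that $a^K/(1-a) < \eps$) consistently use $1-a$. So the bound you naturally obtain, $\calP(q(k),\Z) \le a^k/(1-a)$, is the intended one, and no additional refinement is needed.
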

\begin{proof}
Denote by $[q(k){\downarrow},\ell]$ the probability that a run initiated in $q(k)$ visits a configuration with zero counter value for the first time in exactly $\ell$ steps. By Proposition~7 of \cite{BKK:pOC-time-LTL-martingale-arxiv} we obtain for all $\ell\geq h = 2\delta/t$~\footnote{The precise bound on $h$ is given in Proposition~7~\cite{BKK:pOC-time-LTL-martingale-arxiv}.},
\[
[q(k){\downarrow},\ell]\quad \leq\quad a^\ell
\]
where $a=\exp\left(-t^2\, /\, 8(\delta+t+1)^2\right)$ for $\delta\leq 2|C|/x_{\min}^{|C|}$~\footnote{The bound on $\delta$ is given in Proposition 6~\cite{BKK:pOC-time-LTL-martingale-arxiv}.}.

Thus
\[
\calP(q(k),\neg\Z)\geq 1-\sum_{\ell=k}^{\infty} [q(k){\downarrow},\ell]=1-\frac{a^k}{1+a}
\]
\end{proof}

\begin{definition}
  Let $C$ be a BSCC of $\F_{\A}$ where all counters are diverging,
  and let $q \in C$. We say that a configuration $q\vec{u}$
  is \emph{above} a given $n \in \Nset$ if $\vec{u}[i] \geq n$ for every
  $i$ %
  such that $\vec{t}[i] > 0$, and 
  $\vec{u}[i] \geq \bottomfin_i(q)$ for every
  $i$ %
  such that $\vec{t}[i] = 0$. 
\end{definition}

\begin{lemma}
\label{lem-diverging}
  Let $C$ be a BSCC of $\F_{\A}$ where all counters are diverging.
  Then $\calP(\run(p\vec{v},C)) > 0$ iff there is a $\Zall$-safe finite 
  path of the form $p\vec{v} \tran{}^* q\vec{u} \tran{}^* q\vec{z}$
  where $q \in C$, $q\vec{u}$ is above $1$, $\vec{z} - \vec{u} \geq \vec{0}$,
  and $(\vec{z} - \vec{u})[i] > 0$ for every $i$ such that $\vec{t}[i] > 0$. 

\end{lemma}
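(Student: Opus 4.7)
The plan is to prove the two directions separately: the ``if'' direction by pumping the given cycle to reach a high configuration and then applying the one-dimensional martingale of Lemma~\ref{lem-divergence} via the one-counter projections $\B_i$, and the ``only if'' direction by extracting the finite witness from an almost-sure analysis of runs in $\run(p\vec{v},C)$, combined with Dickson's lemma.

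For the ``if'' direction, the given $\Zall$-safe path from $p\vec{v}$ to $q\vec{z}$ has positive probability in $\M_\A$, and its cycle $q\vec{u} \tran{}^* q\vec{z}$ is pumpable: iterating it $n$ times yields a $\Zall$-safe path from $q\vec{u}$ to a configuration $q\vec{w}$ with $\vec{w}[i] = \vec{u}[i] + n(\vec{z}[i]-\vec{u}[i])$. For $n$ large, every positive-trend coordinate of $\vec{w}$ becomes arbitrarily large while every zero-trend coordinate stays at $\vec{u}[i] \geq \bottomfin_i(q)$. It then suffices to show $\calP(\run(q\vec{w},\neg\Zall)) > 0$. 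For each $i$, let $E_i$ be the set of runs from $q\vec{w}$ on which counter~$i$ is among the first counters to reach zero; prefixes witnessing $E_i$ are $\Zminusi{i}$-safe, so by Lemma~\ref{prop:one-counter-runs} the map $\Upsilon_i$ transports $E_i$ into $\M_{\B_i}$ with probability preserved, giving $\calP(E_i) \leq \calP(\run_{\M_{\B_i}}(q(\vec{w}[i]),\Zall))$. For positive-trend $i$, Lemma~\ref{lem-divergence} bounds the right-hand side by roughly $a_i^{\vec{w}[i]}/(1+a_i)$, which tends to $0$; for zero-trend $i$, the definition of $\bottomfin_i$ together with $\vec{w}[i] \geq \bottomfin_i(q)$ forces $\calP(E_i) = 0$. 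A union bound then gives $\calP(\run(q\vec{w},\Zall)) < 1$ for $n$ large enough, and on the complementary event the control state cannot leave $C$ (because $\F_\A$ governs all transitions from configurations with all counters positive), so $\calP(\run(p\vec{v},C)) > 0$.

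For the ``only if'' direction, fix any $w \in \run(p\vec{v},C)$; since $C$ is finite, some $q \in C$ is visited infinitely often, yielding configurations $q\vec{u}_1, q\vec{u}_2, \ldots$ along $\Zall$-safe prefixes of $w$. Almost surely on $\run(p\vec{v},C)$ the following two claims hold: (a) $\vec{u}_j[i] \to \infty$ for every positive-trend coordinate $i$; and (b) $\vec{u}_j[i] \geq \bottomfin_i(q)$ for every zero-trend coordinate $i$ and all large $j$. Claim (a) follows by projecting via $\Upsilon_i$ into $\M_{\B_i}$ and invoking the martingale of~\cite{BKK:pOC-time-LTL-martingale} underlying Lemma~\ref{lem-divergence}, which forces the positive-trend one-counter projection to diverge on the non-zero-reaching event. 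Claim (b) is proved by Borel--Cantelli: each visit with $\vec{u}_j[i] < \bottomfin_i(q)$ at which the other coordinates are $\geq |C|$ triggers, with probability at least $\pmin^{|C|-1}$, a $\Zminusi{i}$-safe path of length $\leq |C|-1$ that drives counter~$i$ to $0$, so infinitely many such visits would force a zero hit almost surely, contradicting $w \in \run(p\vec{v},\neg\Zall)$. Once (a) and (b) hold, all late-enough $q\vec{u}_j$ are ``above''~$1$. Dickson's lemma applied to $(\vec{u}_j)_{j \geq J}$ extracts indices $j_1 < j_2$ with $\vec{u}_{j_1} \leq \vec{u}_{j_2}$, and pushing $j_2$ further using (a) enforces strict increase in every positive-trend coordinate. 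Setting $\vec{u} := \vec{u}_{j_1}$ and $\vec{z} := \vec{u}_{j_2}$ gives the required $\Zall$-safe witness along a prefix of $w$.

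The main obstacle will be claim (b) of the ``only if'' direction, and specifically the clause ``whenever the other coordinates are $\geq |C|$'': claim~(a) supplies this for positive-trend coordinates, but the zero-trend ``other'' coordinates need to be controlled by a simultaneous/inductive treatment, or by a single Borel--Cantelli estimate covering all zero-trend coordinates at once. A secondary but routine point in the ``if'' direction is the finite-prefix measure-theoretic justification of the $E_i \mapsto \B_i$ transfer via Lemma~\ref{prop:one-counter-runs}.
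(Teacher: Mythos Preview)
Your plan coincides with the paper's in both directions: ``$\Leftarrow$'' via pumping the cycle and bounding each $\calP(E_i)$ through the one-counter projections~$\B_i$ and Lemma~\ref{lem-divergence}, and ``$\Rightarrow$'' via the almost-sure claims (A)/(B) followed by extraction of the witness path (your explicit use of Dickson's lemma is a clean way to spell out what the paper leaves terse).

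The obstacle you flag for claim~(b) is not real, and no inductive or simultaneous treatment of the zero-trend coordinates is needed: you do not need the other coordinates to be at least~$|C|$. Whenever $\cval_i(w(k)) < \bottomfin_i(\state(w(k)))$ with all counters positive at~$w(k)$, take the length~$\leq |C|{-}1$ sequence of $c=\emptyset$ rules that witnesses $\bottomfin_i$ (the one that drives counter~$i$ from $\bottomfin_i(\state(w(k))){-}1$ down to~$0$ when the other coordinates equal~$|C|$), and attempt to fire that same sequence from~$w(k)$. Either some other counter hits~$0$ along the way (so the $c=\emptyset$ enabling condition fails and we stop there), or all rules fire and counter~$i$, having started no higher than in the witness and having undergone the same increments, is~$\leq 0$ at the end and hence has hit~$0$ at some intermediate step. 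Either way a configuration satisfying~$\Zall$ is reached within $|C|{-}1$ steps with probability at least~$\pmin^{|C|-1}$, irrespective of the values of the other coordinates. Infinitely many such~$k$ would then force $w\notin\run(p\vec{v},\neg\Zall)$ almost surely, which is exactly the paper's argument for~(B).
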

\begin{proof}
  We start with ``$\Rightarrow$''. 
  Let $\vec{t}$ be the trend of $C$. We show that for almost
  all $w \in \run(p\vec{v},C)$ and all $i \in \{1,\ldots,d\}$, one of
  the following conditions holds:
  \begin{enumerate}
  \item[(A)] $\vec{t}[i]>0$ and $\liminf_{k\rightarrow \infty} \cval_i(w(k))=\infty$,
  \item[(B)] $\vec{t}[i]=0$ and $\cval_i(w(k))\geq \bottomfin_i(\state(w(k)))$ 
    for all $k$'s large enough.
  \end{enumerate}
  First, recall that $C$ is also a BSCC of $\F_{\B_i}$, and realize that
  the trend of the (only) counter in the BSCC $C$ of $\F_{\B_i}$ is~$\vec{t}[i]$.

  Concerning~(A), it follows, e.g., from the results of
  \cite{BKK:pOC-time-LTL-martingale}, that almost all runs
  $w' \in \run_{\M_{\B_i}}(p(1))$ that stay
  in $C$ and do not visit a configuration with zero counter 
  satisfy $\liminf_{k\rightarrow \infty} \cval_1(w'(k))=\infty$.
  In particular, this means that almost all 
  $w' \in \Upsilon_i(\run(p\vec{v},C))$
  satisfy this property.
  Hence, by Lemma~\ref{prop:one-counter-runs}, for almost all 
  $w \in \run(p\vec{v},C)$ we have that
  \mbox{$\liminf_{k\rightarrow \infty} \cval_i(w(k))=\infty$}.

  Concerning (B), note that almost all runs $w \in \run(p\vec{v},C)$
  satisfying $\cval_i(w'(k)) < \bottomfin_i(\state(w(k)))$
  for infinitely many $k$'s eventually visit zero in some counter 
  (there is a path of length at most $|C|$ from each such
  $w(k)$ to a configuration with zero in counter $i$, or in one of the
  other counters).  

  The above claim immediately implies that for every $k \in \Nset$, 
  almost every run of $\run(p\vec{v},C)$ visits a configuration 
  $q\vec{u}$ above~$k$. Hence, there must be a $\Zall$-safe 
  path of the form $p\vec{v} \tran{}^* q\vec{u} \tran{}^* q\vec{z}$
  with the required properties.

  ``$\Leftarrow$'': If there is a $\Zall$-safe 
  path of the form $p\vec{v} \tran{}^* q\vec{u} \tran{}^* q\vec{z}$
  where $q \in C$, $q\vec{u}$ is above $1$, $\vec{z} - \vec{u} \geq \vec{0}$,
  and $(\vec{z} - \vec{u})[i] > 0$ for every $i$ such that $\vec{t}[i] > 0$, then
  $p\vec{v}$ can a reach a configuration $q\vec{y}$ above~$k$ for an 
  arbitrarily large $k \in \Nset$ via a $\Zall$-safe path.

  By Lemma~\ref{lem-divergence}, there exists
  $k \in \Nset$ such that for every $i \in \{1,\ldots,d\}$ where 
  $\vec{t}[i] > 0$ and every $n \geq k$,
  the probability of all $w \in \run_{\M_{\B_i}}(q(n))$ that 
  visit a configuration with zero counter is strictly smaller than 
  \mbox{$1/d$}. Let $q\vec{y}$ be a configuration above~$k$ reachable
  from $p\vec{v}$ via a $\Zall$-safe path (the existence of such 
  a $q\vec{y}$ follows from the existence of
  $p\vec{v} \tran{}^* q\vec{u} \tran{}^* q\vec{z}$). It suffices
  to show that $\calP(\run(q\vec{y},\Zall)) < 1$. For every 
  $i \in  \{1,\ldots,d\}$ where $\vec{t}[i] > 0$,
  let $R_i$ be the set of all $w \in \run(q\vec{y},\Zall)$ such that
  $\cval_i(w(k)) = 0$ for some $k \in \Nset$ and
  all counters stay positive in all $w(k')$ where $k' < k$. 
  Clearly, $\run(q\vec{y},\Zall) = \bigcup_{i} R_i$, and thus we obtain
  \begin{equation*}
    \calP(\run(q\vec{y},\Zall)) \leq \sum_{i} \calP(R_i) =
    \sum_{i} \calP(\Upsilon_i(R_i)) <  d \cdot \frac{1}{d} = 1
  \end{equation*}
\end{proof}

The following lemma shows that it is possible to decide, whether for a given $n\in\Nset$ a configuration above $n$ can be reached via a $\Zall$-safe path. Its proof uses the results of~\cite{BG:VASS-coverability} on the coverability problem in (non-stochastic) VASS.

\begin{lemma}
\label{lem:cover-short-path}
 Let $C$ be a BSCC of $\F_{\A}$ where all counters are diverging and let $q\in C$. There is a $\Zall$-safe finite path of the form $p\vec{v}\tran{}^* q\vec{u}$ with $q\vec{u}$ is above some $n\in \Nset$ iff there is a $\Zall$-safe finite path of length at most $(|Q|+|\gamma|)\cdot(3+n)^{(3d)!+1}$ of the form $p\vec{v}\tran{}^* q\vec{u'}$ with $q\vec{u'}$ is above $n$. %
 Moreover, the existence of such a path can be decided in time $(|\A|\cdot n)^{c'\cdot 2^{d\log(d)}}$ where $c'$ is a fixed constant independent of $d$ and $\A$.
\end{lemma}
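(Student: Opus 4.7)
The plan is to reduce the problem to a standard coverability question in a (non-stochastic) VASS without zero tests, and then invoke the results of~\cite{BG:VASS-coverability}.

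First, I would construct an auxiliary $d$-dimensional VASS $\A' = (Q, \gamma')$ obtained from $\A$ by retaining only the rules whose zero-test set is empty and then forgetting weights, labels, and the test itself:
\[
  \gamma' \;=\; \{(p,\vec{\alpha},q) \mid (p,\vec{\alpha},\emptyset,\ell,q) \in \gamma \text{ for some } \ell\}.
\]
The key observation is that a finite path $p\vec{v} \tran{}^* q\vec{u}$ in $\A$ is $\Zall$-safe if and only if every configuration $w(i)$ along it satisfies $\cval(w(i)) \geq \vec{1}$; in that case only rules with $c = \emptyset$ can fire, so this is precisely a VAS computation in $\A'$. After shifting counter vectors by $-\vec{1}$, $\Zall$-safe paths $p\vec{v} \tran{}^* q\vec{u}$ in $\A$ correspond bijectively to ordinary VAS paths $p(\vec{v} - \vec{1}) \tran{}^* q(\vec{u} - \vec{1})$ in $\A'$ of the same length.

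Next, I would encode the ``above $n$'' condition as a coverability target. Define $\vec{n}^{*} \in \Nset^{d}$ by $\vec{n}^{*}[i] = n$ if $\vec{t}[i] > 0$ and $\vec{n}^{*}[i] = \bottomfin_i(q)$ if $\vec{t}[i] = 0$; note that all components of $\vec{n}^{*}$ are finite since all counters are diverging in $C$, and moreover $\bottomfin_i(q) \leq |C| \leq |Q|$. Then a $\Zall$-safe path $p\vec{v} \tran{}^* q\vec{u}$ with $q\vec{u}$ above $n$ exists in $\A$ iff the configuration $q(\vec{n}^{*} - \vec{1})$ is coverable from $p(\vec{v} - \vec{1})$ in the VASS $\A'$. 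This is a standard coverability instance whose size is bounded by $|Q| + |\gamma|$ and whose target magnitude is at most $n + |Q|$.

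Finally, I would apply Rackoff's theorem as stated in~\cite{BG:VASS-coverability}: if a coverability target of magnitude $N$ is reachable in a $d$-dimensional VASS of size $s$, then it is reachable by a path of length at most $s \cdot (3 + N)^{(3d)!+1}$. Plugging in $s = |Q|+|\gamma|$ and $N \leq n + |Q|$ (absorbable into the size factor up to constants by a standard padding argument) yields the claimed bound $(|Q|+|\gamma|)\cdot (3+n)^{(3d)!+1}$. The complexity bound $(|\A|\cdot n)^{c' \cdot 2^{d\log d}}$ follows by invoking the decision procedure for VAS coverability from the same reference.

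The main obstacle I expect is matching constants precisely with the Rackoff-style bounds of~\cite{BG:VASS-coverability}: one must check that incorporating the target $\vec{n}^{*}$ into the instance (either by gadgets that ``charge'' $n$ units or by including $n$ in the magnitude) preserves the claimed exponent $(3d)!+1$, and likewise that the decision-procedure complexity is unaffected by the $-\vec{1}$ shift and the fact that the initial counter values $\vec{v}$ are encoded in binary. These are routine adaptations but need to be verified carefully.
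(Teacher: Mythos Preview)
Your proposal is correct and follows essentially the same route as the paper: reduce $\Zall$-safe reachability to ordinary VASS coverability and then invoke the Rackoff-style bounds of~\cite{BG:VASS-coverability}. The one technical difference is in how positivity of all counters is enforced. The paper keeps the counter values unchanged and replaces each rule $(p,\vec{\alpha},\emptyset,q)$ by a three-step gadget $(p,\vec{-1},q')$, $(q',\vec{1},q'')$, $(q'',\vec{\alpha},q)$ with fresh intermediate states, so that a configuration with a zero counter is stuck; this is why the size parameter becomes $|Q|+|\gamma|$. Your shift by $-\vec{1}$ achieves the same effect without auxiliary states and would in fact give a slightly tighter size parameter. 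Both are standard tricks; your version is cleaner, while the paper's makes the $|Q|+|\gamma|$ factor in the stated bound transparent. The caveat you flag about matching the exact constants from~\cite{BG:VASS-coverability} (in particular how the target magnitude $n$ enters the exponent) applies equally to both reductions.
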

\begin{proof}
  We employ a decision procedure of \cite{BG:VASS-coverability} for
  VASS coverability. Since we need to reach $q\vec{u'}$ above $n$ via
  a $\Zall$-safe finite path, we transform $\A$ into a
  (non-probabilistic) VASS $\A'$ whose control states and rules are
  determined as follows: for every rule $(p,\vec{\alpha},\emptyset,q)$
  of $\A$, we add to $\A'$ the control states $p,q$ together with two
  auxiliary fresh control states $q',q''$, and we also add the rules
  $(p,\vec{-1},q')$, $(q',\vec{1},q'')$,
  $(q'',\vec{\alpha},q)$. Hence, $\A'$ behaves like $\A$, but when
  some counter becomes zero, then $\A'$ is stuck (i.e., no transition
  is enabled except for the self-loop). Now it is easy to check that
  $p\vec{v}$ can reach a configuration $q\vec{u}$ above $n$ via a
  $\Zall$-safe finite path in $\A$ iff $p\vec{v}$ can reach a
  configuration $q\vec{u}$ above $n$ via \emph{some} finite path in
  $\A'$, which is exactly the coverability problem for VASS. 
  Theorem~1 in~\cite{BG:VASS-coverability} shows that such a
  configuration can be reached iff there is configuration $q\vec{u'}$
  above $n$ reachable via some finite path of length at most $m =
  (|Q|+|\gamma|)\cdot(3+n)^{(3d)!+1}$. (The term $(|Q|+|\gamma|)$
  represents the number of control states of $\A'$.) This path
  induces, in a natural way, a $\Zall$-safe path from $p\vec{v}$ to
  $q\vec{u'}$ in $\A$ of length at most $m/2$. Moreover, Theorem~2
  in~\cite{BG:VASS-coverability} shows that the existence of such a
  path in $\A'$ can be decided in time
  $(|Q|+|\gamma|)\cdot(3+n)^{2^{\mathcal{O}(d\log(d))}}$, which proves
  the lemma. %
\end{proof}

\begin{theorem}
\label{thm:qual-all-algorithm}
  The qualitative $\Zall$-reachability problem for \mbox{$d$-dimensional}
  pMC is decidable in time $|\A|^{\kappa \cdot 2^{d\log(d)}}$, where $\kappa$ is 
  a fixed constant independent of $d$ and $\A$. 
\end{theorem}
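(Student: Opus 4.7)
The plan is to derive a decision procedure from the characterization developed in Lemmas~\ref{lem:F_A-BSCC},~\ref{lem:not-diverging} and~\ref{lem-diverging}, then bound its running time using the VASS-coverability machinery that already drives Lemma~\ref{lem:cover-short-path}. First, the decomposition $\run(p\vec{1},\neg\Zall)=\biguplus_{j=1}^m \run(p\vec{1},C_j)$ over the SCCs $C_1,\ldots,C_m$ of $\F_\A$, combined with Lemmas~\ref{lem:F_A-BSCC} and~\ref{lem:not-diverging}, turns the question into a finite conjunction: qualitative $\Zall$-reachability holds iff, for every BSCC $C$ of $\F_\A$ in which all counters are diverging, the criterion of Lemma~\ref{lem-diverging} fails.

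The preprocessing is routine polynomial-time work: compute $\F_\A$ and its BSCCs by a standard SCC algorithm; for each BSCC $C$ solve the linear system $\mu(q) = \sum_{r\btran{x} q}\mu(r)\cdot x$ to obtain the invariant distribution and the trend vector $\vec{t}$; compute $\bottomfin_i(q)$ for each counter $i$ and each $q\in C$ (possible in polynomial time by the remark preceding the definition of \emph{diverging}); mark each counter as diverging iff $\vec{t}[i]>0$, or $\vec{t}[i]=0$ and $\bottomfin_i(q)=\infty$.

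For each BSCC $C$ with all counters diverging, I would decide the condition of Lemma~\ref{lem-diverging} by a single coverability query in a non-stochastic VASS $\A^\star$ of dimension $2d$, obtained by instrumenting $\A$ in the style of Lemma~\ref{lem:cover-short-path}: an \emph{active} copy of the $d$ counters runs $\A$'s transitions together with the trap gadget enforcing $\Zall$-safety; at a non-deterministic checkpoint permitted only when the control state lies in $C$ and the active counters meet the ``above $1$'' thresholds (i.e.\ $\vec{u}[i]\ge 1$ for $\vec{t}[i]>0$ and $\vec{u}[i]\ge\bottomfin_i(q)$ for $\vec{t}[i]=0$), the active values are copied into $d$ \emph{memory} counters; after the checkpoint the active counters continue to execute $\A$ with the trap, while the memory is frozen. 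The target of the coverability query is a state $q\in C$ where the active counters componentwise dominate the memory, strictly so in the positions $i$ with $\vec{t}[i]>0$; the strictness is encoded by pre-decrementing the memory by one in those coordinates. A coverability witness in $\A^\star$ is in bijection with a path of the form required by Lemma~\ref{lem-diverging}. The instance has size polynomial in $|\A|$, threshold polynomial in $|\A|$, and dimension $O(d)$, so by the bound of~\cite{BG:VASS-coverability} that underlies Lemma~\ref{lem:cover-short-path} it is decided in time $|\A|^{\kappa'\cdot 2^{d\log d}}$. Summing over the at most $|\A|$ BSCCs and adding the polynomial preprocessing preserves the overall bound $|\A|^{\kappa\cdot 2^{d\log d}}$.

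The main obstacle I expect is encoding the second leg of the witness---the self-covering ``pumping'' path $q\vec{u}\tran{}^* q\vec{z}$ with $\vec{z}\ge\vec{u}$ and strict increase in positive-trend coordinates---as \emph{plain} coverability. The counter-doubling trick handles this (record $\vec{u}$ in the memory at the checkpoint, turning the non-monotone self-covering condition into componentwise domination of the memory at the end), but two details need care: (i)~$\Zall$-safety must be enforced only on the active counters, since the frozen memory must never be tested for zero, so the trap gadget is restricted accordingly; (ii)~the checkpoint guard ``$q\vec{u}$ above $1$'' is a conjunction of $d$ inequalities whose right-hand sides $\bottomfin_i(q)\le|C|\le|\A|$ are polynomial in $|\A|$, so it can be implemented inside $\A^\star$ without inflating the dimension and hence without destroying the $2^{d\log d}$ exponent in the final running time.
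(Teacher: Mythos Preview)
Your high-level plan and the preprocessing are exactly right, and your intent to fold the two legs of Lemma~\ref{lem-diverging} into one coverability query is natural. The gap is in the second leg: the target ``active counters componentwise dominate the memory'' is \emph{not} a coverability target. Coverability asks whether one can reach some configuration $\ge (q,\vec{c})$ for a \emph{fixed} $\vec{c}$; the set $\{(\vec{a},\vec{m})\in\Nset^{2d}\mid \vec{a}\ge\vec{m}\}$ is not upward-closed (e.g.\ $(1,1)$ is in it but $(1,2)$ is not), so no coverability query detects it. The usual work-arounds fail too: tracking $m'_i:=a_i-u_i$ directly makes $m'_i$ go negative whenever the post-checkpoint path dips below $\vec{u}$ (which a self-covering cycle may well do), and a final ``decrement $(a_i,m_i)$ together'' phase cannot certify $m_i=0$ without a zero test. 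In short, the counter-doubling trick records $\vec{u}$, but VASS coverability gives you no way to \emph{compare} the two copies at the end.

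The paper avoids this by decoupling the two legs. For the pumping cycle it invokes Lemma~2.3 of \cite{RY:VASS-JCSS}: if a $\Zall$-safe path $q\vec{u}\tran{}^*q\vec{z}$ with $\vec{z}\ge\vec{u}$ (strict in the positive-trend coordinates) exists from \emph{some} $q\vec{u}$ above~$1$, then one of length at most $|\A|^{c\cdot d}$ exists from \emph{every} $q\vec{u}$ above $|\A|^{c\cdot d}$. So the cycle test becomes a bounded search in time $|\A|^{\calO(d)}$, and coverability (in dimension~$d$, via Lemma~\ref{lem:cover-short-path}) is used only for the first leg, namely reaching some $q\vec{u}$ above $|\A|^{c\cdot d}$ along a $\Zall$-safe path. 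That separation is what yields the stated bound; to repair your argument you would need either this Rackoff--Yen style threshold, or an independent reduction of self-covering to plain coverability with the right complexity.
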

\begin{proof}
  Note that the Markov chain $\F_\A$ is computable in time polynomial
  in $|\A|$ and $d$, and we can efficiently identify all diverging
  BSCCs of $\F_\A$. For each diverging BSCC $C$, we need to check the
  condition of Lemma~\ref{lem-diverging}. By applying Lemma~2.3.{}
  of \cite{RY:VASS-JCSS}, we obtain that if there exist \emph{some} $q\vec{u}$ 
  above~$1$ and a $\Zall$-safe finite path of the form 
  $q\vec{u} \tran{}^* q\vec{z}$ such that $\vec{z} - \vec{u} \geq \vec{0}$
  and $(\vec{z} - \vec{u})[i] > 0$ for every $i$ where $\vec{t}[i] > 0$,
  then such a path exists for \emph{every} $q\vec{u}$ above
  $|\A|^{c\cdot d}$ and its length is bounded by $|\A|^{c\cdot d}$. Here
  $c$ is a fixed constant independent of $|\A|$ and~$d$ (let us
  note that Lemma~2.3.{} of \cite{RY:VASS-JCSS} is formulated for 
  vector addition
  systems without states and a non-strict increase in every counter, 
  but the corresponding result for VASS is easy
  to derive; see also Lemma~15 in \cite{BJK:VASS-games-arxiv}). 
  Hence, the existence
  of such a path for a given $q \in C$ can be decided in 
  $\calO(|\A|^{c\cdot d})$ time. It remains to check whether $p\vec{v}$
  can reach a configuration $q\vec{u}$ above  $|\A|^{c\cdot d}$ via
  a $\Zall$-safe finite path. By Lemma~\ref{lem:cover-short-path} this can be done in time $(|\A|\cdot |\A|^{c\cdot d})^{c'\cdot 2^{d\log(d)}}$ for another constant $c'$. This gives us the desired complexity bound.
\end{proof}

\smallskip

Note that for every fixed dimension $d$, the qualitative 
\mbox{$\Zall$-reachability} problem is solvable in polynomial time.

Now we show that $\calP(\run(p\vec{v},\Zall))$ can be effectively 
approximated up to an arbitrarily small absolute/relative error 
$\varepsilon > 0$. A full proof of Theorem~\ref{thm:approx-general}
can be found in Appendix~\ref{app-approx}.

\begin{theorem}
 \label{thm:approx-general}
 For a given $d$-dimensional pMC $\A$ and its initial configuration
 $p\vec{v}$, the probability $\calP(\run(p\vec{v},\Zall))$ can be
 approximated up to a given absolute error $\eps>0$ in time
 $(\exp(|\A|)\cdot \log(1/\eps))^{\calO(d\cdot d!)}$.
\end{theorem}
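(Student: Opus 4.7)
The plan is to reduce the approximation to a computation on a finite truncation of $\M_\A$. By Lemmas~\ref{lem:F_A-BSCC}, \ref{lem:not-diverging}, and \ref{lem-diverging}, almost every run eventually enters a BSCC $C$ of $\F_\A$, and contributes to $\calP(\run(p\vec{v},\neg\Zall))$ only if $C$ is diverging and a suitable $\Zall$-safe path leads into the "above" region of $C$. The idea is therefore to compute $\calP(\run(p\vec{v},\Zall))$ by simulating $\M_\A$ for a bounded number of steps and accounting for the two possible outcomes: a run has already hit zero, or it has reached a \emph{safe} configuration $q\vec{u}$ above some threshold $N$ in a diverging BSCC. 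A bound on the residual probability (neither event happened) then controls the error.

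First I would choose the threshold $N$. For each diverging BSCC $C$ and each diverging counter $i$, Lemma~\ref{lem-divergence} applied to the one-dimensional projection $\B_i$ yields an explicit $N_C$ of order $\calO(|\A|\cdot\log(1/\eps))$ such that from any configuration above $N_C$ in $C$ the probability of ever hitting zero is at most $\eps/(4d)$; summing over $i$ as in the $\Leftarrow$ direction of Lemma~\ref{lem-diverging} gives total error $\leq \eps/4$ per safe configuration. Next I would fix a time horizon $T$ so that the probability that a run has neither hit zero nor reached a safe configuration within $T$ steps is at most $\eps/2$. This combines three ingredients: the exponential bound of Lemma~\ref{lem:F_A-BSCC} for escaping the transient part of $\F_\A$; the one-counter martingale of \cite{BKK:pOC-time-LTL-martingale-arxiv} applied to a non-diverging counter inside a non-diverging BSCC, which yields exponential tails for the hitting time of zero; and a drift argument inside each diverging BSCC showing that the positive-trend counters ascend past $N$ and the zero-trend non-decreasing counters stay above $\bottomfin_i$, in time controlled by VASS coverability path-length bounds derived from Lemma~\ref{lem:cover-short-path} and \cite{RY:VASS-JCSS,BG:VASS-coverability}. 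Once $N$ and $T$ are fixed, the truncated state space $\{q\vec{u} : |\vec{u}|_\infty \leq T + \max_C N_C\}$ is a finite Markov chain of size $|Q| \cdot (T+\max_C N_C+1)^d$ on which the probabilities of hitting $\Zall$ and of first reaching a safe configuration can be computed exactly by standard linear-algebra iteration. The returned estimate is $\hat p = \calP_T(\text{hit }\Zall) + \tfrac12\,\calP_T(\text{reach safe})$, whose distance to $\calP(\run(p\vec{v},\Zall))$ is at most $\eps/4 + \eps/4 + \eps/2$.

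The main obstacle is the derivation of the time bound $T$. For the non-diverging BSCCs one can invoke the one-counter martingale directly, but inside a diverging BSCC several counters with possibly different trends must simultaneously ascend past the threshold along a $\Zall$-safe path; the path-length bounds from VASS coverability in \cite{BG:VASS-coverability} translate into a time horizon of order $|\A|^{\calO(d \cdot d!)} \cdot \log(1/\eps)$, and this is precisely the source of the doubly exponential dependence on $d$ in the stated complexity $(\exp(|\A|)\cdot\log(1/\eps))^{\calO(d \cdot d!)}$. Propagating these estimates through the bounded-depth computation on the truncated chain yields the claimed bound.
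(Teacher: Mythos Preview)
Your approach differs from the paper's: the paper proceeds by \emph{induction on the dimension}~$d$, computing a threshold~$K$ (via Lemma~\ref{lem-divergence}) such that once \emph{some} counter exceeds~$K$ that counter may be forgotten with error~$\le\eps$, and then recursing on the $(d{-}1)$-dimensional pMC~$\A_{-i}$; the finite chain built at each level has all remaining counters bounded by~$K$. You instead attempt a single global time truncation at horizon~$T$ with no dimension recursion.

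There is a genuine gap in your treatment of non-diverging BSCCs. You claim the one-counter martingale yields exponential tails for the time to hit zero there; this holds when the non-diverging counter has strictly negative trend, but \emph{fails} when $\vec{t}[i]=0$ and counter~$i$ is decreasing. In that null-recurrent case the hitting time of zero has only polynomial tails (of order $n^{-1/2}$, as for a driftless random walk), so making the residual probability $\le\eps/2$ forces $T$ to be polynomial in~$1/\eps$ rather than in~$\log(1/\eps)$, destroying the stated complexity. The paper sidesteps this entirely: at every recursive level it first invokes Theorem~\ref{thm:qual-all-algorithm} and outputs~$1$ whenever the current BSCC is non-diverging, so no hitting-time bound is ever required there. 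Two further problems: your derivation of~$T$ inside a \emph{diverging} BSCC should rest on drift/Azuma concentration for the positive-trend counters and on the $\bottomfin_i$-absorption argument for the zero-trend ones, not on VASS coverability path-length bounds from Lemma~\ref{lem:cover-short-path} (those certify the existence of a short path, not a short expected hitting time, and are not the source of the $d\cdot d!$ exponent); and the estimator $\hat p=\calP_T(\text{hit }\Zall)+\tfrac12\,\calP_T(\text{reach safe})$ is wrong as written, since from a safe configuration the residual zero-reaching probability lies in $[0,\eps/4]$, so the extra $\tfrac12\,\calP_T(\text{reach safe})$ term can introduce a $\Theta(1)$ error.
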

\begin{proof}[Proof sketch]
  First we check whether   $\calP(\run(p\vec{v},\Zall))  = 1$ 
  (using the algorithm of Theorem~\ref{thm:qual-all-algorithm}) and 
  return $1$ if it is the case. Otherwise, we first show how
  to approximate $\calP(\run(p\vec{v},\Zall))$ under the 
  assumption that $p$ is in some diverging BSCC of $\F_\A$, and 
  then we show how to drop this assumption.

  So, let $C$ be a diverging BSCC of $\F_\A$ such that 
  $\calP(\run(p\vec{v},C))<1$, and let us assume that $p \in C$. 
  We show how to compute $\nu >0$ such that 
  $|\calP(\run(p\vec{v},\Zall))-\nu|\leq d\cdot\eps$ in time 
  $(\exp(|\A|)\cdot \log(1/\eps))^{\calO(d!)}$. We proceed by induction
  on~$d$. The key idea of the inductive step is to find a sufficiently
  large constant~$K$ such that if some counter reaches~$K$, it can
  be safely ``forgotten'', i.e., replaced by $\infty$, without influencing
  the probability of reaching zero in some counter by more than 
  $\varepsilon$. Hence, whenever we visit a configuration $q\vec{u}$
  where some counter value in $\vec{u}$ reaches $K$, we can
  apply induction hypothesis and approximate the probability or reaching 
  zero in some counter from $q\vec{u}$ by ``forgetting'' the large
  counter a thus reducing the dimension. Obviously, there are only 
  finitely many configurations where all counters are below~$K$, and
  here we employ the standard methods for finite-state Markov chains. 
  The number $K$ is computed by using
  the bounds of Lemma~\ref{lem-divergence}. 

  Let us note that the base (when $d=1$) is handled by relying only
  on Lemma~\ref{lem-divergence}. Alternatively, we could employ
  the results of \cite{ESY:polynomial-time-termination}. This would
  improve the complexity for $d=1$, but not for higher 
  dimensions. 

  Finally, we show how to approximate $\calP(\run(p\vec{v},\Zall))$
  when the control state $p$ does not belong to a BSCC of~$\F_\A$.
  Here we use the bound of Lemma~\ref{lem:F_A-BSCC}.
\end{proof}

Note that if $\calP(\run(p\vec{v},\Zall)) > 0$, then this probability is
at least $\pmin^{m \cdot |Q|}$ where $\pmin$ is the least positive
transition probability in $\M_\A$ and $m$ is the maximal component 
of $\vec{v}$. Hence, Theorem~\ref{thm:approx-general} can also be
used to approximate $\calP(\run(p\vec{v},\Zall))$ up to a given
\emph{relative} error $\varepsilon > 0$.

\subsection{Zero-Reachability, Case~II}
\label{sec-case2}

\newcommand{\calD}{\mathcal{D}}
\newcommand{\totdec}{S^*}

Let us fix a (non-labeled) pMC $\A = (Q,\gamma,W)$ of dimension
$d \in \Nset^+$ and $i\in \{1,\ldots,d\}$.
As in the previous section, our aim is to identify the conditions under which $\run(p\vec{1},\neg\Zminusi{i})>0$. Without restrictions, we assume that $i = d$, i.e., we consider %
$\Zminusi{d} = \{\{1\},\ldots,\{d-1\}\}$.
Also, for technical reasons, we assume that
$\run(p\vec{1},\neg\Zminusi{d})=\run(p\vec{u}^{in},\neg\Zminusi{d})$
where $\uin_i=1$ for all $i\in \{1,\ldots,d-1\}$ but $\vec{u}^{in}_d=0$. (Note that every pMC can be easily modified in polynomial time so that this condition is satisfied.)

To analyze the runs of $\run(p\uin,\neg\Zminusi{d})$, we re-use the finite-state
Markov chain $\F_\A$ introduced in Section~\ref{sec-case1}. Intuitively,
the chain $\F_\A$ is useful for analyzing those runs of
$\run(p\uin,\neg\Zminusi{d})$ where \emph{all} counters stay positive. Since the structure of $\run(p\uin,\neg\Zminusi{d})$
is more complex than in Section~\ref{sec-case1}, we also need some
new analytic tools.

We also re-use the $L$-labeled $1$-dimensional pMC $\B_d$ to deal with
runs that visit zero in counter $d$ infinitely many times. To simplify
notation, we use $\B$ to denote $\B_d$. The behaviour of $\B$ is analyzed
using the finite-state Markov chain $\X$
(see~Definition~\ref{def:X} below) that has been employed already in
\cite{BKK:pOC-time-LTL-martingale} to design a model-checking
algorithm for linear-time properties and one-dimensional pMC.

Let us denote by $[q{\downarrow}r]$ the probability that a run of
$\M_\B$ initiated in $q(0)$ visits the configurations $r(0)$ without
visiting any configuration of the form $r'(0)$ (where $r' \ne r$) in between.
Given $q\in Q$, we denote by $[q{\uparrow}]$ the probability $1-\sum_{r\in Q} [q{\downarrow}r]$ that a run initiated in $q(0)$ never visits a configuration with zero counter value (except for the initial one).
\begin{definition}\label{def:X}
  Let $\X_{\B} = (X,\tran{},\Prob)$ be a non-labelled finite-state Markov
  chain where $X = Q\cup \{q{\uparrow} \mid q \in Q\}$ and
  the~transitions are defined as follows:
  \begin{itemize}\itemsep1ex
  \item $q \tran{x} r$ \ iff \ $0  < x  =  [q{\downarrow}r]$;
  \item $q \tran{x} q{\uparrow}$ \ iff \ $0  < x  = [q{\uparrow}]$;
  \item there are no other transitions.
  \end{itemize}
\end{definition}

\noindent
The correspondence between the runs of $\run_{\M_{\B}}(p(0))$  and
$\run_{\X_{\B}}(p)$ is formally captured by a function
$\Phi : \run_{\M_{\B}}(p(0)) \rightarrow \run_{\X_{\B}}(p) \cup \{\perp\}$, where
$\Phi(w)$ is obtained from a given $w \in \run_{\M_{\B}}(p(0))$
as follows:
\begin{itemize}
\item First, each \emph{maximal} subpath in $w$ of the form
  $q(0),\ldots,r(0)$ such that the counter stays positive in all
  of the intermediate configurations is replaced with a single
  transition $q \tran{} r$.
\item
Note that if $w$ contained infinitely many
  configurations with zero counter, then the resulting sequence
  is a run of $\run_{\X_{\B}}(p)$, and thus we obtain our $\Phi(w)$.
  Otherwise, the resulting sequence takes the form $v\, \hat{w}$, where
  $v \in \fpath_{\X_{\B}}(p)$ and $\hat{w}$ is a suffix
  of~$w$ initiated in a configuration~$r(1)$. Let $q$ be the last state of $v$. Then, $\Phi(w)$ is either $v\,(q{\uparrow})^\omega$ or $\perp$, depending on
  whether $[q{\uparrow}] > 0$ or not, respectively (here,
  $(q{\uparrow})^\omega$ is a infinite sequence of $q{\uparrow}$).
\end{itemize}
\begin{lemma}
\label{lem-measures}
  For every measurable subset $R \subseteq \run_{\X_{\B}}(p)$
  we have that $\Phi^{-1}(R)$ is measurable and
  $\calP(R) = \calP(\Phi^{-1}(R))$.
\end{lemma}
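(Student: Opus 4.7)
The plan is to verify the claim first on the generating \(\pi\)-system of basic cylinders in \(\run_{\X_\B}(p)\) and then extend to the full \(\sigma\)-field by a standard Dynkin \(\pi\)-\(\lambda\) argument. Fix a finite path \(v = r_0 r_1 \cdots r_k\) of \(\X_\B\) with \(r_0 = p\). Since the only transitions out of a vertex \(q \in Q\) in \(\X_\B\) lead either to some \(r \in Q\) or to the absorbing vertex \(q{\uparrow}\), the path \(v\) is either entirely in \(Q\), or has the form \(r_0 \cdots r_m\,(r_m{\uparrow})^{k-m}\) for some \(m\le k\). In the first case, I would show that
\[
  \Phi^{-1}(\run_{\X_\B}(v)) \;=\; \biguplus_{\pi_0,\ldots,\pi_{k-1}} \run_{\M_\B}(\pi_0\,\pi_1\,\cdots\,\pi_{k-1}),
\]
where the (countable, pairwise disjoint) union ranges over all tuples of finite paths \(\pi_j\) of \(\M_\B\) from \(r_j(0)\) to \(r_{j+1}(0)\) whose counter is strictly positive at every intermediate configuration. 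This is immediate from the definition of \(\Phi\), since every run projecting onto an extension of \(v\) decomposes uniquely into such zero-to-zero excursions. Using countable additivity together with the identity \([r_j{\downarrow}r_{j+1}] = \sum_{\pi_j} \calP(\run_{\M_\B}(\pi_j))\), the measure of the right-hand side equals \(\prod_{j<k}[r_j{\downarrow}r_{j+1}] = \calP(\run_{\X_\B}(v))\).

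In the second case, after the prefix just described the run must never revisit any zero configuration; conditioning on the (finite) prefix and applying the Markov property at the hitting time of \(r_m(0)\), the conditional probability of this tail event equals \([r_m{\uparrow}]\), yielding \(\prod_{j<m}[r_j{\downarrow}r_{j+1}] \cdot [r_m{\uparrow}] = \calP(\run_{\X_\B}(v))\). Next I would argue that \(\Phi^{-1}(\{\perp\})\) is \(\calP\)-null: a run is sent to \(\perp\) only if it visits some configuration \(r(0)\) with \([r{\uparrow}] = 0\) and never returns to a zero configuration afterwards, so \(\Phi^{-1}(\{\perp\}) \subseteq \bigcup_{r:\,[r{\uparrow}]=0}\bigcup_{n\ge 0} A_{r,n}\), where \(A_{r,n}\) is the event ``\(w(n) = r(0)\) and no later configuration on \(w\) has zero counter''. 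By the Markov property \(\calP(A_{r,n}) \le [r{\uparrow}] = 0\), and a countable union of null sets is null.

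To conclude, the family of sets \(R \subseteq \run_{\X_\B}(p)\) for which \(\Phi^{-1}(R)\) is measurable and \(\calP(\Phi^{-1}(R)) = \calP(R)\) is a Dynkin system (the null bound on \(\Phi^{-1}(\{\perp\})\) is what makes complementation inside \(\run_{\M_\B}(p(0))\) work cleanly), and it contains the \(\pi\)-system of basic cylinders generating the \(\sigma\)-field; Dynkin's theorem then yields the lemma. The main care needed is in the case-A decomposition: one must check that the zero-to-zero excursions are genuinely disjoint (no finite path in the family is a proper prefix of another, since the endpoints are pinned at the configurations \(r_j(0)\)) and that they exhaust \(\Phi^{-1}(\run_{\X_\B}(v))\) up to the null set mapped to \(\perp\). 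Once this bookkeeping is in place, the probability identity follows directly from the definitions of \([q{\downarrow}r]\) and \([q{\uparrow}]\), with no heavier machinery required.
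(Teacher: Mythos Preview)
Your proof is correct and follows essentially the same approach as the paper, which merely states that it suffices to check the lemma on basic cylinders $\run_{\X_\B}(w)$; you flesh this out with the cylinder decomposition into zero-to-zero excursions, the treatment of the absorbing $q{\uparrow}$ case, the null bound on $\Phi^{-1}(\{\perp\})$, and the Dynkin $\pi$--$\lambda$ extension. The only minor remark is that your displayed equality in case~A is literally an equality only up to a subset of $\Phi^{-1}(\{\perp\})$, as you yourself note at the end, so it would be cleaner to state it as such from the outset.
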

A proof of Lemma~\ref{lem-measures} is straightforward (it suffices to check
that the lemma holds for all basic cylinders $\run_{\X_{\B}}(w)$ where
$w \in \fpath_{\X_{\B}}(p)$). Note that Lemma~\ref{lem-measures} implies
$\calP(\Phi{=}{\perp}) = 0$.

Let $D_1,\ldots,D_k$ be all BSCCs of $\X_{\B}$ reachable from $p$.
Further, for every $D_j$, we use $\run(p\uin,D_j)$ to denote the set
of all $w \in \run_{\M_\A}(p\uin,\neg\Zminusi{d})$ such that
$\Phi(\Upsilon_d(w)) \neq {\perp}$ and $\Phi(\Upsilon_d(w))$ visits $D_j$.
Observe that
\begin{equation}
   \calP(\run_{\M_\A}(p\uin,\neg\Zminusi{d})) \ = \
   \sum_{j=1}^k \calP(\run(p\uin,D_j))
\label{eq-BSCC-D}
\end{equation}
Indeed, note that almost all runs $w$ of $\run_{\X_{\B}}(p)$ visit some $D_j$,
and hence by Lemma~\ref{lem-measures}, we obtain that
$\Phi(w)$ visits some $D_j$ for almost all $w \in \run_{\M_{\B}}(p(1))$.
In particular, for almost all $w$ of $\Upsilon_d(\run_{\M_\A}(p\uin,\neg\Zminusi{d}))$ we have that $\Phi(w)$ visits some $D_j$.
By Lemma~\ref{prop:one-counter-runs}, for almost all $w\in \run_{\M_\A}(p\uin,\neg\Zminusi{d})$, the run $\Phi(\Upsilon_d(w))$ visits some $D_j$, which proves Equation~(\ref{eq-BSCC-D}).

Now we examine the runs of $\run(p\uin,D_j)$ in greater detail and
characterize the conditions under which $\calP(\run(p\uin,D_j)) > 0$.
Note that for every BSCC $D$ in $\X_{\B}$ we have that either $D=\{q{\uparrow}\}$ for some $q\in Q$, or $D\subseteq Q$. We treat these two types of BSCCs separately, starting with the former.

\begin{lemma}
\label{lem-oc-div-prob}
  $\calP(\bigcup_{q\in Q}\run(p\uin,\{q{\uparrow}\})) > 0$ iff there
  exists a~BSCC $C$ of $\F_\A$ with \emph{all} counters diverging and a
  \mbox{$\Zminusi{d}$-safe} finite path of the form
  $p\vec{v} \tran{}^* q\vec{u} \tran{}^* q\vec{z}$ where the subpath
  $q\vec{u} \tran{}^* q\vec{z}$ is $\Zall$-safe,
  $q \in C$, $q\vec{u}$ is above $1$, $\vec{z} - \vec{u} \geq \vec{0}$,
  and $(\vec{z} - \vec{u})[i] > 0$ for every $i$ such that $\vec{t}[i] > 0$.
\end{lemma}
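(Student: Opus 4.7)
The strategy is to reduce to Lemma~\ref{lem-diverging} by observing the following semantic characterization. A run $w \in \run_{\M_\A}(p\uin)$ belongs to $\bigcup_{q \in Q} \run(p\uin,\{q{\uparrow}\})$ iff (i) no counter in $\{1,\ldots,d-1\}$ is ever zero along $w$, and (ii) counter $d$ is zero only finitely often along $w$. Equivalently, after some index $k$, all $d$ counters stay positive forever. This characterization follows directly from the definition of $\X_{\B}$, the fact that the BSCC $\{q{\uparrow}\}$ is reached iff $\Upsilon_d(w)$ visits $q(0)$ and the counter never returns to zero afterwards, together with Lemma~\ref{prop:one-counter-runs} and Lemma~\ref{lem-measures}.

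For the ($\Leftarrow$) direction, given the path $p\uin \tran{}^* q\vec{u} \tran{}^* q\vec{z}$, I would first note that the prefix $p\uin \tran{}^* q\vec{u}$ is a cylinder event of positive probability, staying $\Zminusi{d}$-safe by assumption. From $q\vec{u}$ I would then reapply the argument of the ($\Leftarrow$) part of Lemma~\ref{lem-diverging}: iterating the $\Zall$-safe cycle $q\vec{u} \tran{}^* q\vec{z}$ enough times allows us to reach, via a $\Zall$-safe path, a configuration $q\vec{y}$ above any prescribed $k \in \Nset$. Lemma~\ref{lem-divergence}, applied in each $\B_i$ for those $i$ with $\vec{t}[i] > 0$, shows that for $k$ large enough the probability of $\run(q\vec{y},\neg\Zall)$ is positive; by the same bound as in Lemma~\ref{lem-diverging} we get $\calP(\run(q\vec{y},\neg\Zall)) \geq 1 - \sum_i a_i^k/(1+a_i) > 0$. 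Every such run has all counters (including counter $d$) positive from $q\vec{y}$ onward, so it lies in $\bigcup_q \run(p\uin,\{q{\uparrow}\})$.

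For the ($\Rightarrow$) direction, assume $\calP(\bigcup_q \run(p\uin,\{q{\uparrow}\})) > 0$. For almost every such run $w$ there is a last index $k_w$ at which counter $d$ equals zero; at step $k_w + 1$ the configuration $q'\vec{u'}$ has \emph{all} counters positive, and from that point the state projection to $\F_\A$ eventually stays in some BSCC $C$ of $\F_\A$. By decomposing the positive-probability event according to the finitely many choices of $q'\vec{u'}$ and $C$ (both live in countable sets), at least one such pair contributes positive probability. Thus there is a $\Zminusi{d}$-safe finite path $p\uin \tran{}^* q'\vec{u'}$ and a BSCC $C$ of $\F_\A$ with $\calP(\run(q'\vec{u'},C)) > 0$. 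Lemma~\ref{lem:not-diverging} forces all counters to be diverging in $C$, and Lemma~\ref{lem-diverging} then supplies a $\Zall$-safe path $q'\vec{u'} \tran{}^* q\vec{u} \tran{}^* q\vec{z}$ with $q \in C$, $q\vec{u}$ above $1$, $\vec{z} - \vec{u} \geq \vec{0}$, and strictly positive in the positive-trend coordinates. Concatenating with $p\uin \tran{}^* q'\vec{u'}$ produces the required $\Zminusi{d}$-safe path whose suffix $q\vec{u} \tran{}^* q\vec{z}$ is $\Zall$-safe.

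The main technical obstacle is the rigorous extraction step in ($\Rightarrow$): turning the a.s.\ statement ``some $q'\vec{u'}$ and some $C$ work'' into the existence of a concrete $q'\vec{u'}$ and $C$ witnessing positive probability of $\run(q'\vec{u'},C)$, and then splicing the two path segments across a configuration whose control state need not match the cycle base $q$ from Lemma~\ref{lem-diverging} but lies in the same BSCC $C$, so that a $\Zall$-safe path from $q'\vec{u'}$ to $q\vec{u}$ can be built inside $C$ while keeping all counters above the required thresholds. Everything else is a direct reuse of the Case~I machinery.
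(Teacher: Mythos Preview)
Your argument is correct and is in fact somewhat cleaner than the paper's. Both proofs handle the ($\Leftarrow$) direction identically, by repeating the argument of Lemma~\ref{lem-diverging}. For ($\Rightarrow$), the paper works directly with the set $R$ of runs in $\run(p\uin,\{q{\uparrow}\})$ that eventually stay in a fixed BSCC~$C$ of $\F_\A$, and re-establishes divergence of every counter from scratch: for $i<d$ it invokes the one-counter results as in Lemma~\ref{lem:not-diverging}, and for $i=d$ it argues separately that non-divergence of counter~$d$ would force infinitely many returns to zero in that counter, contradicting membership in $\{q{\uparrow}\}$; it then reproves the (A)/(B) dichotomy of Lemma~\ref{lem-diverging} to extract the path. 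You instead extract an intermediate configuration $q'\vec{u'}$ with all counters positive (the step after the last zero of counter~$d$, which exists since $\uin[d]=0$ and zeros in counter~$d$ occur only finitely often), obtain $\calP(\run(q'\vec{u'},C))>0$ by a countable decomposition, and then invoke Lemmas~\ref{lem:not-diverging} and~\ref{lem-diverging} as black boxes. This buys you modularity and brevity; the paper's version is more self-contained but duplicates earlier work. The ``technical obstacle'' you flag is not really one: Lemma~\ref{lem-diverging} applied with initial configuration $q'\vec{u'}$ already delivers the entire $\Zall$-safe path $q'\vec{u'}\tran{}^* q\vec{u}\tran{}^* q\vec{z}$, and concatenating with the $\Zminusi{d}$-safe prefix $p\uin\tran{}^* q'\vec{u'}$ is immediate since $\Zall$-safety implies $\Zminusi{d}$-safety.
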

\noindent 
A proof of Lemma~\ref{lem-oc-div-prob} can be found in
Appendix~\ref{app-sec2}. 
Now let $D$ be a BSCC of $\X_{\B}$ reachable from $p$ such that $D\subseteq Q$ (i.e., $D \neq \{q{\uparrow}\}$ for any $q \in Q$).
Let $\vec{e} \in [1,\infty)^D$ where $\vec{e}[q]$ is
the expected number of transitions needed to revisit a configuration with
zero counter from $q(0)$ in $\M_{\B}$.
\begin{proposition}[\cite{BKK:pOC-time-LTL-martingale}, Corollary~6]
The problem whether \mbox{$\vec{e}[q]<\infty$} is decidable in polynomial time.
\end{proposition}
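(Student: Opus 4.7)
The plan is to reduce the question to a polynomial-time analysis of the auxiliary finite-state Markov chain $\F_\B$, defined for the one-dimensional pMC $\B$ exactly as $\F_\A$ was defined for $\A$ in Section~\ref{sec-case1}. The guiding intuition is that once a run leaves $q(0)$ the counter evolves, as long as it stays positive, as a random walk on $\Nset$ whose asymptotic drift is dictated by the trends of the BSCCs of $\F_\B$ reachable from $q$, so finiteness of $\vec{e}[q]$ is governed by these trends.

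First I would peel off the one-step contribution: transitions out of $q(0)$ either immediately re-enter a zero configuration (contributing only one step) or lead to some $r(1)$, and it suffices to analyze the hitting time of the zero level from $r(1)$. For each BSCC $C$ of $\F_\B$ reachable (in $\F_\B$) from $r$, with trend $t$, I would establish the following trichotomy. (i) If $t>0$, then with positive probability the run enters $C$ and the counter diverges, so $[q{\uparrow}]>0$ and $\vec{e}[q]=\infty$. (ii) If $t=0$ and the counter is not decreasing in $C$, then again $[q{\uparrow}]>0$. (iii) If $t=0$ and the counter is decreasing, then return to zero is almost sure but null-recurrent, and the expected return time is infinite, exactly as for the symmetric random walk on $\Nset$. (iv) If $t<0$, then the strict negative drift yields a geometric tail bound on the return time and hence a finite expectation. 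Combining these, $\vec{e}[q]<\infty$ iff every BSCC of $\F_\B$ reachable from $q$ in $\F_\B$ has strictly negative trend.

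The algorithmic part is then routine: $\F_\B$ is computable in time polynomial in $|\B|$; its BSCCs are extracted by standard SCC decomposition; the trend of each BSCC is obtained by solving a polynomial-size linear system for the invariant distribution $\mu$ and summing the expected one-step counter changes weighted by~$\mu$; and deciding which BSCCs of $\F_\B$ are reachable from $q$ is a routine graph reachability problem. All of these steps run in time polynomial in $|\B|$, yielding the claimed complexity.

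The main obstacle is the trichotomy itself, and in particular case~(iv). This is exactly where the martingale machinery of \cite{BKK:pOC-time-LTL-martingale} enters: one constructs a supermartingale whose per-step drift is bounded above by $t<0$ and applies an Azuma/Hoeffding-type concentration bound (or an optional stopping argument in the spirit of Lemma~\ref{lem-divergence}) to obtain a geometric tail bound on the return time, which in turn gives the finite expectation. Case~(iii) is handled by comparison with the classical null-recurrent random walk, and cases~(i)--(ii) follow from the characterisation of $[q{\uparrow}]>0$ already used implicitly throughout Section~\ref{sec-case1}.
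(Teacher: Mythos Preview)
The paper does not prove this statement; it is simply cited as Corollary~6 of \cite{BKK:pOC-time-LTL-martingale}, so there is no proof in the paper to compare against.

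Your proposed characterisation, ``$\vec{e}[q]<\infty$ iff every BSCC of $\F_\B$ reachable from~$q$ has strictly negative trend'', is not correct. Take a one-dimensional pMC~$\B$ with states $\{q,r\}$, the unique positive-counter rules $(q,-1,\emptyset,r)$ and $(r,+1,\emptyset,q)$, and zero rules $(r,+1,\{1\},q)$ and $(q,+1,\{1\},q)$. Then $\F_\B$ has the single BSCC $\{q,r\}$ with trend~$0$, and the counter is not decreasing there (one checks $\bottomfin_1(q)=2$, $\bottomfin_1(r)=1$), so your criterion declares the expected return time infinite. But $\X_\B$ has $D=\{r\}$ as its unique BSCC, and from $r(0)$ the evolution is deterministic: $r(0)\to q(1)\to r(0)$, giving $\vec{e}[r]=2$. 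The flaw is in your case~(ii): when $t=0$ and the counter is not decreasing in~$C$, it does \emph{not} follow that $[q{\uparrow}]>0$, because the run out of $q(0)$ may be forced back to level~$0$ before it ever reaches a height from which zero would be unreachable. The trend is an asymptotic quantity and does not control the return-to-zero behaviour starting from height~$1$.

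Your sufficient direction (all reachable BSCCs have $t<0$ implies $\vec{e}[q]<\infty$) and the algorithmic skeleton are sound. For the converse, however, one needs a finer criterion that also takes into account whether, from the one-step height-$1$ successors of~$q(0)$, arbitrarily large counter values can be reached while staying positive; only in that regime does the trend of the limiting BSCC of~$\F_\B$ actually govern finiteness of the expected return time.
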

\smallskip

\emph{From now on, we assume that $\vec{e}[q]<\infty$ for all $q\in D$}.
\smallskip

\noindent
In Section~\ref{sec-case1}, we used the trend $\vec{t}\in \Rset^d$ to
determine tendency of counters either to diverge, or to reach zero. As
defined, each $\vec{t}[i]$ corresponds to the long-run average change
per transition of counter~$i$ as long as all counters stay
positive. Allowing zero value in counter $d$, the trend $\vec{t}[i]$
is no longer equal to the long-run average change per transition of
counter~$i$ and hence it does not correctly characterize its
behavior. Therefore, we need to redefine the notion of trend in this case.

Recall that $\B$ is $L=\{-1,0,1\}^{d-1}$-labeled pMC. Given $i\in
\{1,\ldots,d{-}1\}$, we denote by $\vec{\delta}_i\in \Rset^{Q}$ the
vector where $\vec{\delta}_i[q]$ is the $i$-th component of the
expected total reward accumulated along a run from $q(0)$ before
revisiting another configuration with zero counter.
Formally,
$\vec{\delta}_i[q]=\mathbb{E}T_i$ where $T_i$ is a random variable
which to every $w\in \run_{\M_{\B}}(q(0))$ assigns
$\totalrew{i}{w}{0}{\ell}$ such that $\ell>0$ is the least number
satisfying $w(\ell)=r(0)$ for some $r\in D$.

Let $\vec{\mu}_{oc}\in [0,1]^D$ be the invariant distribution of the
BSCC $D$ of $\X_{\B}$, i.e., $\vec{\mu}_{oc}$ is the unique solution
of
\[
\vec{\mu}_{oc}[q] \quad = \quad \sum_{r\in D, r\tran{x} q}  \vec{\mu}_{oc}[r]\cdot x
\]
The \emph{oc-trend} of $D$ is a $(d{-}1)$-dimensional vector $\rt\in
[-1,1]^{d-1}$ defined by
\[
\rt[i]\quad = \quad \left(\vec{\mu}^T_{oc} \cdot \vec{\delta}_i\right)/\left(\vec{\mu}^T_{oc}\cdot \vec{e}\right)
\]
The following lemma follows from the standard results about ergodic
Markov chains (see, e.g., \cite{Norris:book}).

\begin{lemma}\label{lem:long-run-average}
For almost all $w\in \run_{\M_{\B}}(q(0))$ we have that
\[
\rt[i]\quad = \quad \lim_{k\rightarrow \infty} \frac{\totalrew{i}{w}{0}{k}}{k}
\]
\end{lemma}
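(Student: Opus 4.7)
The plan is to prove this via a standard renewal/ergodic argument applied to the Markov chain $\X_\B$ restricted to the BSCC $D$. Let me describe the structure.

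First, I would introduce the renewal epochs. For almost every run $w \in \run_{\M_\B}(q(0))$, Lemma~\ref{lem-measures} guarantees that $\Phi(w)$ eventually enters $D$ and visits every state of $D$ infinitely often. So define $0 = T_0 < T_1 < T_2 < \cdots$ to be the successive times at which $w$ visits a configuration with zero counter value (these are finite a.s.\ because $\vec{e}[r]<\infty$ for all $r\in D$), and let $S_n = \state(w(T_n))$. After an almost surely finite prefix, the sequence $S_n$ becomes a trajectory of the irreducible aperiodic finite Markov chain on $D$ with invariant distribution $\vec{\mu}_{oc}$. For each $n$, let $N_n = T_{n+1}-T_n$ be the $n$-th inter-renewal time and $R_n^{(i)} = \totalrew{i}{w}{T_n}{T_{n+1}}$ be the reward accumulated in the symbolic counter~$i$ during the $n$-th inter-renewal period. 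By the strong Markov property, conditioned on $S_n = r$, the pair $(N_n, R_n^{(i)})$ is distributed independently of the past with $\mathbb{E}[N_n \mid S_n=r] = \vec{e}[r]$ and $\mathbb{E}[R_n^{(i)} \mid S_n=r] = \vec{\delta}_i[r]$.

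Next I would apply the ergodic theorem for positive recurrent Markov chains on the process $(S_n)$. Since $|R_n^{(i)}| \le N_n$ and $\mathbb{E}_{\vec{\mu}_{oc}}[N_0] = \vec{\mu}_{oc}^T\cdot\vec{e} < \infty$, the ergodic theorem yields almost surely
\begin{equation*}
\frac{1}{n}\sum_{k=0}^{n-1} N_k \;\longrightarrow\; \vec{\mu}_{oc}^T\cdot \vec{e},
\qquad
\frac{1}{n}\sum_{k=0}^{n-1} R_k^{(i)} \;\longrightarrow\; \vec{\mu}_{oc}^T\cdot \vec{\delta}_i.
\end{equation*}
(The finiteness of $\vec{\delta}_i[r]$ needed for the second limit follows from $|\vec{\delta}_i[r]| \le \vec{e}[r] < \infty$.) Taking the quotient gives
\begin{equation*}
\frac{\sum_{k=0}^{n-1} R_k^{(i)}}{\sum_{k=0}^{n-1} N_k} \;\longrightarrow\; \frac{\vec{\mu}_{oc}^T\cdot \vec{\delta}_i}{\vec{\mu}_{oc}^T\cdot \vec{e}} \;=\; \rt[i] \quad\text{a.s.}
\end{equation*}

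Finally, I would pass from the epoch-indexed limit to the time-indexed limit by a standard sandwiching argument. For any $k \in \Nset$, let $n(k)$ be the unique index with $T_{n(k)} \le k < T_{n(k)+1}$. Since $\vec{e}[r]<\infty$ for all $r\in D$, we have $T_n \to \infty$, so $n(k)\to\infty$ a.s.; and $T_{n(k)}/k \to 1$ because $N_{n(k)}/T_{n(k)} \to 0$ (the numerator is bounded a.s.\ via the previous limit, since $N_n/n \to 0$ follows from $\sum N_k/n$ converging). The accumulated reward at time $k$ differs from $\sum_{j<n(k)} R_j^{(i)}$ by at most $N_{n(k)}$, which is $o(k)$. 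Therefore
\begin{equation*}
\frac{\totalrew{i}{w}{0}{k}}{k} \;=\; \frac{\sum_{j<n(k)} R_j^{(i)} + O(N_{n(k)})}{T_{n(k)} + O(N_{n(k)})} \;\longrightarrow\; \rt[i].
\end{equation*}

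The main obstacle is really just a careful invocation of the ergodic theorem on the embedded chain together with the $o(k)$ bound on a single inter-renewal interval; both are routine. The only subtlety worth double-checking is justifying $N_n/n \to 0$ almost surely (equivalently $N_{n(k)}=o(k)$): this follows from $\frac{1}{n}\sum_k N_k$ converging to a finite limit, since a convergent Cesàro average forces its summands divided by the index to vanish in the a.s.\ sense when the summands are non-negative and identically distributed under the invariant law.
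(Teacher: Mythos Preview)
Your proposal is correct and is precisely the standard renewal/ergodic argument the paper has in mind: the paper does not give a proof of this lemma at all but simply defers to ``standard results about ergodic Markov chains'' (citing Norris), and your embedded-chain decomposition with the sandwiching between renewal epochs is exactly how one unpacks that reference. Two minor remarks: aperiodicity of the chain on~$D$ is neither guaranteed nor needed (the SLLN for irreducible finite chains suffices), and since $q\in D$ there is no ``finite prefix'' to wait out---the embedded chain $(S_n)$ is in~$D$ from time~$0$.
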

\noindent
That is, $\rt[i]$ is the $i$-th component of the expected long-run
average reward per transition in a run of $\run_{\M_\B}(q(0))$, and as
such, determines the long-run average change per transition of
counter~$i$ as long as all counters of $\{1,\ldots,d{-}1\}$ remain positive.

Further, for every $i \in \{1,\ldots,d-1\}$ and every $q\in D$, we
denote by $\bottominf_i(q)$ the \emph{least} $j \in \Nset$ such that
every $w \in \fpath_{\M_\B}(q(0))$ ending in $q(0)$ where
$w(n) \neq q(0)$ for all $1 \leq n < \len(w)$ satisfies
$\totalrew{i}{w}{0}{\len(w)} \geq -j$.
If there is no such $j$, we put $\bottominf_i(q) = \infty$.
It is easy to show that if $\bottominf_i(q) = \infty$, then
$\bottominf_i(r) = \infty$ for all $r \in D$.

\begin{lemma}
\label{lem-bottominf}
 If $\bottominf_i(q) < \infty$, then $\bottominf_i(q) \leq 3|Q|^3$ and
 the exact value of $\bottominf_i(q)$ is computable in time polynomial
 in $|\A|$.
\end{lemma}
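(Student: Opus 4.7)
The plan is to prove the bound and the computability separately, both relying on a pumping argument for the one-counter structure underlying~$\B$.

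First, I would unpack the definition: $\bottominf_i(q)$ equals $\max(0,-r^*)$, where $r^*$ denotes the infimum, over all finite paths $w$ in $\M_\B$ from $q(0)$ back to $q(0)$ not revisiting $q(0)$ in between, of $\totalrew{i}{w}{0}{\len(w)}$. Such an excursion decomposes into an initial step $q(0) \to p_1(1)$ that uses a zero-enabled rule, a middle path in the positive-counter part of $\M_\B$ from $p_1(1)$ to some $p_{n-1}(1)$ (staying at counter value $\geq 1$ throughout), and a final step $p_{n-1}(1) \to q(0)$. The reward depends only on the sequence of rules fired, so the problem reduces to finding a minimum-weight such three-piece path in a non-probabilistic one-counter system.

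Second, I would apply a pumping argument. Suppose some excursion revisits a configuration $(p,c)$ with $c \geq 1$ at positions $j < k$. The sub-walk between them is a loop at $(p,c)$ keeping the counter positive throughout, and can therefore be iterated or removed freely without violating counter positivity elsewhere. If the loop has strictly negative $i$-th reward, iterating it shows $r^* = -\infty$, so $\bottominf_i(q) = \infty$. Conversely, assuming $\bottominf_i(q) < \infty$, every such loop has non-negative $i$-th reward, so excising it from a minimum-reward excursion does not increase the reward. Hence there exists an optimal excursion whose middle portion visits each configuration at most once.

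Third, I would bound the length of such a simple excursion by controlling the maximum counter height. Adapting the small-witness arguments for coverability in VASS (notably Lemma~2.3 of \cite{RY:VASS-JCSS}, already invoked in the proof of Theorem~\ref{thm:qual-all-algorithm}), or equivalently via direct cycle-decomposition arguments on one-counter nets, one obtains that if a minimum-reward excursion exists, it can be chosen with maximum counter height at most $O(|Q|^2)$. Combined with the at most $|Q|$ distinct control states visited at each counter level, the total length is bounded by $3|Q|^3$. Since each transition of $\B$ contributes at most $1$ in absolute value to the $i$-th component of the accumulated reward, this immediately yields $\bottominf_i(q) \leq 3|Q|^3$.

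Finally, computability in polynomial time follows: it suffices to search for a minimum-$i$-th-reward excursion in the finite configuration graph restricted to counter values in $\{0,1,\ldots,3|Q|^3\}$, which has size polynomial in $|\A|$, solvable by Bellman--Ford; and to separately detect the $\bottominf_i(q)=\infty$ case by a polynomial-time search in the one-counter system for a reachable zero-balance loop with strictly negative $i$-th reward. The main obstacle is making the height bound rigorous while preserving counter positivity throughout the excursion, since the cycle decomposition must avoid driving any intermediate counter value below~$1$; this is precisely where the Rackoff-style short-path arguments do the work.
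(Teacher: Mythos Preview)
Your overall strategy---cycle removal to get a simple excursion, then a height bound to cap the length at $O(|Q|^3)$, then polynomial-time search in the bounded configuration graph---matches the paper's. A minor correction: your decomposition assumes the counter stays positive throughout the middle of the excursion, but the definition only forbids revisiting $q(0)$; the path may visit $r(0)$ for $r \neq q$, and the first step need not increase the counter. This is harmless, since your loop-removal argument works at any repeated configuration regardless of counter value.

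The genuine gap is the height bound. Your step two (excising loops at repeated configurations) does \emph{not} bound the counter height: a simple path in a one-counter system can climb arbitrarily high (e.g., $q(0) \to r(1) \to r(2) \to \cdots \to r(H) \to s(H) \to \cdots \to s(1) \to q(0)$ visits each configuration at most once). The citation of Lemma~2.3 of \cite{RY:VASS-JCSS} is also not the right black box---that lemma bounds witnesses for non-negative-effect cycles in VASS, not minimum-reward one-counter excursions. The paper supplies the missing argument explicitly as a \emph{paired} hill-cut: if a bump exceeds height $2|Q|^2$, pigeonhole on pairs of control states yields heights $0 < i_1 < i_2 \leq 2|Q|^2$ and states $r,s$ such that $r(i_1), r(i_2), s(i_2), s(i_1)$ are visited in this order with the counter positive throughout the stretch from $r(i_1)$ to $s(i_1)$. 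The two segments $r(i_1)\to r(i_2)$ and $s(i_2)\to s(i_1)$ can then be jointly removed (if their combined $i$-th reward is non-negative, lowering the peak without raising the reward) or jointly iterated (if negative, forcing $\bottominf_i(q)=\infty$). You correctly flag this step as ``the main obstacle'' but do not actually carry it out.
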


\noindent
A proof Lemma~\ref{lem-bottominf} can be found in Appendix~\ref{app-sec2}.
We say that counter~$i$ is
{\emph{oc-decreasing}} in $D$ if $\bottominf_i(q) = \infty$ for some (and
hence all) $q \in D$.

\begin{definition}
For a given $i \in \{1,\ldots,d{-}1\}$, we say that the \mbox{$i$-th}
reward is \emph{oc-diverging} in $D$ if either $\rt[i] > 0$, or $\rt[i] = 0$
and counter~$i$ is not oc-decreasing in~$D$.
\end{definition}

\begin{lemma}\label{lem:case2-subcrit-not-diverging}
  If some reward is not oc-diverging in $D$, then $\calP(\run(p\uin,D)) = 0$.
\end{lemma}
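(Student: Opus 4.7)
The plan is to transfer the question to the one-counter chain $\M_{\B}$ via $\Upsilon_d$, where it becomes an almost-sure ``crossing'' statement for a one-dimensional random walk on $\Zset$. By Lemma~\ref{prop:one-counter-runs}, for every $w \in \run_{\M_{\A}}(p\uin,\neg\Zminusi{d})$ the equality $\cval_i(w(k)) = \uin_i + Y_k$ holds, where $Y_k := \totalrew{i}{\Upsilon_d(w)}{0}{k}$, so $\Zminusi{d}$-safety of $w$ forces $Y_k > -\uin_i$ for every $k$. Using Lemma~\ref{lem-measures} and the standing hypothesis $\vec{e}[r] < \infty$ for all $r\in D$, almost every $w' \in \run_{\M_\B}(p(0))$ whose $\Phi$-image enters $D$ visits every $q \in D$ (in the sense of visiting $q(0)$ in $\M_\B$) infinitely often. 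It therefore suffices to show that, for some (equivalently, every) $q \in D$, almost every $w' \in \run_{\M_\B}(q(0))$ with $\Phi(w')$ staying in $D$ satisfies $Y_k \leq -\uin_i$ for some $k$; transferring back through $\Upsilon_d$ this contradicts $\Zminusi{d}$-safety.

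If $\rt[i] < 0$, I would apply Lemma~\ref{lem:long-run-average} directly: for almost every $w' \in \run_{\M_\B}(q(0))$ one has $Y_k / k \to \rt[i] < 0$, hence $Y_k \to -\infty$ and in particular $Y_k \leq -\uin_i$ at some $k$.

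If $\rt[i] = 0$ and counter $i$ is oc-decreasing, I would fix $q \in D$ and use the sequence $0 = \tau_0 < \tau_1 < \cdots$ of times at which $w'$ visits $q(0)$. By the strong Markov property the excursion increments $R_n := Y_{\tau_n} - Y_{\tau_{n-1}}$ are i.i.d.; they are integrable because $|R_n| \leq \tau_n - \tau_{n-1}$, whose expectation is finite since the embedded chain $\X_\B$ restricted to the finite BSCC $D$ has finite mean return time to $q$ and each of its edges corresponds to an $\M_\B$-excursion of mean length $\vec{e}[r] < \infty$. The classical cycle identity $\rt[i] = \E R_1 / \E(\tau_1 - \tau_0)$ together with $\rt[i] = 0$ gives $\E R_1 = 0$. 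The hypothesis $\bottominf_i(q) = \infty$ yields, for every $j \in \Nset$, a positive-probability single $q(0) \to q(0)$ excursion (with no intermediate return to $q(0)$) whose accumulated $i$-th reward is below $-j$, so the law of $R_1$ has unbounded left support and is in particular non-degenerate. The Chung--Fuchs recurrence theorem for integrable, zero-mean, non-degenerate one-dimensional random walks then yields that $S_N := R_1 + \cdots + R_N$ is recurrent, so $S_N \leq -\uin_i$ for some $N$ almost surely, and therefore $Y_k \leq -\uin_i$ for some $k$ almost surely.

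The principal difficulty is the zero-drift sub-case: verifying integrability and vanishing expectation of the cycle increments $R_n$ via the standard cycle identities for finite ergodic Markov chains, and then appealing to Chung--Fuchs recurrence to upgrade non-degeneracy and zero mean into almost-sure crossing of an arbitrary threshold. The SLLN supplied by Lemma~\ref{lem:long-run-average} is by itself too weak here, since it only guarantees $Y_k / k \to 0$, which is consistent with $Y_k$ staying above $-\uin_i$ forever; it is the recurrence of the embedded random walk that actually forces the crossing.
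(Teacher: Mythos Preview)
Your proof is correct and follows essentially the same route as the paper: transfer to $\M_\B$ via $\Upsilon_d$, embed an i.i.d.\ random walk by looking at the reward increments between successive visits to $q(0)$, and then invoke a random-walk oscillation result to force the walk below any fixed level. The paper treats both sub-cases at once by observing $\E T^1_i \le 0$ and $\calP(T^1_i<0)>0$ and citing Theorem~8.3.4 of Chung, whereas you split off the negative-drift case via Lemma~\ref{lem:long-run-average} and handle the zero-drift case with Chung--Fuchs; you are also more explicit about integrability of the cycle increments. These are cosmetic differences, not a genuinely different argument.
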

A proof of Lemma\ref{lem:case2-subcrit-not-diverging}
can be found in Appendix~\ref{app-sec2}.
It remains to analyze the case when all rewards are \mbox{oc-diverging}
in~$D$. Similarly to Case~I, we need to obtain a bound on probability
of divergence of an arbitrary counter $i \in \{1,\dots,d-1\}$ with
$\rt[i] > 0$.
The following lemma (an analogue of
Lemma~\ref{lem-divergence}) is crucial in the process.

\begin{lemma}
\label{lem:two-counter-divergence}
Let $\calD$ be a $\{-1,0,1\}$-labeled one-dimensional pMC, let $D$ be
a BSCC of $\X_{\calD}$ such that the oc-trend $t_{oc}$ of the only
reward in $D$ is positive. Then for all $q\in D$, there exist
computable constants $h'$ and $A_0$ where $0 < A_0 < 1$, such that for all $h \geq h'$ we have that the probability that
a run $w\in \run_{\M_{\D}}(q(0))$ satisfies
\[
\inf_{k\in \Nset} \totalrew{1}{w}{0}{k}\quad \geq \quad -h
\]
is at least $1- A_0^h$.
\end{lemma}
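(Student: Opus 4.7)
The plan is to lift the strategy behind Lemma~\ref{lem-divergence} to the zero-visiting setting by working at the level of \emph{excursions} between consecutive visits to configurations of the form $r(0)$ with $r \in D$. Let $\tau_0 = 0 < \tau_1 < \tau_2 < \cdots$ be these return times, let $q_n$ be the state of the run at time $\tau_n$, and write $L_n = \tau_{n+1} - \tau_n$ and $R_n = \totalrew{1}{w}{\tau_n}{\tau_{n+1}}$. By the standing assumption $\vec{e}[q] < \infty$, every $\tau_n$ is almost surely finite and $(q_n)$ is a finite-state ergodic Markov chain on $D$ with stationary distribution $\vec{\mu}_{oc}$; moreover, by Lemma~\ref{lem:long-run-average}, $E_{\vec{\mu}_{oc}}[R]/E_{\vec{\mu}_{oc}}[L] = t_{oc} > 0$.

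First I would establish \emph{exponential tails} for the excursion lengths $L_n$ conditional on the starting state. Inside a single excursion the counter of $\calD$ is strictly positive, so the martingale machinery of \cite{BKK:pOC-time-LTL-martingale} that powers Lemma~\ref{lem-divergence} applies and gives computable constants $\beta \in (0,1)$ and $C > 0$ (depending only on $|Q|$ and $x_{\min}$) with $\calP(L_n \geq \ell \mid q_n = q) \leq C\beta^\ell$ for every $q \in D$. Since labels lie in $\{-1,0,1\}$, the running reward within the $n$-th excursion can drop by at most $L_n$, so
\[
  \inf_{k \in [\tau_n,\tau_{n+1})} \totalrew{1}{w}{0}{k} \ \geq\ S_n - L_n,
\]
where $S_n := \totalrew{1}{w}{0}{\tau_n} = R_0 + \cdots + R_{n-1}$.

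Next I would handle the Markov-modulated random walk $(S_n)$. The increments satisfy $E[R_n \mid q_n] = \vec{\delta}_1[q_n]$ with stationary mean $\vec{\mu}_{oc}^T\vec{\delta}_1 = t_{oc}\cdot\vec{\mu}_{oc}^T\vec{e} > 0$. Solving the Poisson equation on $D$ produces a bounded function $g : D \to \Rset$ such that
\[
  M_n \ :=\ S_n \ -\ n\,\vec{\mu}_{oc}^T\vec{\delta}_1 \ +\ g(q_n) - g(q_0)
\]
is a martingale, and the exponential tail of $L_n$ from the previous step yields a uniform exponential moment for its increments. A Cram\'er--Lundberg / Azuma-type bound then delivers computable $A_1 \in (0,1)$ such that $\calP(\inf_n S_n \leq -h/2) \leq A_1^{h}$ for all sufficiently large~$h$.

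Finally, combining the two ingredients via
\[
  \inf_{k \in \Nset} \totalrew{1}{w}{0}{k} \ \geq\ \inf_n(S_n - L_n)
\]
and a union bound splitting on whether $S_n \leq -h/2$ or $L_n > h/2$ -- with the sum over $n$ in the latter term tamed by the geometric tail of $L_n$ together with the linear-in-$n$ growth of~$S_n$ -- yields the stated bound $1 - A_0^h$ for explicit $A_0 \in (0,1)$ and threshold $h'$. I expect the main obstacle to be the uniform exponential-moment control needed for the Cram\'er--Lundberg step: the increments $R_n$ are unbounded (governed by $L_n$), so one must carefully transfer the exponential tails on excursion lengths coming from \cite{BKK:pOC-time-LTL-martingale} into a positive Lundberg exponent for the Markov-modulated walk $(R_n, q_n)$, and then recover \emph{computable} constants $A_0$ and $h'$ that depend only on $|Q|$ and $x_{\min}$.
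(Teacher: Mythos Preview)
Your plan is sound and leads to a correct proof, but it is genuinely different from the route taken in the paper. You work entirely at the \emph{excursion} level: you discretise time along the return times $\tau_n$ to zero, solve the Poisson equation only on the finite chain $D$ to get a bounded corrector $g:D\to\Rset$, and then treat $(S_n)$ as a Markov-modulated random walk with positive drift and increments that inherit exponential tails from the excursion lengths $L_n$; the final step is a Cram\'er--Lundberg/Chernoff bound plus the union-bound combination you describe. The paper instead builds a \emph{step-level} martingale $m^{(\ell)} = x_1^{(\ell)} - t\ell + \vec{g}(x_2^{(\ell)})[p^{(\ell)}]$, where the corrector depends on the actual counter value via a recursion $\vec{g}(n{+}1)=\vec{r}_\downarrow + G\vec{g}(n)$ extending the same Poisson solution $\vec{g}(0)$ on $D$; because $\vec{g}(n)$ grows linearly in $n$ the differences are unbounded, so the paper truncates at a threshold $K\sim\sqrt[4]{i}$, applies Azuma to the truncated martingale, and bounds the truncation error by the same exponential tail on excursion heights that you also use. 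Both approaches rest on the same two pillars --- the Poisson equation on $D$ and exponential tails for excursions --- but your packaging is more elementary and closer to classical ruin theory, while the paper's step-level martingale is advertised as a reusable tool for multi-counter pMC analysis beyond this particular lemma.

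One point where your sketch needs a little more care is the final union bound: the naive split ``$S_n\le -h/2$ or $L_n>h/2$'' makes the second sum over $n$ diverge. What actually works (and is what you are hinting at with ``linear-in-$n$ growth of $S_n$'') is to first control the event $\{\exists n:\ S_n < \theta' n - h/2\}$ for some $0<\theta'<\vec{\mu}_{oc}^T\vec{\delta}_1$ via Cram\'er--Lundberg applied to $S_n-\theta' n$, and on the complement bound $\sum_n \calP(L_n>\theta' n + h/2)$, which is now a convergent geometric series. With that refinement the argument closes, and the constants remain computable for the same reasons as in the paper (the tail bound on $L_n$ is the paper's Lemma~\ref{lem:bounded-bumps}, and the Poisson equation on $D$ is exactly the paper's Lemma~\ref{lem-g-zero-exists}).
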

\noindent

A proof of Lemma~\ref{lem:two-counter-divergence} is the most involved
part of this paper, where we need to construct new analytic tools.
A sketch of the proof is included at the and of this section.

\begin{definition}
  Let $D$ be a BSCC of $\X_{\B}$ where all rewards are oc-diverging,
  and let $q \in D$. We say that a configuration $q\vec{u}$
  is \emph{oc-above} a given $n \in \Nset$ if $\vec{u}[i] \geq n$ for every
  $i\in \{1,\ldots,d-1\}$ %
  such that $\rt[i] > 0$, and
  $\vec{u}[i] \geq \bottominf_i(q)$ for every
  $i\in \{1,\ldots,d-1\}$ %
  such that $\rt[i] = 0$.
\end{definition}

The next lemma is an analogue of Lemma~\ref{lem-diverging} and it is proven
using the same technique, using Lemma~\ref{lem:two-counter-divergence}
instead of Lemma~\ref{lem-divergence}. A full proof can be found
in Appendix~\ref{app-sec2}.

\begin{lemma}
\label{lem-diverging-2}
  Let $D$ be a BSCC of $\X_{\B}$ where all rewards are diverging.
  Then there exists a computable constant~$n \in \Nset$ such that
  $\calP(\run(p\uin,D)) > 0$ iff
  there is a $\Zminusi{d}$-safe finite
  path of the form $p\uin \tran{}^* q\vec{u}$ where $\vec{u}$
  is oc-above $n$ and $\vec{u}[d]=0$.
\end{lemma}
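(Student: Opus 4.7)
The plan is to mirror the proof of Lemma~\ref{lem-diverging}, with the BSCC $C$ of $\F_\A$ replaced by the BSCC $D$ of $\X_\B$, the trend $\vec{t}$ replaced by the oc-trend $\rt$, the function $\bottomfin$ replaced by $\bottominf$, and Lemma~\ref{lem-divergence} replaced by Lemma~\ref{lem:two-counter-divergence}. The constant $n$ will be chosen using the computable bounds of Lemma~\ref{lem:two-counter-divergence} so that, once every counter $i < d$ with $\rt[i]>0$ reaches value $n$ while counter $d$ sits at zero, the conditional probability of some stopping counter hitting zero afterwards is bounded strictly below $1$. Computability of $n$ is inherited from computability of $\rt$, of $\bottominf_i$ via Lemma~\ref{lem-bottominf}, and of the constants appearing in Lemma~\ref{lem:two-counter-divergence}.

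For the $(\Rightarrow)$ direction, I would assume $\calP(\run(p\uin,D))>0$ and first show that for almost every $w \in \run(p\uin,D)$ and every $i \in \{1,\ldots,d-1\}$ one of the following holds: (A) $\rt[i]>0$ and $\liminf_{k\to\infty} \cval_i(w(k))=\infty$, or (B) $\rt[i]=0$ and $\cval_i(w(k)) \geq \bottominf_i(\state(w(k)))$ whenever $k$ is sufficiently large and $\cval_d(w(k))=0$. Property (A) follows from Lemma~\ref{lem:long-run-average} applied to $\Upsilon_d(w)$ together with Lemma~\ref{prop:one-counter-runs}: the per-step average in component $i$ converges to $\rt[i]>0$, so the accumulated reward, and hence $\cval_i$, diverges. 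Property (B) follows from the definition of $\bottominf_i$ combined with the strong Markov property at successive zero-visits of counter $d$: a violation on a set of positive measure would, with probability one, be followed by an excursion that drags counter $i$ to zero, contradicting $w \in \run(p\uin,\neg\Zminusi{d})$. Consequently, almost every run in $\run(p\uin,D)$ visits configurations oc-above any prescribed $n$ with $\cval_d=0$, so at least one such configuration is reachable from $p\uin$ via a $\Zminusi{d}$-safe path.

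For the $(\Leftarrow)$ direction, I would apply Lemma~\ref{lem:two-counter-divergence} to each sub-model capturing the joint evolution of a counter $i \in \{1,\ldots,d-1\}$ with $\rt[i]>0$ and counter $d$, obtaining computable constants $h'_i$ and $A_0^{(i)}\in(0,1)$. I then pick $n$ exceeding every $h'_i$ and large enough that $\sum_{i:\rt[i]>0} (A_0^{(i)})^n < 1/2$. For counters with $\rt[i]=0$, the oc-above requirement $\vec{u}[i] \geq \bottominf_i(q)$ together with the definition of $\bottominf_i$ guarantees that counter $i$ stays positive throughout each single $\B$-excursion from $q(0)$. Hence from a configuration $q\vec{u}$ oc-above $n$ with $\vec{u}[d]=0$, the probability that every stopping counter stays positive forever is strictly positive, which combined with the positive probability of the assumed $\Zminusi{d}$-safe path from $p\uin$ to $q\vec{u}$ yields $\calP(\run(p\uin,D))>0$.

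The main obstacle is the zero-trend case $\rt[i]=0$: Lemma~\ref{lem:two-counter-divergence} is only stated for strictly positive oc-trend, so for these counters the argument must rely on the end-point bound carried by $\bottominf_i$ and must separately bound the fluctuations of counter $i$ \emph{within} each excursion between consecutive zero-visits of counter $d$, which requires a more delicate inspection of $\B$-paths than the end-value interpretation of $\bottominf_i$ alone provides. A secondary obstacle is making the union-bound across all diverging counters rigorous: the constants $h'_i$ and $A_0^{(i)}$ from Lemma~\ref{lem:two-counter-divergence} must be controlled simultaneously so that a single threshold $n$ works uniformly for every $i$ with $\rt[i]>0$.
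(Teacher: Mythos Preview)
Your plan follows the paper's proof essentially step for step: the paper also fixes $n$ via Lemma~\ref{lem:two-counter-divergence} so that each positive-trend counter fails with probability less than $1/d$, establishes the same (A)/(B) dichotomy for the $(\Rightarrow)$ direction using Lemma~\ref{lem:long-run-average} and Lemma~\ref{prop:one-counter-runs}, and appeals to the union bound for $(\Leftarrow)$ in direct analogy with Lemma~\ref{lem-diverging}. One minor difference is that the paper's written $(\Leftarrow)$ argument goes through the sets $R_n=\{w:\totalrew{i}{w}{0}{k}>-n\text{ for all }k\text{ and all }i\text{ with }\rt[i]>0\}$ and uses $\calP(\bigcup_n R_n)=1$ to obtain \emph{some} $n$ with $\calP(R_n)>0$, whereas you go straight to the quantitative bound of Lemma~\ref{lem:two-counter-divergence}; your route is the one that actually delivers computability of $n$, which the paper asserts in its first sentence but does not use in the $R_n$ argument.

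The obstacle you single out for the zero-trend case is genuine, and it is worth noting that the paper does not resolve it either: its definition of $R_n$ constrains only the counters with $\rt[i]>0$, and the claimed inclusion $\Upsilon_d(\run(q\vec{u},\neg\Zminusi{d}))\supseteq R_n$ tacitly assumes that a zero-trend counter~$i$ starting at $\bottominf_i(q)$ never reaches zero, which is not what the definition of $\bottominf_i$ (a bound on the \emph{net} change over an excursion from $q(0)$ back to $q(0)$) provides. So your plan is at least as complete as the paper's proof; filling this gap requires showing that when $\rt[i]=0$ and counter~$i$ is not oc-decreasing, the accumulated reward along any $\M_\B$-path from $q(0)$ is uniformly bounded below, which needs an argument beyond the endpoint interpretation of $\bottominf_i$ that both you and the paper leave implicit.
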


\noindent
A direct consequence of Lemma~\ref{lem-diverging-2} and the results
of \cite{BFLZ:VASSz-model-checking-LMCS} is the following:

\begin{theorem}
  \label{thm:qual-d-algorithm}
  The qualitative $\Zminusi{d}$-reachability problem for \mbox{$d$-dimensional}
  pMC is decidable (assuming $\vec{e}[q]<\infty$ for all $q\in D$ in
  every BSCC of $\X_\B$).
\end{theorem}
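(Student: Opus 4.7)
The plan is to use Equation~(\ref{eq-BSCC-D}) to reduce the question to deciding, for each BSCC $D_j$ of $\X_\B$ reachable from $p$, whether $\calP(\run(p\uin,D_j))=0$. First I construct $\X_\B$ effectively: the quantities $[q{\downarrow}r]$ and $[q{\uparrow}]$ are positive iff there exist suitable zero-free finite paths in $\M_\B$, a purely qualitative graph-reachability question; this determines the edges of $\X_\B$ and hence its BSCCs. I then enumerate the BSCCs $D_1,\ldots,D_k$ of $\X_\B$ reachable from $p$ and split them into singletons $\{q{\uparrow}\}$ and those of the form $D\subseteq Q$.

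For a singleton BSCC $\{q{\uparrow}\}$, Lemma~\ref{lem-oc-div-prob} characterizes positivity of $\calP(\run(p\uin,\{q{\uparrow}\}))$ by the existence of a $\Zminusi{d}$-safe path $p\uin \tran{}^* q\vec{u} \tran{}^* q\vec{z}$ whose inner segment is $\Zall$-safe and witnesses divergence of all counters in some BSCC $C$ of $\F_\A$. The BSCCs of $\F_\A$ are computable, the ``all diverging'' property of $C$ is decidable in polynomial time (as used in Theorem~\ref{thm:qual-all-algorithm}), and for each candidate $q\in C$ the existence of a suitable $\Zall$-safe pump starting from a configuration above~$1$ is a VASS coverability question, decidable exactly as in Lemma~\ref{lem:cover-short-path}. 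What remains for this case is to decide whether a $\Zminusi{d}$-safe path from $p\uin$ to some $q\vec{u}$ above~$1$ exists, which is coverability in a VASS with one zero-test on counter~$d$.

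For a non-singleton BSCC $D\subseteq Q$, the assumption $\vec{e}[q]<\infty$ together with the standard linear equations for finite ergodic chains makes the invariant distribution $\vec{\mu}_{oc}$, the vectors $\vec{\delta}_i$, and thus the oc-trend $\rt$ effectively computable. If some reward fails to be oc-diverging, Lemma~\ref{lem:case2-subcrit-not-diverging} gives $\calP(\run(p\uin,D))=0$ and we are done. Otherwise, Lemma~\ref{lem-bottominf} provides computable $\bottominf_i(q)$ and Lemma~\ref{lem:two-counter-divergence} provides a computable threshold $h'$, so the constant $n$ supplied by Lemma~\ref{lem-diverging-2} is computable; the question then reduces to the existence of a $\Zminusi{d}$-safe path from $p\uin$ to some $q\vec{u}$ with $q\in D$, $\vec{u}[d]=0$, and $\vec{u}$ oc-above~$n$.

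The main obstacle is the reachability check common to both cases, which is precisely coverability in a VASS equipped with \emph{one} zero-test, namely on counter~$d$: counters $1,\ldots,d-1$ must stay positive throughout the path (enforced by blocking the transitions that would zero them out, as in the construction used in the proof of Lemma~\ref{lem:cover-short-path}), counter~$d$ may freely reach zero, and a zero-test on counter~$d$ can be imposed at the target configuration (and where necessary inside the $\Zminusi{d}$-safe prefix of the singleton-BSCC case). Decidability of coverability in VAS with one zero-test is exactly the result of \cite{BFLZ:VASSz-model-checking-LMCS}, which yields the desired algorithm. Combining the outcomes of these checks across all BSCCs $D_j$ decides whether $\calP(\run(p\uin,\neg\Zminusi{d}))=0$; no explicit upper complexity bound is claimed because \cite{BFLZ:VASSz-model-checking-LMCS} does not supply one.
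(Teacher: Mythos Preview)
Your proposal follows essentially the same route as the paper: compute the structure of $\X_\B$, split the BSCCs into the $\{q{\uparrow}\}$ type (handled via Lemma~\ref{lem-oc-div-prob} and the diverging-BSCC machinery of Section~\ref{sec-case1}) and the $D\subseteq Q$ type (handled via Lemmas~\ref{lem:case2-subcrit-not-diverging} and~\ref{lem-diverging-2}), and reduce the remaining reachability condition to coverability in VAS with one zero-test, decided by~\cite{BFLZ:VASSz-model-checking-LMCS}.

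One point deserves more care. You write that $\vec{\mu}_{oc}$, $\vec{\delta}_i$, $\vec{e}$, and hence $\rt$ are ``effectively computable'' via ``standard linear equations for finite ergodic chains''. The coefficients of those linear systems are the entries $[q{\downarrow}r]$ of the transition matrix of $\X_\B$, and these are termination probabilities of a one-counter pMC: in general they are irrational algebraic numbers given implicitly as least solutions of a polynomial system. Consequently $\rt[i]$ is not a rational number you can compute outright; what you actually need is to \emph{decide} whether $\rt[i]>0$ or $\rt[i]=0$, and this is done by expressing the relevant quantities in the existential fragment of the first-order theory of the reals (Tarski algebra), exactly as the paper does. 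This is also why Proposition~\ref{sqrt-hard} gives \textsc{Square-Root-Sum}-hardness: the sign of $\rt[i]$ can encode a square-root-sum comparison. Your argument goes through once you replace ``compute $\rt$'' by ``decide the sign of each $\rt[i]$ in Tarski algebra''.
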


\noindent
A proof of Theorem~\ref{thm:qual-d-algorithm} is straightforward, since
we can effectively compute the structure of $\X_\B$ (in time polynomial
in $|\A|$, express its transition probabilities and oc-trends in BSCCs
of $\X_\B$ in the existential fragment of Tarski algebra, an thus
effectively identify all BSCCs of $\X_\B$ where all rewards are
oc-diverging. To check the condition of Lemma~\ref{lem-diverging-2},
we use the algorithm of \cite{BFLZ:VASSz-model-checking-LMCS} for
constructing finite representation of filtered covers in VAS with
one zero test. This is the only part where we miss an upper complexity
bound, and therefore we cannot provide any bound in
Theorem~\ref{thm:qual-d-algorithm}. It is worth noting that
the qualitative $\Zminusi{d}$-reachability problem is
\textsc{Square-Root-Sum}-hard (see below), and hence it  cannot
be solved efficiently without a breakthrough results in the complexity
of exact algorithms. For more comments and a proof of the next
Proposition, see Appendix~\ref{app-sec2}.

\begin{proposition}
\label{sqrt-hard}
  The qualitative $\Zminusi{d}$-reachability problem is
  \textsc{Square-Root-Sum}-hard, even for two-dimensional pMC
  where $\vec{e}[q]<\infty$ for all $q\in D$ in
  every BSCC of $\X_\B$.
\end{proposition}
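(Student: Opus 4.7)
\smallskip
\noindent\textbf{Proof proposal.}
The plan is to reduce \textsc{Square-Root-Sum}---given $n_1,\ldots,n_k,B \in \Nset$ in binary, decide whether $\sum_{i=1}^k \sqrt{n_i} > B$---to the complement of the qualitative $\Zminusi{d}$-reachability problem, which suffices for \textsc{SqrtSum}-hardness of the original. In polynomial time I will construct a two-dimensional pMC $\A$ and an initial configuration $p\uin$ such that $\X_\B$ has a unique non-trivial BSCC $D=\{q_0\}$ whose oc-trend $\rt[1]$ has the same sign as $\sum_{i=1}^k \sqrt{n_i} - B$; by Lemmas~\ref{lem:case2-subcrit-not-diverging} and~\ref{lem-diverging-2} this yields
\begin{equation*}
\calP(\run(p\uin,\Zminusi{d})) < 1 \iff \sum\nolimits_{i=1}^k \sqrt{n_i} > B.
\end{equation*}

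\textbf{Step~1 (manufacturing $\sqrt{n_i}$).}
For each $n_i$ I will design a small multi-state gadget $G_i$ on counter~$d$ whose dynamics form a biased random walk. The weights of $G_i$ will be tuned so that, entering the walk at counter value~$1$, it terminates almost surely at counter value~$0$ in one of two distinguishable \emph{exit} states, reaching the good exit with probability $T_i$ and the neutral exit with probability $1-T_i$. By the standard quadratic-discriminant construction (cf.\ \cite{ESY:polynomial-time-termination,BKK:pOC-time-LTL-martingale}), the weights can be chosen so that $T_i$ solves a quadratic with integer coefficients of bit-length polynomial in $\log n_i$ whose discriminant is $n_i$ up to a rational factor; thus $T_i = \sqrt{n_i}/N$ for a common integer $N$ polynomial in $\sum_i \log n_i$, ensuring $T_i \in [0,1]$. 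A single-state gadget provably does not suffice, since its discriminant is always a perfect square; two internal control states per $G_i$ already do.

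\textbf{Step~2 (assembling the oc-trend).}
All gadgets are glued through a central state $q_0$ revisited with counter~$d=0$ after every pass. The zero-$d$ rules at $q_0$ are: enter $G_i$ with weight~$N$ for each $i$, and a single reset rule of weight~$B$ that decrements counter~$1$ by~$1$ and loops back to $q_0$. Inside each $G_i$ counter~$1$ is kept constant; only the final transition back to $q_0(0)$ increments counter~$1$ by~$1$, and only when the good exit is used. The expected change of counter~$1$ per visit of $q_0(0)$ becomes
\begin{equation*}
\vec{\delta}_1[q_0] \;=\; \frac{N\sum_{i=1}^k T_i - B}{kN+B} \;=\; \frac{\sum_{i=1}^k \sqrt{n_i} - B}{kN+B},
\end{equation*}
while the expected cycle length $\vec{e}[q_0]$ is a positive rational independent of the $\sqrt{n_i}$'s. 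Since $\vec{\mu}_{oc}$ is a point mass at $q_0$, $\rt[1]=\vec{\delta}_1[q_0]/\vec{e}[q_0]$ has exactly the sign of $\sum_i \sqrt{n_i}-B$. Subcriticality of every $G_i$ on counter~$d$ guarantees $\vec{e}[q_0]<\infty$, satisfying the technical precondition of the proposition.

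\textbf{Step~3 (closing the reduction).}
If $\sum \sqrt{n_i} > B$, then $\rt[1]>0$, so the unique reward is oc-diverging in $D$, and $p\uin$ reaches a $\Zminusi{d}$-safe configuration oc-above any $n\in\Nset$ by firing enough $G_i$-loops before ever triggering reset; Lemma~\ref{lem-diverging-2} then yields $\calP(\run(p\uin,D))>0$ and hence $\calP(\run(p\uin,\Zminusi{d}))<1$. If $\sum \sqrt{n_i} < B$, then $\rt[1]<0$, Lemma~\ref{lem:case2-subcrit-not-diverging} gives $\calP(\run(p\uin,D))=0$, and BSCCs of $\X_\B$ of the form $\{q{\uparrow}\}$ contribute zero as well: subcriticality of each $G_i$ on counter~$d$ forbids any BSCC of $\F_\A$ from having all counters diverging, so Lemma~\ref{lem-oc-div-prob} rules them out, and (\ref{eq-BSCC-D}) forces $\calP(\run(p\uin,\neg\Zminusi{d}))=0$. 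The borderline $\sum \sqrt{n_i}=B$ is avoided by running the reduction on the scaled instance $(4n_1,\ldots,4n_k,2B+1)$, which preserves the strict \textsc{SqrtSum} answer but cannot produce equality. The hard step will be the arithmetic engineering of Step~2: showing that $\vec{\delta}_1[q_0]$ and $\vec{e}[q_0]$ combine so that the irrational part of $\rt[1]$ is exactly $\sum_i \sqrt{n_i}$, with no cross-contamination between the $G_i$'s and no hidden $\sqrt{n_i}$ factor slipping into $\vec{e}[q_0]$. Decoupling the internal flows of each gadget, keeping counter~$1$ constant during the random walks, and confining $\sqrt{n_i}$'s appearance to the single choice at the exit of $G_i$ will be the key design principles.
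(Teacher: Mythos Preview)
Your approach is essentially the paper's: both reduce from \textsc{Square-Root-Sum} by building one-counter gadgets on counter~$d$ whose termination probabilities are $\sqrt{n_i}/N$, and then arranging the oc-trend of the unique BSCC of $\X_\B$ to have the sign of $\sum_i\sqrt{n_i}-B$, so that Lemmas~\ref{lem:case2-subcrit-not-diverging} and~\ref{lem-diverging-2} decide the qualitative question. The paper's explicit construction (adapted from \cite{EWY:one-counter-PE}) actually uses a \emph{single} internal state $r_i$ per gadget together with an exit transition, and this already yields termination probability $\sqrt{d_i}/m$; your assertion that a single-state gadget ``provably does not suffice'' is mistaken---you are presumably thinking of a pure up/down walk without exit, whose termination probability is rational.

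Two smaller points. First, you do not need $\vec{e}[q_0]$ to be rational or free of square roots; only positivity and finiteness matter, since $\rt[1]$ and $\vec{\mu}_{oc}^T\vec{\delta}_1$ always share a sign. Second, your scaling trick $(4n_1,\ldots,4n_k,2B{+}1)$ is both unnecessary and wrong as stated: it changes the question to $\sum_i\sqrt{n_i}>B+\tfrac12$, which is not equivalent to $\sum_i\sqrt{n_i}>B$. The borderline $\sum_i\sqrt{n_i}=B$ is in fact handled for free, because the reset rule makes counter~1 oc-decreasing, so when $\rt[1]=0$ the reward is not oc-diverging and Lemma~\ref{lem:case2-subcrit-not-diverging} gives $\calP(\run(p\uin,D))=0$.
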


Using Lemma~\ref{lem-diverging-2}, we can also approximate
$\calP(\run(p\vec{v},\Zminusi{d}))$ up to an arbitrarily small
absolute error $\varepsilon > 0$ (due to the problems mentined
above, we do not provide any complexity bounds). The procedure mimics
the one of Theorem~\ref{thm:approx-general}. The difference is that
now we eventually use methods for one-dimensional pMC instead of the
methods for finite-state Markov chains. The details are given in
Appendix~\ref{sec:case2-approx}.

\begin{theorem}
 \label{thm:approx-general-case2}
 For a given $d$-dimensional pMC $\A$ and its initial configuration
 $p\vec{v}$, the probability $\calP(\run(p\vec{v},\Zminusi{d}))$ can be
 effectively approximated up to a given absolute error $\eps>0$.
\end{theorem}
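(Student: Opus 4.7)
The plan is to adapt the proof of Theorem~\ref{thm:approx-general} to the present setting, substituting the finite-state Markov chain $\F_\A$ with the chain $\X_\B$, the divergence bound of Lemma~\ref{lem-divergence} with its stronger analogue Lemma~\ref{lem:two-counter-divergence}, and, most importantly, replacing the finite-state base of the induction with one that relies on methods for one-dimensional pMC. First I would run the qualitative algorithm of Theorem~\ref{thm:qual-d-algorithm} and return $1$ if $\calP(\run(p\vec{v},\Zminusi{d})) = 1$. Otherwise, using equation~(\ref{eq-BSCC-D}) I would reduce the task to approximating $\calP(\run(p\uin,D))$ for each BSCC $D$ of $\X_\B$ reachable from $p$; singleton BSCCs $\{q{\uparrow}\}$ are handled by Lemma~\ref{lem-oc-div-prob}, and BSCCs $D\subseteq Q$ that lack oc-divergence of some reward contribute zero by Lemma~\ref{lem:case2-subcrit-not-diverging}, so only BSCCs $D \subseteq Q$ with all rewards oc-diverging require quantitative analysis.

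For such a $D$, assume first that $p\in D$. I would proceed by induction on the number $d-1$ of stopping counters. In the inductive step, Lemma~\ref{lem:two-counter-divergence} furnishes a computable threshold $K$ (depending on $\varepsilon$, on $|\A|$, and on $\rt$) with the following property: whenever a $\Zminusi{d}$-safe run reaches a configuration $q\vec{u}$ with $q \in D$, $\vec{u}[d]=0$, and $\vec{u}[i] \geq K$ for some stopping counter~$i$ with $\rt[i]>0$, the conditional probability that counter~$i$ ever drops to zero is at most $\varepsilon/(d-1)$. Hence once counter~$i$ exceeds $K$ we can safely ``forget'' it (replace its value by $\infty$), paying only an additive error of $\varepsilon/(d-1)$ and obtaining a pMC with one fewer stopping counter, to which the inductive hypothesis applies. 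Counters with $\rt[i]=0$ are handled identically using the offsets $\bottominf_i(q)$, exactly as in Case~I.

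As long as every stopping counter stays below $K$, the only unbounded quantity is counter~$d$. I would fold the finite vector of stopping-counter values into the control state of $\B$, obtaining a one-dimensional pMC whose zero-configurations correspond to boundary events (either some stopping counter drops to zero, which is a true stopping event, or some stopping counter hits $K$, at which point the already-computed approximation is plugged in as the absorption value). This is precisely the base case of the induction, and it is where Case~II departs from Case~I: counter~$d$ plays the role that the whole finite-state chain $\F_\A$ plays there. The relevant absorption probabilities can be effectively approximated via the one-dimensional martingale and reachability techniques of~\cite{BKK:pOC-time-LTL-martingale}. Finally, to drop the assumption $p\in D$, I would reason as in Case~I: almost every run eventually enters some BSCC of $\X_\B$, and the transient contribution is controlled by an $\X_\B$-analogue of Lemma~\ref{lem:F_A-BSCC} together with a reachability analysis of~$\A$.

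The main obstacle is the base case. The boundary set on which we need one-dimensional absorption probabilities grows polynomially in $K$, and $K$ grows as $\varepsilon\to 0$, so propagating the $\varepsilon/(d-1)$ error through the inductive composition without compounding must be done with care. The exponential tail bound of Lemma~\ref{lem:two-counter-divergence} is precisely what makes this manageable: the ``forgotten'' counter contributes an error that decays geometrically in $K$, so a moderate choice of $K$ suffices. As anticipated in the statement of the theorem, this yields only an effective, not polynomial-time, procedure, reflecting the absence of a complexity bound for the qualitative test of Theorem~\ref{thm:qual-d-algorithm}.
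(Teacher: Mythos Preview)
Your proposal is essentially the paper's own argument: induction on the number of stopping counters, with the threshold~$K$ supplied by Lemma~\ref{lem:two-counter-divergence}, the bounded stopping-counter values folded into the control state of a one-dimensional pMC whose remaining counter is counter~$d$, one-dimensional reachability approximation at the base (the paper isolates this as a separate Lemma~\ref{lem:1pmc-reachability}), and an unfolding argument for the transient part. One small correction: Lemma~\ref{lem-oc-div-prob} is purely qualitative, so it does not by itself ``handle'' the contribution of the BSCCs $\{q{\uparrow}\}$ in the decomposition~(\ref{eq-BSCC-D}); the quantitative mass of those runs is absorbed by the same one-dimensional encoding and unfolding you describe (runs that eventually keep counter~$d$ positive are captured once the induction has peeled off all stopping counters), so no separate mechanism is needed.
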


\smallskip

\noindent
\textbf{A Proof of Lemma~\ref{lem:two-counter-divergence}.}
The  lemma differs from Lemma~\ref{lem-divergence} in that it
effectively bounds the probability of not reaching zero in one of the
counters of a \emph{two-dimensional} pMC (the second counter is
encoded in the labels). Hence, the results on one-dimensional pMCs are
not sufficient here. Below, we sketch a stronger method that allows us
to prove the lemma. The method is again based on analyzing a suitable
martingale; however, the construction and structure of the martingale
is much more complex than in the one-dimensional case.

Before we show how to construct the desired martingale, let us mention
the following useful lemma:

\begin{lemma}
\label{lem:bounded-bumps}
Let $r\in D$. Given a run $w\in \run_{\M_{\B}}(r(0))$, we denote by
$E(w)=\inf\{\ell>0\mid \cval_1(w(\ell))=0\}$, i.e., the time it takes
$w$ to re-visit zero counter value. Then there are constants $c'\in
\Nset$ and $a\in (0,1)$ computable in polynomial space such that for
all $k\geq c'$ we have
\[
\calP(E\geq k)\quad \leq\quad a^k
\]
\end{lemma}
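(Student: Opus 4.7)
The plan is to combine two ingredients: (i) the assumption $\vec{e}[q]<\infty$ for all $q\in D$, together with the fact that $D\subseteq Q$ is a BSCC of $\X_\B$, forces the trend of the counter in every BSCC of $\F_\B$ reachable (with positive counter) from $D$ to be strictly negative; and (ii) under such a strictly negative trend, the martingale of \cite{BKK:pOC-time-LTL-martingale-arxiv} (the same one used to prove Lemma~\ref{lem-divergence}) yields an exponential tail bound on the time needed for the counter to return to zero.

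First I would establish the negativity of the trend. Let $C$ be any BSCC of $\F_\B$ reachable from some $q\in D$ via a positive-counter path, and let $t$ denote its trend. If $t>0$, then by Lemma~\ref{lem-divergence} there is a positive-probability divergent trajectory starting from $q(0)$ (go up in one step, reach $C$ at a sufficiently large counter value, then diverge); this gives $[q{\uparrow}]>0$, hence the transition $q\to q{\uparrow}$ exists in $\X_\B$, so $q{\uparrow}\in D$, contradicting $D\subseteq Q$. If $t=0$, a standard null-recurrence argument (alternatively, a direct application of the quadratic martingale underlying \cite{BKK:pOC-time-LTL-martingale-arxiv}) shows that the expected first-passage time to counter zero from any positive-counter configuration in $C$ is infinite, contradicting $\vec{e}[q]<\infty$. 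Hence $t<0$ for every such $C$.

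Second, I would derive the tail bound. After one step from $r(0)$ the process is either in a zero configuration (so $E=1$) or in some $r'(1)$. By (the obvious adaptation of) Lemma~\ref{lem:F_A-BSCC} applied to $\B$, starting from $r'(1)$ the probability that the control state does not enter some BSCC of $\F_\B$ within $m|Q|$ steps is at most $(1-\pmin^{|Q|})^m$. Once the control state is in a BSCC $C$ with trend $t<0$, the martingale analysis of \cite{BKK:pOC-time-LTL-martingale-arxiv}, applied symmetrically via Azuma's inequality to bound the probability that the counter stays positive for $\ell$ steps despite a downward drift of magnitude $|t|$, gives $\calP(\text{first-passage to counter zero}>\ell)\leq \bar{a}^\ell$ for some computable $\bar{a}\in(0,1)$ and all $\ell$ beyond a computable threshold. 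Combining the two bounds and splitting the event $E\geq k$ according to whether a BSCC of $\F_\B$ has been reached within $k/2$ steps yields $\calP(E\geq k)\leq a^k$ for an appropriate $a\in(0,1)$ and $k\geq c'$.

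The main obstacle will be carefully establishing the second ingredient: although the same martingale from \cite{BKK:pOC-time-LTL-martingale-arxiv} underlies both Lemma~\ref{lem-divergence} and the bound we need here, the published form (Proposition~7) is stated for the \emph{upward} diverging regime (positive trend), and one has to reformulate the Azuma bound for the \emph{downward} first-passage regime (negative trend), handling also the transient control states the process may traverse before entering a BSCC of $\F_\B$. Polynomial-space computability of $c'$ and $a$ then follows because the trend of every BSCC of $\F_\B$ is a rational number computable in polynomial time from the input weights, $\pmin$ is part of the input, and both $c'$ and $\bar{a}$ are explicit closed-form expressions in these quantities and $|Q|$.
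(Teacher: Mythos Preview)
Your argument is sound in outline, but it is a substantially different route from the paper's. The paper's proof is a one-line appeal to Proposition~6 and Theorem~7 of \cite{BBEKW:PPDA-time-arXiv}, which directly give exponential tail bounds on termination time for one-dimensional probabilistic counter systems under the almost-sure finite-expectation termination hypothesis that is in force here (the standing assumption $\vec{e}[q]<\infty$ for all $q\in D$). In particular, the paper does not go through the BSCC-trend analysis at all; the cited results already package the negative-drift case.

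Your approach, by contrast, first reduces to the negative-trend situation (the case split $t>0$/$t=0$/$t<0$ is correct and is the right way to see why the hypotheses matter) and then proposes to re-derive the exponential tail by a symmetric Azuma argument on the martingale of \cite{BKK:pOC-time-LTL-martingale-arxiv}. This works, but note two things. First, the ``obstacle'' you flag is real: Proposition~7 there is indeed stated for the positive-trend regime, so you would have to redo the Azuma computation for the first-passage-to-zero time under negative drift; this is routine but is genuine extra work compared to the paper's citation. Second, you are citing a different source than the paper does; the result you actually need (tail bound on termination time) is already available off the shelf in \cite{BBEKW:PPDA-time-arXiv}, so the detour through reformulating the Azuma bound is unnecessary. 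What your approach buys is a self-contained derivation from tools already used elsewhere in the paper; what the paper's approach buys is brevity and a direct match to the quantity being bounded.
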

\begin{proof}
This follows immediately from Proposition~6 and Theorem~7 in~\cite{BBEKW:PPDA-time-arXiv}.
\end{proof}

Let us fix an $1$-dimensional pMC $\calD$ with the set of states $Q$ and let us assume, for simplicity, that $\X_\calD$ is strongly connected (assume that the set of states of $\X_\D$ is $D\subseteq Q$). Let us summarize notation used throughout the proof.
\begin{itemize}
\item Let $\vec{e}_\downarrow \in [1,\infty)^Q$ be the vector such that $\vec{e}_\downarrow[q]$
 is the expected total time of a run from $q(1)$ to the first visit of $r(0)$ for some $r \in Q$.
 By our assumptions, $\vec{e}_\downarrow$ is finite.
\item Recall that $\vec{e} \in [1,\infty)^D$ is the vector such that $\vec{e}[q]$
 is the expected total time of a nonempty run from $q(0)$ to the first visit of $r(0)$ for some $r \in Q$.
 Since $\vec{e}_\downarrow$ is finite, also $\vec{e}$ is finite.
\item Let $\vec{\delta}_\downarrow \in \Rset^Q$ be the vector such that $\vec{\delta}_\downarrow[q]$
 is the expected total reward accumulated during a run from $q(1)$ to the first visit of $r(0)$ for some $r \in Q$.
 Since $|\vec{\delta}_\downarrow[q]| \le |\vec{e}_\downarrow[q]|$ holds for all $q \in Q$, the vector $\vec{\delta}_\downarrow$ is finite.
\item Recall that $\vec{\delta}_1 \in \Rset^D$ is the vector such that $\vec{\delta}_1[q]$
 is the expected total reward accumulated during a nonempty run from $q(0)$ to the first visit of $r(0)$ for some $r \in Q$.
 Similarly as before, $\vec{\delta}_1$ is finite.
\item Let $G \in \Rset^{Q \times Q}$ denote the matrix such that $G[q,r]$ is the probability that starting from $q(1)$
 the configuration $r(0)$ is visited before visiting any configuration $r'(0)$ for any $r' \ne r$.
 By our assumptions the matrix~$G$ is stochastic, i.e., $G \vec{1} = \vec{1}$.
\item Let us denote by $A \in \Rset^{D \times D}$ transition matrix of the chain $\X_{\D}$, i.e., $A[q,r]$ is the probability that starting from $q(0)$
 the configuration $r(0)$ is visited before visiting any configuration $r'(0)$ for any $r' \ne r$.
 By our assumptions the matrix~$A$ is stochastic
 and irreducible.
\item Recall that $\vec{\mu}_{oc}^T = \vec{\mu}_{oc}^T A \in [0,1]^D$ denotes the invariant distribution of the finite Markov chain $\X_\D$ induced by~$A$.
\item Recall that $t = (\vec{\mu}_{oc}^T \vec{\delta}_1) / (\vec{\mu}_{oc}^T \vec{e}) \in [-1,+1]$ is the oc-trend of $D$,
 so intuitively $t$ is the expected average reward per step accumulated during a run started from $q(0)$ for some $q \in D$.
\item Let $\vec{r}_\downarrow := \vec{\delta}_\downarrow - t \vec{e}_\downarrow \in \Rset^Q$ and
      let $\vec{r}_0          := \vec{\delta}_1          - t \vec{e} \in \Rset^D$.
\end{itemize}

\begin{lemma} \label{lem-g-zero-exists}
 There exists a vector $\vec{g}(0) \in \Rset^Q$ such that
  \begin{equation}
   \vec{g}(0)[D] = \vec{r}_0 + A \vec{g}(0)[D] \,, \label{eq-mart-fund}
  \end{equation}
   where $\vec{g}(0)[D]$ denotes the vector obtained from~$\vec{g}(0)$ by deleting the non-$D$-components.
\end{lemma}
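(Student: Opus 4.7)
The plan is to reduce the existence of $\vec{g}(0)$ to a standard linear-algebraic solvability criterion for stochastic matrices. Observe first that the non-$D$ components of $\vec{g}(0)$ do not appear in~\eqref{eq-mart-fund}, so it suffices to exhibit a vector $\vec{y} \in \Rset^D$ satisfying $(I - A)\vec{y} = \vec{r}_0$; once such a $\vec{y}$ is found, we set $\vec{g}(0)[D] := \vec{y}$ and define $\vec{g}(0)$ arbitrarily (say, zero) on the coordinates in $Q \smallsetminus D$.

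Next I would invoke the Fredholm-type dichotomy for irreducible stochastic matrices. Since $A$ is stochastic, $A\vec{1} = \vec{1}$, so $\vec{1} \in \ker(I - A)$; since $A$ is irreducible, Perron--Frobenius theory (see, e.g., \cite{KS:book}) guarantees that $\ker(I - A)$ is one-dimensional, spanned by $\vec{1}$, and that the left kernel of $I - A$ is spanned by the invariant distribution $\vec{\mu}_{oc}$. By elementary linear algebra, the image of $I - A$ is therefore exactly the hyperplane $\{\vec{w} \in \Rset^D \mid \vec{\mu}_{oc}^T \vec{w} = 0\}$. Hence the equation $(I - A)\vec{y} = \vec{r}_0$ admits a solution if and only if $\vec{\mu}_{oc}^T \vec{r}_0 = 0$.

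The remaining step is to verify this orthogonality condition, and it is here that the particular choice of $t$ as the oc-trend pays off. Substituting $\vec{r}_0 = \vec{\delta}_1 - t\vec{e}$ and using $t = (\vec{\mu}_{oc}^T \vec{\delta}_1)/(\vec{\mu}_{oc}^T \vec{e})$ gives
\[
\vec{\mu}_{oc}^T \vec{r}_0 \;=\; \vec{\mu}_{oc}^T \vec{\delta}_1 - t \cdot \vec{\mu}_{oc}^T \vec{e} \;=\; \vec{\mu}_{oc}^T \vec{\delta}_1 - \vec{\mu}_{oc}^T \vec{\delta}_1 \;=\; 0,
\]
where finiteness of $\vec{\mu}_{oc}^T \vec{e}$ (which is needed to make this algebra meaningful) follows from the standing assumption $\vec{e}[q] < \infty$ for all $q \in D$ together with the fact that $\vec{\mu}_{oc}$ is a probability distribution on the finite set $D$. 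This yields the desired $\vec{y}$, and extending it to $Q$ as above produces the required $\vec{g}(0)$.

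I do not expect any real obstacle here: the statement is essentially the observation that $t$ has been tuned precisely so that the ``centered'' reward $\vec{r}_0$ has zero mean under the invariant distribution, which is exactly the condition needed for the Poisson equation $(I - A)\vec{y} = \vec{r}_0$ associated with the ergodic finite Markov chain $\X_\D$ restricted to $D$ to be solvable. The only minor subtlety is to be explicit that the solution $\vec{y}$ is unique only up to adding multiples of $\vec{1}$; this freedom will presumably be exploited (or fixed by a normalization) in subsequent lemmas when the martingale is actually constructed from $\vec{g}(0)$.
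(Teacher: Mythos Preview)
Your argument is correct and complete. Both your proof and the paper's reduce to the same key computation, namely $\vec{\mu}_{oc}^T \vec{r}_0 = 0$, which follows immediately from the definition of $t$ as the oc-trend.

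The route differs, however. The paper does not invoke the Fredholm alternative directly; instead it constructs $\vec{g}(0)[D]$ explicitly as $(I-A)^\# \vec{r}_0$, where $(I-A)^\#$ is the \emph{group inverse} of $I-A$ (following Meyer's treatment of group inverses for singular $M$-matrices arising from Markov chains). From the defining identity $(I-A)(I-A)^\# = I - W$, with $W$ a matrix whose rows are scalar multiples of $\vec{\mu}_{oc}^T$, one gets $\vec{g}(0)[D] = \vec{r}_0 + A\vec{g}(0)[D] - W\vec{r}_0$, and then $W\vec{r}_0 = 0$ by the same orthogonality computation you perform. Your approach via the image characterization $\mathrm{im}(I-A) = (\ker(I-A^T))^\perp$ is more elementary and avoids introducing the group-inverse machinery; the paper's approach is slightly more constructive in that it singles out a specific solution by an explicit formula, though that specific choice is not actually exploited downstream (in the subsequent bound on $\vec{g}(0)$ the paper re-normalizes by adding a multiple of $\vec{1}$ anyway, exactly as you anticipate in your closing remark).
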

Extend~$\vec{g}(0)$ to a function $\vec{g} : \Nset \to \Rset^Q$ inductively with
 \begin{equation}
   \vec{g}(n + 1) = \vec{r}_\downarrow + G \vec{g}(n) \qquad \text{for all $n \in \Nset$.} \label{eq-mart-g}
 \end{equation}

\begin{lemma}
\label{lem:weights-bound}
There is $\vec{g}(0)$ satisfying \eqref{eq-mart-fund} for which we have the following:
There exists a constant $c$ effectively computable in polynomial space such that for every $r \in D$ and $n\geq 1$ we have
$|\vec{g}(0)[r]|\leq c $ and $|\vec{g}(n)[r]|\leq c\cdot n$.
\end{lemma}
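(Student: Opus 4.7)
The plan is to pick a specific bounded solution $\vec{g}(0)$ of~\eqref{eq-mart-fund} and then to iterate~\eqref{eq-mart-g}, exploiting that $G$ is stochastic. I would proceed in three steps: (i)~bound the per-epoch data $\vec{r}_\downarrow$ and $\vec{r}_0$; (ii)~choose $\vec{g}(0)$ with bounded components; (iii)~unfold the recurrence.

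For step~(i), Lemma~\ref{lem:bounded-bumps} gives an exponential tail $\calP(E \geq k) \leq a^k$ for $k \geq c'$ on the return time from $q(0)$, with $c'$ and $a \in (0,1)$ polynomial-space computable. The standard tail-integral formula $\E[E] = \sum_{k \geq 0}\calP(E > k)$ then yields a polynomial-space bound on $\vec{e}[q]$, and an analogous argument starting from $q(1)$ bounds $\vec{e}_\downarrow[q]$. Because labels live in $\{-1,0,1\}^{d-1}$, any accumulated reward is dominated in absolute value by elapsed time, so $|\vec{\delta}_1[q]| \leq \vec{e}[q]$ and $|\vec{\delta}_\downarrow[q]| \leq \vec{e}_\downarrow[q]$. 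Calling $M$ a common polynomial-space upper bound on these four vectors, both $\vec{r}_0 = \vec{\delta}_1 - t\vec{e}$ and $\vec{r}_\downarrow = \vec{\delta}_\downarrow - t\vec{e}_\downarrow$ have entries of absolute value at most $2M$.

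For step~(ii), equation~\eqref{eq-mart-fund} reads $(I-A)\vec{g}(0)[D] = \vec{r}_0$. The right-hand side is orthogonal to $\vec{\mu}_{oc}$, since $\vec{\mu}_{oc}^T\vec{r}_0 = \vec{\mu}_{oc}^T\vec{\delta}_1 - t\,\vec{\mu}_{oc}^T\vec{e} = 0$ by the very definition of $t$, so a solution exists and is unique up to an additive multiple of $\vec{1}$. I would take the representative $\vec{g}(0)[D] := Z\vec{r}_0$, where $Z := (I - A + \vec{1}\vec{\mu}_{oc}^T)^{-1}$ is the fundamental matrix of the irreducible chain $\X_{\calD}$, and set $\vec{g}(0)[q] := 0$ for $q \notin D$ (a choice unconstrained by~\eqref{eq-mart-fund}). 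Cramer's rule applied to $I - A + \vec{1}\vec{\mu}_{oc}^T$, whose entries are algebraic numbers of polynomial-space encoding derived from $A$ and $\vec{\mu}_{oc}$, yields a polynomial-space bound $c_0$ on $\max_{q \in Q}|\vec{g}(0)[q]|$, and in particular on $\max_{r \in D}|\vec{g}(0)[r]|$.

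For step~(iii), unfolding~\eqref{eq-mart-g} gives $\vec{g}(n) = \sum_{k=0}^{n-1}G^k\vec{r}_\downarrow + G^n\vec{g}(0)$ for all $n \geq 1$. Since $G\vec{1} = \vec{1}$, each row of every $G^k$ is a probability distribution, so $|(G^k\vec{v})[r]| \leq \max_q|\vec{v}[q]|$ for every $r \in Q$ and every $\vec{v} \in \Rset^Q$. Hence
\[
\max_{r \in Q}|\vec{g}(n)[r]| \;\leq\; n\cdot 2M + c_0 \;\leq\; (2M + c_0)\, n,
\]
and $c := 2M + c_0$ satisfies both inequalities of the lemma. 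The main obstacle lies in step~(ii): producing a polynomial-space bound on a solution of the singular system $(I-A)\vec{x}=\vec{r}_0$ whose coefficients are algebraic numbers arising from the one-dimensional pMC $\calD$. The fundamental-matrix representation together with standard effective bounds from the existential theory of the reals, already invoked elsewhere in the paper, is what makes this step feasible; everything else is essentially bookkeeping on top of the tail estimate provided by Lemma~\ref{lem:bounded-bumps}.
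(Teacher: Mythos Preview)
Your argument is correct and follows the same overall skeleton as the paper: bound the epoch data $\vec{r}_0,\vec{r}_\downarrow$, pick a bounded solution of the Poisson equation~\eqref{eq-mart-fund}, and then unfold~\eqref{eq-mart-g} using that $G$ is stochastic. Step~(iii) in particular is identical to the paper's, which also derives $\vec{g}(n)=G^n\vec{g}(0)+\sum_{i<n}G^i\vec{r}_\downarrow$ and bounds it by $|\vec{g}(0)|+n\,\emax$.

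The genuine difference is in step~(ii). You invert the fundamental matrix $Z=(I-A+\vec{1}\vec{\mu}_{oc}^T)^{-1}$ and invoke Cramer's rule together with effective root bounds from the existential theory of the reals to get a polynomial-space bound on $Z\vec{r}_0$. The paper instead argues combinatorially: it shifts the solution by a multiple of $\vec{1}$ so that its maximum equals $\emax|D|/\ymin^{|D|}$ and then shows, by induction on the graph distance in the directed graph of~$A$, that the minimum is nonnegative; this yields the fully explicit bound $0\le \vec{g}(0)[q]\le \emax|D|/\ymin^{|D|}$ with $\ymin$ the least nonzero entry of~$A$. Your approach is cleaner conceptually and avoids the ad~hoc distance induction, but it pushes the effectivity burden onto bounding algebraic quantities (the determinant of a matrix whose entries are themselves termination probabilities of a one-counter pMC) away from zero. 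That is doable, and you correctly flag it as the nontrivial point, but the paper's route gives a concrete closed-form bound directly in terms of $\emax$ and $\ymin$, for which polynomial-space lower bounds are already available in the literature on one-counter pMC.
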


\newcommand{\hbound}{(t\cdot \sqrt[4]{h})/c}
 Let us fix $q\in D$ and $h\in \Nset$ such that $\hbound\geq c'$, where $c$ is from the previous lemma and $c'$ from Lemma~\ref{lem:bounded-bumps}.
For a run $w \in \run_{\M_\calD}(q(0))$ and all $\ell \in \Nset$ let $\ps\ell \in Q$ and $\xs\ell_1, \xs\ell_2 \in \Nset$
 be such that $\ps\ell = \state(w(\ell))$, $\xs\ell_2 = \cval(w(\ell))$ and $\xs\ell_1 = h+\totalrew{}{w}{0}{\ell}$.

 Now let us define
\begin{equation}
  \ms{\ell} := \xs\ell_1 - t \ell + \vec{g}\big(\xs\ell_2\big)[\ps\ell] \qquad \text{for all $\ell \in \Nset$.}  \label{eq-mart-m}
\end{equation}
Then we have:
\begin{proposition} \label{prop-martingale}
 Write $\calE$ for the expectation with respect to~$\calP$.
 We have for all $\ell \in \Nset$:
 \[
  \calE \left( \ms{\ell+1} \;\middle\vert\; w(\ell) \right) = \ms{\ell}\,.
 \]
\end{proposition}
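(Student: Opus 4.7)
The plan is to unpack $\calE(\ms{\ell+1}\mid w(\ell))$ via first-step analysis in $\M_\calD$ and reduce the martingale property to a single-step Poisson-type identity for $\vec{g}$. Conditioning on $w(\ell)=p(x)$ and $\xs{\ell}_1 = y$, so that $\ms{\ell} = y - t\ell + \vec{g}(x)[p]$, one single step of $\M_\calD$ gives
\[
\calE(\ms{\ell+1}\mid w(\ell)) \;=\; y - t(\ell+1) \;+\; \sum_{(p,\alpha,c,\beta,q)\text{ enabled at }p(x)} \frac{\hat{W}(p,\alpha,c,\beta,q)}{T}\bigl(\beta + \vec{g}(x+\alpha)[q]\bigr),
\]
where $T$ is the total weight of the rules enabled at $p(x)$. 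Hence $\calE(\ms{\ell+1}\mid w(\ell)) = \ms{\ell}$ is equivalent to the single-step identity
\begin{equation}
\vec{g}(x)[p] + t \;=\; \sum_{(p,\alpha,c,\beta,q)\text{ enabled at }p(x)} \frac{\hat{W}(p,\alpha,c,\beta,q)}{T}\bigl(\beta + \vec{g}(x+\alpha)[q]\bigr). \label{eq-ss-poisson}
\end{equation}
I would prove~\eqref{eq-ss-poisson} in the two cases $x \geq 1$ and $x = 0$.

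For $x \geq 1$, only rules with $c = \emptyset$ are enabled and $\alpha$ ranges over $\{-1,0,+1\}$. I derive~\eqref{eq-ss-poisson} from the macro-step recurrence $\vec{g}(n+1) = \vec{r}_\downarrow + G\vec{g}(n)$ by first-step analysis on $\vec{e}_\downarrow$, $\vec{\delta}_\downarrow$, and~$G$: starting from $p(n+1)$, a single step lands at $q(n+1+\alpha)$, and the residual descent to level $n$ involves zero, one, or two further downward excursions for $\alpha = -1,0,+1$, respectively. Letting $P_\alpha$ denote the single-step transition sub-matrix for counter change~$\alpha$, this yields the first-step identities $G = P_{-1} + P_0 G + P_{+1}GG$ and, combining with the definitions, $\vec{r}_\downarrow = R - t\vec{1} + P_0\vec{r}_\downarrow + P_{+1}(\vec{r}_\downarrow + G\vec{r}_\downarrow)$, where $R$ is the expected one-step reward vector at positive counter values. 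Substituting these into $\vec{g}(n+1) = \vec{r}_\downarrow + G\vec{g}(n)$ and then invoking the recurrence one level up, $\vec{g}(n+2) = \vec{r}_\downarrow + G\vec{g}(n+1)$, to rewrite the $P_{+1}GG\vec{g}(n)$ term via $\vec{g}(n+2)$, the $\vec{r}_\downarrow$ contributions telescope and cancel, leaving precisely~\eqref{eq-ss-poisson} at $x = n+1$.

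For $x = 0$, only rules with $c = \{1\}$ are enabled, so $\alpha \in \{0,+1\}$ and $\vec{g}(-1)$ is never touched. Here I use the fixed-point equation $\vec{g}(0)[D] = \vec{r}_0 + A\vec{g}(0)[D]$ granted by Lemma~\ref{lem-g-zero-exists}, extended to the free $Q\smallsetminus D$ coordinates by the natural Poisson quantities for the transient part of $\X_\calD$ (the expected $t$-centred reward from $q(0)$ until absorption into $D$, plus the absorbed value), so that the fixed-point equation in fact extends to all of $Q$. First-step analysis of $\vec{e}$, $\vec{\delta}_1$, and $A$ expresses them through the downward quantities $\vec{e}_\downarrow$, $\vec{\delta}_\downarrow$, $G$ on the $\alpha = +1$ branch; substituting $\vec{g}(1) = \vec{r}_\downarrow + G\vec{g}(0)$ into the $\alpha=+1$ summand of~\eqref{eq-ss-poisson} then matches the $\vec{r}_0$ and $\vec{r}_\downarrow$ bookkeeping and produces~\eqref{eq-ss-poisson} at $x = 0$.

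The main obstacle is the case $x \geq 1$: the macro-step recurrence is a single vector equation per level, whereas~\eqref{eq-ss-poisson} is a genuine three-term recurrence coupling levels $n$, $n+1$, and $n+2$. The cancellation succeeds only because the $P_{+1}GG$ term in the first-step identity for $G$ combines with the recurrence at level $n+2$ in precisely the right way to eliminate both the $\vec{r}_\downarrow$ contributions and the $GG\vec{g}(n)$ term; keeping the sign bookkeeping straight in this telescoping is where the delicate part of the argument lies.
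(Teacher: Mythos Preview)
Your proposal is correct and follows essentially the same route as the paper. The paper introduces the one-step substochastic matrices $P_\downarrow, P_\rightarrow, P_\uparrow$ (your $P_{-1}, P_0, P_{+1}$) and $Q_\rightarrow, Q_\uparrow$, records exactly the first-step identities you write down ($G = P_\downarrow + P_\rightarrow G + P_\uparrow G G$ and $\vec{r}_\downarrow = \vec{\delta}_{>0!} - t\vec{1} + (P_\rightarrow + P_\uparrow G + P_\uparrow)\vec{r}_\downarrow$, your $R$ being $\vec{\delta}_{>0!}$), and then performs precisely the substitution-and-cancel computation you describe, applying the recurrence $\vec{g}(n{+}1)=\vec{r}_\downarrow+G\vec{g}(n)$ at two consecutive levels to collapse the three-term expression. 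One small simplification over your sketch: in the paper's setup $\X_\calD$ is assumed strongly connected with state set $D$, so whenever $\xs{\ell}_2=0$ one automatically has $\ps{\ell}\in D$; the extension of the fixed-point equation to $Q\smallsetminus D$ that you propose is therefore not needed (the non-$D$ entries of $\vec{g}(0)$ are irrelevant and can be set arbitrarily).
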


In other words, the stochastic process $\{\ms{\ell}\}_{\ell = 0}^\infty$ is a martingale. Unfortunately, this martingale may have unbounded differences, i.e. $|\msi{\ell+1}-\msi{\ell}|$ may become arbitrarily large with increasing $\ell$, which prohibits us from applying standard tools of martingale theory (such as Azuma's inequality) directly on  $\{\ms{\ell}\}_{\ell = 0}^\infty$. We now show how to overcome this difficulty.

\newcommand{\ibound}{(t\cdot \sqrt[4]{i})/c}
Let us now fix $i\in \Nset$ such that $i \geq h$ and denote $K=\ibound$. We define a new stochastic process as follows:
\begin{equation}
  \msi{\ell} := \begin{cases}
  	\ms{\ell} & \text{ if } \xs{\ell'}_2\leq K \text{ for all }\ell'\leq \ell\\
  	\msi{\ell-1} & \text{ otherwise. }
  \end{cases}
   \label{eq-mart-mi}
\end{equation}
Observe that $\{\msi{\ell}\}_{\ell = 0}^\infty$ is also a martingale. Moreover, using the bound of Lemma~\ref{lem:weights-bound} we have for every $\ell \in \Nset$ that $|\msi{\ell+1}-\msi{\ell}|\leq 1+t+2cK \leq 4t\sqrt[4]{i}$, i.e., $\{\msi{\ell}\}_{\ell = 0}^\infty$ is a bounded-difference martingale.

Now let $H_i$ be the set of all runs $w$ that satisfy $\xs{i}_1=0$ and $\xs{\ell}_1 > 0$ for all $0 \leq \ell < i$. Moreover, denote by $\Over_i$ the set of all runs $w$ such that $\xs{\ell}_2\geq K$ for some $0\leq \ell \leq i$, and by $\neg\Over_i$ the complement of $\Over_i$.

Note that every run can perform at most $i$-revisits of zero counter value during the first $i$ steps. By Lemma~\ref{lem:bounded-bumps} the probability that counter value at least $ K$ is reached between to visits of zero counter is at most $a^K$. It follows that $\calP(\Over_i)\leq i\cdot a^{\ibound}$.

Next, for every run $w\in \neg\Over_i \cap H_i$ it holds
\begin{align*}
(\msi{i} - \msi{0})(w) &= (\ms{i} - \ms{0})(w) \\
&=- it + \vec{g}(\xs{i}_2)[\ps{i}] - h -\vec{g}(0)[\ps{0}]\\
&\leq -it + 2cK = -it + t\cdot \sqrt[4]{i} \leq -i\frac{t}{2},
\end{align*}

where the first inequality follows from the bound on $\vec{g}(n)$ in Lemma~\ref{lem:weights-bound} and the last inequality holds since $\sqrt[4]{i}\leq i/2$ for all $i\geq 3$.

Using the Azuma's inequality, we get
\begin{align*}
 \calP(\Over_i \cap H_i) &\leq \calP(\msi{i} - \msi{0} \leq -it/2)\\ &\leq \exp\left(-\frac{i^2 \cdot t^2}{8i(4t\sqrt[4]{i})^2} \right)
 =\exp\left(-\frac{\sqrt{i} }{128} \right).
\end{align*}

Altogether, we have
\begin{align*}
 \calP(H_i)&= \calP(H_{i}\cap \Over_i) + \calP(H_i \cap \neg\Over_i) \\
 &\leq i\cdot a^{\ibound} + e^{-\sqrt{i}/128} \leq i\cdot A^{\sqrt[4]{i}},
\end{align*}
where $A=\max \{a^{t/c},2^{-1/128}\}$. Note that $A$ is also computable in polynomial space.

We now have all the tools needed to prove Lemma~\ref{lem:two-counter-divergence}. We have
\begin{align*}
 \calP(\liminf_{k\rightarrow \infty} \totalrew{1}{w}{0}{k}\leq - h)&\leq\calP(\inf_{k\in \Nset} \totalrew{1}{w}{0}{k} \leq -h) \\&= \sum_{i \geq h}\calP(H_i)
 \leq \sum_{i \geq h} i \cdot A^{\sqrt[4]{i}}.
\end{align*}

Note that $\sum_{\ell = h}^{\infty} \ell \cdot A^{\sqrt[4]{\ell}} = \sum_{j =\lfloor\sqrt[4]{h}\rfloor }^{\infty} \sum_{\ell = j^4}^{(j+1)^4 - 1} \ell  \cdot A^{\sqrt[4]{\ell}} \leq \sum_{j =\lfloor\sqrt[4]{h}\rfloor }^{\infty} \sum_{\ell = j^4}^{(j+1)^4 - 1} (j+1)^4 A^{j} \leq \sum_{j =\lfloor\sqrt[4]{h}\rfloor }^{\infty} 8(j+1)^7 A^j$. Using standard methods of calculus we can bound the last sum by  $ (c'' \cdot h^7\cdot A^h)/(1-A)^8$ for some known constant $c''$ independent of $\B$.
Thus, from the knowledge of $A$ and $c''$ we can easily compute, again in polynomial space, numbers $h_0 \in \Nset$, $A_0 \in (0,1)$ such that for all $h \geq h_0$ it holds
\[
 \calP(\liminf_{k\rightarrow \infty} \totalrew{1}{w}{0}{k} \geq h ) \geq 1 - A_0^{h}.
\]

\section{Conclusions}
We have shown that the qualitative zero-reachability problem
is decidable in Case~I and~II, and the probability of all zero-reaching
runs can be effectively approximated. Let us not when the technical condition
adopted in Case~II is not satisfied, than the oc-trends may be undefined
and the problem requires a completely different approach. An important
technical contribution of this paper is the new martingale defined
in Section~\ref{sec-case2}, which provides a versatile tool for attacking
other problems of pMC analysis (model-checking, expected termination time,
constructing (sub)optimal strategies in multi-counter decision 
processes, etc.) similarly as the martingale of 
\cite{BKK:pOC-time-LTL-martingale} for one-dimensional pMC.

\bibliographystyle{abbrv}
\bibliography{str-short,concur,stefan,petri_net}
%\bibliography{petri_net,concur,str-long}
% \newpage
\appendices
\onecolumn
\section{Proofs of Section~\ref{sec-case1}}
\label{app-sec1}

\begin{reftheorem}{Lemma}{\ref{lem:not-diverging}}
  Let $C$ be a BSCC of $\F_{\A}$.
  If some counter is not diverging in $C$, then $\calP(\run(p\vec{v},C)) = 0$.
\end{reftheorem}
\begin{proof}
  Assume that counter $i$ is not diverging, and consider the
  one-dimensional pMC $\B_i$. Observe that $\F_{\B_i}$ is the same as
  $\F_{\A}$, and hence $\F_{\B_i}$ has the same transition probabilities and
  BSCCs as $\F_{\A}$. In particular, the only counter of $\B_i$ is not 
  diverging in the BSCC $C$ of $\F_{\B_i}$. 
  By the results of~\cite{BKK:pOC-time-LTL-martingale}, almost all runs of
  $\run_{\M_{\B_i}}(p(\vec{v}[i]))$ that stay in $C$ eventually visit
  zero value in the only counter. Since all runs of 
  $\Upsilon_i(\run(p\vec{v},C))$ stay in $C$ but none of them ever
  visits a configuration with zero counter value, we obtain that
  \[
    \calP(\run(p\vec{v},C))=\calP(\Upsilon_i(\run(p\vec{v},C))=0
  \]
\end{proof}

\section{Approximation algorithm for \protect{$\calP(\run(p\vec{v},\Zall))$}}
\label{app-approx}

We show that $\calP(\run(p\vec{v},\Zall))$ can be effectively 
approximated up to an arbitrarily small absolute/relative error 
$\varepsilon > 0$.
First we solve this problem under the assumption that $p$ is in some BSCC of $\F_\A$. Then we show how to drop this assumption.

\begin{proposition}
\label{prop:approx}
There is an algorithm which, for a given \mbox{$d$-dimensional} pMC $\A$, its initial configuration $p\vec{v}$ such that $p$ is in a BSCC of $\F_\A$, and a given $\eps>0$ computes a number $\nu$ such that $|\calP(\run(p\vec{v},\Zall))-\nu|\leq d\cdot\eps$. The algorithm runs in time $(\exp(|\A|)\cdot \log(1/\eps))^{\calO(d!)}$.
\end{proposition}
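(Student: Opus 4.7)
The plan is to prove this by induction on the dimension $d$. Before entering the induction, I would call the qualitative algorithm of Theorem~\ref{thm:qual-all-algorithm} to decide whether $\calP(\run(p\vec{v},\Zall))=1$ and return $1$ exactly in that case. Otherwise, since $p$ lies in a BSCC $C$ of $\F_\A$ and every $\Zall$-safe run in $\A$ stays in $C$, Lemma~\ref{lem:not-diverging} forces $C$ to be diverging (a non-diverging $C$ would give $\calP(\run(p\vec{v},C))=0$, hence $\calP(\run(p\vec{v},\Zall))=1$, contradicting the previous check). So from now on I may assume $C$ is diverging.

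For the base case $d=1$, I would invoke Lemma~\ref{lem-divergence} on $\A$ to obtain a threshold $K_0$, polynomial in $|\A|$ and in $\log(1/\eps)$, such that from any $q(k)$ with $k\geq K_0$ the zero-reaching probability is at most $\eps$. Truncating $\M_\A$ to the finite slab $\{q(j)\mid q\in Q,\ 0\le j\le K_0\}$, with $q(0)$ absorbing into "zero reached" and $q(K_0)$ absorbing into a synthetic "diverged" state whose true contribution we bound by~$\eps$, gives a finite Markov chain whose absorption probabilities are computable in time polynomial in $|Q|\cdot K_0$ by standard linear algebra. For the inductive step, I would first determine for every counter $i$ a threshold $K_i$ such that from any configuration in which counter $i$ has value at least $K_i$ the probability of counter $i$ ever reaching zero is at most $\eps$: for counters with $\vec{t}[i]>0$ this $K_i$ comes from Lemma~\ref{lem-divergence} applied to the one-dimensional abstraction $\B_i$ (exponential in $|\A|$, polynomial in $\log(1/\eps)$), and for counters with $\vec{t}[i]=0$ one can take $K_i=\bottomfin_i$, which is bounded by $|Q|$. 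Setting $K:=\max_i K_i$, I build a finite Markov chain $M_K$ whose states are (i)~configurations in $Q\times[1,K]^d$, (ii)~a single absorbing "zero reached" state, and (iii)~boundary states $B_{(q,\vec{u}),i}$ recording the first moment counter~$i$ crosses $K$, with transitions inherited from $\A$. I compute all absorption probabilities of $M_K$ exactly; then for each boundary state $B_{(q,\vec{u}),i}$ I recursively invoke the algorithm on the $(d-1)$-dimensional pMC $\A^{-i}$ obtained from $\A$ by deleting counter $i$, starting from $(q,\vec{u}^{-i})$, with the same target error~$\eps$.

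The output $\nu$ is the "zero" absorption probability of $M_K$ plus the weighted sum of the recursive approximations over all boundary states. The error decomposes into an $\eps$ term coming from "forgetting" counter~$i$ at each boundary (justified by the choice of $K$ via a coupling between $\A$ and $\A^{-i}$ which agree until counter~$i$ hits zero in the coupled $\A$-copy) plus at most $(d-1)\eps$ inherited from the recursive call, summing to $d\eps$ as required. For the complexity, $M_K$ has $|Q|\cdot K^d$ states with $K=\exp(\calO(|\A|))\cdot\log(1/\eps)$, absorption probabilities cost $\mathit{poly}(K^d|Q|)$ time, and there are $\calO(d\cdot|Q|\cdot K^{d-1})$ boundary states each triggering exactly one recursive call. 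This yields the recurrence $T(d)=K^{\calO(d)}|Q|^{\calO(1)}+\calO(d K^{d-1})\cdot T(d-1)$, which unrolls to $T(d)=K^{\calO(d)}\cdot d!$ and is thus within $(\exp(|\A|)\log(1/\eps))^{\calO(d!)}$. The main obstacle is justifying the recursive call cleanly: the state $q$ remains in the same BSCC of $\F_{\A^{-i}}$ (erasing counter $i$ from zero-test-free rules preserves both structure and transition probabilities of $\F$), but a counter with $\vec{t}[j]=0$ that was "not decreasing" in $\A$ may become decreasing in $\A^{-i}$ because fewer positivity constraints remain; in that event the qualitative pre-check at the recursive level returns $1$ exactly, so the reduction remains sound.
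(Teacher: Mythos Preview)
Your proposal is correct and follows essentially the same approach as the paper: a qualitative pre-check, induction on~$d$, thresholds from Lemma~\ref{lem-divergence} (for positive-trend counters) and $\bottomfin_i$ (for zero-trend counters), truncation to a finite Markov chain, and recursion on the $(d{-}1)$-dimensional pMC obtained by deleting the counter that first exceeds its threshold. The paper embeds the recursive approximations directly as transition probabilities to absorbing states $\qdown/\qup$, whereas you compute exact absorption probabilities to boundary states and then weight by the recursive values, but this is the same computation; your observation about a zero-trend counter possibly becoming decreasing in $\A^{-i}$, and its resolution via the qualitative pre-check at the recursive level, is a detail the paper does not spell out.
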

\begin{proof}
In the following, we denote by $C$ the BSCC of $\A$ containing the initial state $p$.
Note that we may assume that $\calP(\run(p\vec{v},\Zall))<1$. From the proof of Lemma~\ref{lem:cover-short-path} it follows that checking this condition boils down to checking the existence of a certain path of length at most $|\A'|^{\calO(d!)}$ in a suitable VASS $\A'$ of size polynomial in $|\A|$. This can be done it time  $(\exp(|\A|)^{\calO(d!)}$.

We can check this condition using an algorithm of Theorem~\ref{thm:qual-all-algorithm}, and if it does not hold we may output $\nu=1$. In particular, we may assume that the trend of every counter in $C$ is non-negative.

We proceed by induction on $d$. For technical convenience we slightly change the statement about the complexity: we show that the running time of the algorithm is $(\exp(|\A|^c)\cdot\log(\vec{v}_{\max}/\eps))^{d!} $, for some constants $c$, $c'$ independent of $\A$. Clearly, this new statement implies the one in the proposition.

Before we present the algorithm, let us make an important observation. Recall the number $a$ defined in Lemma~\ref{lem-divergence} for an arbitrary one-dimensional pMC $\B$ with a positive trend of the counter. Now suppose that for a given $\B$ and given $\eps>0$ we want to find some $K$ such that $\frac{a^K}{1-a}<\eps$. Note that it suffices to pick any $$K > \frac{\log(1/\eps)}{(1-a)\log(1/a)} .$$ From the definition of $a$ we have $K\in \exp(\B^{O(1)})\cdot\log(1/\eps)$ and that $K$ can be computed in time polynomial in $|\B|$. In particular there is a constant $c$ independent of $\B$ such that $K\leq \exp(|\B|^{c})\cdot\log(1/\eps)$ and we choose $c$ as the desired constant.

Now let us prove the proposition.

$d=1:$ First let us assume that the trend of the single counter in $C$ is $0$. Then, by Lemma~\ref{lem-diverging} it must be the case that $\calP(\run(r(\ell),\Zall)) = 0$ for every $r\in C$ and every $\ell\geq |C|$. Thus, if the initial counter value is $\geq |Q|$, we may output $\nu = 0$. Otherwise, we may approximate the probability by constructing a finite-state polynomial-sized Markov chain $\calM_{|C|}$ whose states are those configurations of $\A$ where the counter is bounded by $|C|$ and whose transitions are naturally derived from $\A$. Formally, $\calM_{|C|}$ is obtained from $\calM_{A}$ by removing all configurations $r(\ell)$ with $\ell > |C|$ and replacing all transitions outgoing from configurations of the form $r(|C|)$ with a self loop of probability 1. Clearly, the value $\calP(\run(p\vec{\ell},\Zall))$ is equal to the probability of reaching a configuration with a zero counter from $p(\ell)$ in $\calM_{|C|}$, which can be computed in polynomial time by standard methods.

If the trend of the counter in $C$ is positive, then let us consider the number $a$ from Lemma~\ref{lem-divergence} computed for $\A$ and $C$. As discussed above, we may compute, in time polynomial in $|\A|$, a number $K \leq\exp(|\A|^c)\cdot\log(1/\eps)$ such that $\frac{a^K}{1-a}<\eps$. We can now again construct a finite-state Markov chain $\calM_{K}$ by discarding all configurations in $\calM_\A$ where the counter surpasses $K$ and replacing the transitions outgoing from configurations of the form $r(K)$ with self-loops.

 Now let us consider an initial configuration $q(\ell)$ with $\ell \leq K$ and denote $P(q(\ell))$ the probability of reaching a configuration with zero counter in from $q(\ell)$ in $\calM_K$. We claim that $|\calP(\run(r(\ell),\Zall)) - P(q(\ell))| \leq \eps$. Indeed, from the construction of $\calM_{K}$ we get that $|\calP(\run(r(\ell),\Zall)) - P(q(\ell))|$ is bounded by the probability, that a run initiated in $q(\ell)$ in $\A$ reaches a configuration of the form $r(K)$ via a $\Zall$-safe 
path \emph{and then} visits a configuration with zero counter. This value is in turn bounded by a probability that a run initiated in $r(K)$ decreases the counter to 0, which is at most $\frac{a^K}{1+a}\leq a^K$ by Lemma~\ref{lem-divergence}, and thus at most $\eps$ by the choice of $K$. Thus, it suffices to compute $P(q(\ell))$ via standard algorithms and return it as $\nu$. 

The same argument shows that if the initial counter value $\ell$ is greater than $K$, we can output $\nu = 0$ as a correct $\eps$-approximation.

Note that the construction of $\calM_K$ and computing the reachability probability in it can be performed in time $(|\A|\cdot K)^{c'}$ for a suitable constant $c'$ independent of $\A$. This finishes the proof of a base case of our induction. 

$d>1:$ Here we will use the algorithm for the $(d-1)$-dimensional case as a sub-procedure. For any counter $i$ and any vector $\vec{\beta}\in\{-1,0,1\}^d$ we denote by $\vec{\beta}_{-i}$ the $(d-1)$-dimensional vector obtained from $\vec{\beta}$ by deleting its $i$-component. Moreover, we define a $(d-1)$-dimensional pMC $\A_{-i}$ obtained from $\A$ by ``forgetting'' the $i$-th counter. I.e., $\A=(Q,\gamma_{-i},W_{-i})$, where $(p,\vec{\alpha},c,q)\in \gamma_{-i}$ iff there is $(p,\vec{\beta},c,q)\in \gamma$ such that $\vec{\beta}_{-i} = \vec{\alpha}$; and where $W_{-i}(p,\vec{\alpha},c,q)=\sum W(p,\vec{\beta},c,q)$ with the summation proceeding over all $\vec{\beta}$ such that $\vec{\beta}_{-i}=\vec{\alpha}$.

Now let us prove the proposition. Let $\vec{t}$ be the trend of $C$. For every counter $i$ such that $\vec{t}[i]>0$ we denote by $a_i$ the number $a$ of Lemma~\ref{lem-divergence} computed for $C$ in $\B_i$ (note that $C$ is a BSCC of every $\B_i$). We put $\amax=\max\{a_i\mid \vec{t}[i]>0\}$. We again compute, as discussed above, in time polynomial in $|\A|$ a number $K \leq\exp(|\A|^c)\cdot\log(1/\eps)$ such that $\frac{\amax^K}{1-\amax}<\eps$. (If $\vec{t}=\vec{0}$, we do not need to define $K$ at all, as will be shown below.) For any configuration $q\vec{u}$ we denote by $\mindiv(q\vec{u})$ the smallest $i$ such that either $\vec{t}[i]>0$ and $\vec{u}[i]\geq K$ or $\vec{t}[i]=0$ and $\vec{u}[i]\geq |C|$ (if such $i$ does not exist, we put $\mindiv(q\vec{u})=\bot$).

Consider a finite-state Markov chain $\calM^d_{K}$ which can be obtained from $\calM_\A$ as follows:
\begin{itemize}
 \item We remove all configurations where at least one of the counters with positive trend is greater than $K$, together with adjacent transitions.
 \item We remove all configurations where at least one of the counters with zero trend is greater than $|C|$, together with adjacent transitions.
 \item We add new states $\qdown$ and $\qup$, both of them having a self-loop as the only outgoing transition.
 \item For every $1\leq i \leq d$ and every remaining configuration ${q\vec{u}}$ with $\mindiv(q\vec{u})=i$ we remove all transitions outgoing from $q\vec{u}$ and replace them with the following transitions:
 \begin{itemize}
 \item A transition leading to $\qdown$, whose probability is equal to some $((d-1)\cdot \eps)$-approximation of $\calP_{\A_{-i}}(\run(q\vec{u}_{-i},\Zall))$ (which can be computed using the algorithm for dimension $d-1$).
 \item A transition leading to $\qup$, with probability $1-x$, where $x$ is such that  $q\vec{u}\tran{x}\qdown$.
 \end{itemize}
\end{itemize}
Above, $\calP_{\A_{-i}}(X)$ represents the probability of event $X$ in pMC $\A_{-i}$. 

Now for an initial configuration $p\vec{v}$ belonging to the states of $\calM^d_K$ let $P(p\vec{v})$ be the probability of reaching, when starting in $p\vec{v}$ in $\calM^d_k$, either the state $\qdown$ or a configuration in which at least one of the counters is 0. Note that $P(p\vec{v})$ can be computed in time polynomial in $|\calM^d_K|$. We claim that $|\calP(\run(p\vec{v},\Zall))-P(p\vec{v})|\leq d\cdot \eps$.

Indeed, let us denote $\Div$ the set of all configurations $q\vec{u}$ such that $q\vec{u}$ is a state of $\calM^d_K$ and $\mindiv({q\vec{u}})\neq \bot$. For every $q\vec{u}\in \Div$ we denote by $x_{q\vec{u}}$ the probability of the transition leading from $q\vec{u}$ to $\qdown$ in $\calM^d_K$. Then $|\calP(\run(p\vec{v},\Zall))-P(p\vec{v})|\leq \max_{q\vec{u}\in\Div}| \calP(\run(q\vec{u},\Zall) - x_{q\vec{u}} |$. Now $\calP(\run(q\vec{u},\Zall)\leq P_1(q\vec{u}) + P_2(q\vec{u})$, where $P_1(q\vec{u})$ is the probability that a run initiated in $q\vec{u}$ in $\A$ visits a configuration with $i$-th counter 0 via a $\Z_{-i}$-safe path, and $P_{2}(q\vec{u})$ is the probability that a run initiated in $q\vec{u}$ in $\A$ visits a configuration with some counter equal to 0 via an $\{i\}$-safe path.

So let us fix $q\vec{u}\in \Div$ and denote $i=\mindiv(q\vec{u})$. If $\vec{t}[i]=0$, then 
we have $P_1(q\vec{u})=0$, since this counter is not decreasing in $C$ and thus it cannot decrease by more than $|C|$. Otherwise $P_1(q\vec{u})$ is bounded by the probability that a run initiated in $q(K)$ in $\B_i$ reaches a configuration where the  counter is 0. From  Lemma~\ref{lem-divergence} we get that $\calP_{\B_{i} }(\run(q(K),\Zall))\leq \frac{a_i^K}{1-a_i} \leq \frac{\amax^K}{1-\amax} \leq \eps$, where the last inequality follows from the choice of $K$. 

For $P_2(q\vec{u})$ note that $P_2{(\vec{u})}=\calP_{\A_{-i}}(\run(q\vec{u}_{-i},\Zall))$ and thus by the construction of $\calM^d_{K}$ we have $|P_2(q\vec{u}) - x_{q\vec{u}}|\leq (d-1)\cdot\eps$.

Altogether we have
\begin{align*}
 &|\calP(\run(p\vec{v},\Zall))-P(p\vec{v})|\leq| P_1(q\vec{u}) + P_2(q\vec{u}) - x_{q\vec{u}}| \leq \eps + (d-1)\cdot \eps = d\cdot \eps.
\end{align*}

Therefore it suffices to compute $P(p\vec{v})$ via standard methods and output is as $\nu$. Finally, if the initial configuration $p\vec{v}$ does not belong to the state space of $\calM^d_K$ let us denote $i=\mindiv(p\vec{v})$. Then it suffices to output some $((d-1)\cdot\eps)$-approximation of $\calP_{\A_{-i}}(\run(p\vec{v}_{-i},\Zall))$ as $\nu$. If $\vec{t}[i]=0$, then $\nu$ is also an $((d-1)\cdot\eps)$-approximation of $\calP_(\run(p\vec{v},\Zall))$, otherwise $|\calP(\run(p\vec{v},\Zall))-\nu|\leq(d-1)\cdot \eps + P_{1}(p\vec{v})$ where $P_1$ is defined in the same way as above. Since the probability of reaching zero counter in $\B_i$ with initial counter value $>K$ can be only smaller than the probability for initial value $K$, the bound on $P_1$ above applies and we get $|\calP(\run(p\vec{v},\Zall))-\nu|\leq d\cdot \eps$.

Now let us discuss the complexity of the algorithm. Note that for any $d$ we have $K \leq \exp(|\A|^c)\cdot\log(1/\eps)$, and the construction of $\calM^d_K$ (or $\calM_K$) and the computation of the reachability probabilities can be done in time $(|\A|\cdot K^{d})^{c'} \cdot T(d-1) \leq (\exp{|\A|^{c+1}}\cdot \log(1/\eps))^{dc'}$ for some  constant $c'$ independent of $\A$ and $d$, where $T(d-1)$ is the running time of the algorithm on a $(d-1)$-dimensional pMC of size $\leq |\A|$ (the pMCs $|\A_{-i}|$ examined during the recursive call of the algorithm are of size $\leq |\A|$). Solving this recurrence we get that the running time of the algorithm is $(\exp(|\A|)\cdot\log(1/\eps))^{\calO(d!)}$.

Lemma~\ref{lem-divergence} we get that $\calP_{\B_{i} }(\run(q(K),\Zall))\leq \frac{a_i^K}{1+a_i} \leq \amax^K \leq \eps$, where the last inequality follows from the choice of $K$. 

\end{proof}

With the help of algorithm from Proposition~\ref{prop:approx} we can easily approximate $\calP(\run(p\vec{v},\Zall))$ even if $p$ is not in any BSCC of $\A$.

\begin{reftheorem}{Theorem}{\ref{thm:approx-general}}
 For a given $d$-dimensional pMC $\A$ and its initial configuration
 $p\vec{v}$, the probability $\calP(\run(p\vec{v},\Zall))$ can be
 approximated up to a given absolute error $\eps>0$ in time
 $(\exp(|\A|)\cdot \log(1/\eps))^{\calO(d\cdot d!)}$.
\end{reftheorem}
\begin{proof}
 First we compute an integer $n\in \exp(|\A|^{\calO(1)})\cdot \log(1/\eps)$ such that $(1 - \pmin^{|Q|})^{\lfloor\frac{n}{|Q|}\rfloor}\leq \eps/2$. This can be done in time polynomial in $|\A|$ and $\log(1/\eps)$. By Lemma~\ref{lem:F_A-BSCC} the probability that a run does not visit, in at most $n$ steps, a configuration $q\vec{u}$ with either $Z(q\vec{u})\neq\emptyset$ or $q$ being in some BSCC of $\A$ is at most $\eps/2$. Now we construct an $n$-step unfolding of $\A$ from $p\vec{v}$, i.e. we construct a finite-state Markov chain $\calM$ such that
\begin{itemize}
 \item its states are tuples of the form $(q\vec{u},j)$, where $ 0\leq j \leq n$ and $ q\vec{u}\text{ is reachable from } p\vec{v} \text{ in $\leq n$ steps in $\A$}$,
 \item for every $0\leq j < n$ we have $(q\vec{u},j)\tran{y}(q'\vec{u}',j+1)$ iff $q\vec{u}\tran{y}q'\vec{u}'$ in $\calM_\A$,
 \item there are no other transitions in $\calM$.
\end{itemize}

  We add to this $\calM$ new states $\qup$ and $\qdown$, and for every state $(q\vec{u},j)$ with $q$ in some BSCC of $\A$ we replace the transitions outgoing from this state with two transitions $(q\vec{u},j)\tran{x}\qdown$, $(q\vec{u},j)\tran{1-x}\qup$, where $x$ is some $(\eps/2)$-approximation of $\calP(\run(q\vec{u},\Zall))$, which can be computed using the algorithm from Proposition~\ref{prop:approx}. Moreover, for every state $(q\vec{u},j)$ with $Z(q\vec{u})\neq \emptyset$ we replace all its outgoing transitions with a single transition leading to $\qdown$. It is immediate that the probability of reaching $\qdown$ from $p\vec{v}$ is an $\eps$-approximation of $\calP(\run(p\vec{v},\Zall))$.

The number of states of $\calM$ is at most $m = n\cdot |Q|\cdot (2n)^d$ and the algorithm of Proposition~\ref{prop:approx} is called at most $m$ times, which gives us the required complexity bound.
\end{proof}

\section{Proofs of Section~\ref{sec-case2}}
\label{app-sec2}

\begin{reftheorem}{Lemma}{\ref{lem-oc-div-prob}}
  $\calP(\bigcup_{q\in Q}\run(p\uin,\{q{\uparrow}\})) > 0$ iff there
  exists a~BSCC $C$ of $\F_\A$ with \emph{all} counters diverging and a
  \mbox{$\Zminusi{d}$-safe} finite path of the form 
  $p\vec{v} \tran{}^* q\vec{u} \tran{}^* q\vec{z}$ where the subpath
  $q\vec{u} \tran{}^* q\vec{z}$ is $\Zall$-safe,
  $q \in C$, $q\vec{u}$ is above $1$, $\vec{z} - \vec{u} \geq \vec{0}$, 
  and $(\vec{z} - \vec{u})[i] > 0$ for every $i$ such that $\vec{t}[i] > 0$.
\end{reftheorem}
\begin{proof}
  ``$\Rightarrow$'' Note that $\calP(\run(p\uin,\{q{\uparrow}\}))>0$ for
  some $q\in Q$.  By Lemma~\ref{lem:F_A-BSCC}, almost every run of
  $\run(p\uin,\{q{\uparrow}\})$ stays eventually in some BSCC of
  $\F_\A$. Let $C$ be a BSCC such that the probability of all 
  $w \in \run(p\uin,\{q{\uparrow}\})$ that stay is $C$ is positive, and let
  $\vec{t}$ be the trend of $C$. We use $R$ to denote the set of all
  $w \in \run(p\uin,\{q{\uparrow}\})$ that stay in $C$. 

  We claim that each counter $i$ must be diverging in $C$.  First, let
  us consider $1\leq i\leq d-1$.  Consider the one-counter pMC
  $\B_i$. Note that the trend of $C$ in $\B_i$ is 
  to $\vec{t}[i]$. For the sake of contradiction, assume that counter~$i$
  is not diverging, i.e., we have either $t_i<0$, or $t_i=0$ and 
  counter~$i$ is decreasing in~$C$. Then, 
  by~\cite{BKK:pOC-time-LTL-martingale}, starting in a
  configuration $p(k)$ of $\B_i$ where $p\in C$, a configuration with
  zero counter value is reached from $p(k)$ with probability
  one. However, then, due to Equation~(\ref{eq-project}) and
  Proposition~\ref{prop:one-counter-runs}, almost every run of $R$
  visits a configuration with zero in one of the counters of
  $\Zminusid$ (note that zero may be reached in some counter before
  inevitably reaching zero in counter~$i$). As $R\subseteq
  \run(p\uin,\{q{\uparrow}\})\subseteq
  \run_{\M_\A}(p\uin,\neg\Zminusi{d})$, we obtain that $\calP(R)=0$, which 
  is a contradiction.  Now consider $i=d$. Similarly as above, starting in
  a configuration $p(k)$ of $\B_d$ where $p\in C$, a configuration
  with zero counter value is reached from $p(k)$ with probability
  one. This implies that almost all runs $w$ of $R$ reach
  configurations with zero counter value in the counter $d$ infinitely
  many times, and hence, by Proposition~\ref{prop:one-counter-runs},
  $\Phi(\Upsilon_d(w))$ does not reach $\bigcup_{q\in Q}
  \{q{\uparrow}\}$ at all. It follows that $\calP(R)=0$,
  a~contradiction.

  Now we prove that for almost all runs $w \in R$ and for all counters
  $i$, one of the following holds:
\begin{enumerate}
\item[(A)] $t_i>0$ and $\liminf_{k\rightarrow \infty}
  \cval_i(w(k))=\infty$,
\item[(B)] $t_i=0$ and $\cval_i(w(k))\geq -\bottomfin_i(\state(w(k)))$
  for all $k$'s large enough.
\end{enumerate}
  The argument is the same as in the proof of Lemma~\ref{lem-diverging}.
  From~(A) and~(B), we immediately obtain the existence of 
  a finite path $p\vec{v} \tran{}^* q\vec{u} \tran{}^* q\vec{z}$
  with the required properties.

  ``$\Leftarrow$'' We argue similarly as in Lemma~\ref{lem-diverging}.
\end{proof}

\begin{reftheorem}{Lemma}{\ref{lem-bottominf}}
 If $\bottominf_i(q) < \infty$, then $\bottominf_i(q) \leq 3|Q|^3$ and
 the exact value of $\bottominf_i(q)$ is computable in time polynomial
 in $|\A|$.
\end{reftheorem}
\begin{proof}[Proof sketch]
 We show that if $\bottominf_i(q) < \infty$, then there is
  $w \in \fpath_{\M_\B}(q(0))$ ending in $q(0)$ where
  $w(n) \neq q(0)$ for all $1 \leq n < \len(w)$,
  $\totalrew{i}{w}{0}{\len(w)} = -\bottominf_i(q)$, and the
  counter is bounded by $2|Q|^2$ along~$w$.
  From this we immediately obtain that $w$ visits at most
  $3|Q|^3$ different configurations, and we can safely assume
  that no configuration is visited twice (if the reward accumulated
  between two consecutive visits to the same configuration is
  non-negative, we can remove the cycle and thus produce a path
  whose total accumulated reward can be only smaller; and if the
  the reward accumulated
  between two consecutive visits to the same configuration is
  negative, we have that $\bottominf_i(q) = \infty$, which is
  a contradiction).

  To see that there is such a path $w$ where the counter is bounded
  by $2|Q|^2$, it suffices to realize that if it was not the case,
  we could always decrease the number of configurations visited by $w$
  where the counter value is above $2|Q|^2$ by removing some
  subpaths of $w$ such that the total reward accumulated in
  these subpaths in non-negative. More precisely, we show
  that there exist configurations $r(i_1)$, $r(i_2)$, $s(i_2)$
  and $s(i_1)$ consecutively visited by $w$ where $0 < i_1 < i_2 \leq 2|Q|^2$,
  the counter stays positive in all configurations between $r(i_1)$
  and $s(i_1)$, the finite path from $r(i_2)$ to $s(i_2)$ visits
  at least one configuration with counter value above $2|Q|^2$,
  and the finite path from $r(i_2)$ to $s(i_2)$ can be ``performed''
  also from $r(i_1)$ without visiting a configuration with zero counter.
  If the total reward accumulated in the paths
  from  $r(i_1)$, $r(i_2)$ and from $s(i_2)$ to $s(i_1)$ is
  negative, we obtain that $\bottominf_i(q) = \infty$ because we can
  ``iterate'' the two subpaths. If it
  is non-negative, we can remove the subpaths from $r(i_1)$ to $r(i_2)$
  and from $s(i_2)$ to $s(i_1)$ from $w$, and thus decrease the
  number of configuration with counter value above $2|Q|^2$, making
  the total accumulated reward only smaller.

  Using the above observations, one can easily compute
  $\bottominf_i(q)$ in polynomial time.
\end{proof}

\begin{reftheorem}{Lemma}{\ref{lem:case2-subcrit-not-diverging}}
  If some reward is not oc-diverging in $D$, then $\calP(\run(p\uin,D)) = 0$.
\end{reftheorem}
\begin{proof}
  Assume that counter $i$ is not diverging in $D$. Let us fix some $q\in
  D$.  Let $w$ be a run in $\M_{\B}$ initiated in $q(0)$ and let
  $I_1<I_2<\cdots$ be non-negative integers such that $w_{I_k}$ is the
  $k$-th occurrence of $q(0)$ in $w$. Given $i\in \{1,\ldots,d-1\}$
  and $k\geq 1$, we denote by
  $T^k_i(w)=\totalrew{i}{w}{I_{k}}{I_{k+1}-1}-\totalrew{i}{w}{I_{k}}{I_{k}}$
  the $i$-th component of the total reward accumulated between the
  $k$-th visit (inclusive) and the $k{+}1$-st visit to $q(0)$
  (non-inclusive).
  We denote by $\mathbb{E}T^k_i$ the expected value of $T^k_i$.

Observe that $T^1_i,T^2_i,\ldots$ are mutually independent and
identically distributed. Thus $T^1_i,T^2_i,\ldots$ determines a random
walk $S^1_i,S^2_i,\ldots$, here $S^k_i=\sum_{j=1}^k T^j_i$, on
$\Zset$. Note that $S^k_i=\totalrew{i}{w}{0}{k+1}$.  By the strong
law of large numbers, for
almost all $w\in \run_{\M_{\B}}(q(0))$,
\begin{eqnarray*}
\mathbb{E}T^1_i & = & \lim_{k\rightarrow \infty} \frac{S^k_i(w)}{k}  \\
    & = & \lim_{k\rightarrow \infty} \frac{S^k_i(w)}{E^k(w)}\frac{E^k(w)}{k} \\
    & = & \lim_{k\rightarrow \infty} \frac{S^k_i(w)}{E^k(w)}\lim_{k\rightarrow\infty}\frac{E^k(w)}{k} \\
    & = & \lim_{k\rightarrow \infty} \frac{\totalrew{i}{w}{0}{k}}{k}\lim_{k\rightarrow \infty} \vec{e}[q]\\
    & = & \vec{t}_{oc}[q] \\
    & \leq & 0
\end{eqnarray*}
(Here $E^k(w)$ denotes the number of steps between the $k$-th and $k+1$-st visit to $q(0)$ in $w$.)
Also, $\calP(T^1_i<0)>0$. By~Theorem~8.3.4 \cite{Chung:book}, for almost all
$w\in \run_{\M_{\B}}(q(0))$ we have that $\liminf_{k\rightarrow
\infty} S^k_i(w)=-\infty$.

However, this also means that almost every run $w\in \run_{\M_{\B}}(q(0))$ satisfies that $\lim_{\ell\rightarrow \infty} \totalrew{i}{w}{0}{\ell}=-\infty$. Subsequently, as all runs of $\Upsilon_d(\run(p\uin,D))$ visit $q(0)$, almost all runs $w$ of $\Upsilon_d(\run(p\uin,D))$ satisfy $\lim_{\ell\rightarrow \infty} \totalrew{i}{w}{0}{\ell}=-\infty$. Thus, by Lemma~\ref{prop:one-counter-runs}, almost all runs of
$\run(p\uin,D)$ visit zero in one of the counters in $\Zminusid$. This means, that $\run(p\uin,D)=0$.
\end{proof}

\begin{reftheorem}{Lemma}{\ref{lem-diverging-2}}
  Let $D$ be a BSCC of $\X_{\B}$ where all rewards are diverging.
  Then there exists a computable constant~$n \in \Nset$ such that
  $\calP(\run(p\uin,D)) > 0$ iff
  there is a $\Zminusi{d}$-safe finite
  path of the form $p\uin \tran{}^* q\vec{u}$ where $\vec{u}$
  is oc-above $n$ and $\vec{u}[d]=0$.
\end{reftheorem}
\begin{proof}
 The constant $n$ is computed using
 Lemma~\ref{lem:two-counter-divergence}. We choose a sufficiently large
 $n$ such that the probability of Lemma~\ref{lem:two-counter-divergence}
 is smaller than $1/d$ for every $q \in D$.

 $\Leftarrow$: Assume that counter $i$ satisfies $\rt[i]>0$.
 By Lemma~\ref{lem:long-run-average}, almost every run $w$ of $\M_\B$ initiated in $q(0)$ satisfies
 \[
 \lim_{k\rightarrow \infty} \totalrew{i}{w}{0}{k}\, /\, k=\rt[i]>0
 \]
 It follows that there is $c>0$ such that for a sufficiently large $k\in \Nset$ we have
 $\totalrew{i}{w}{0}{k}\, /\, k\geq c$.
It follows that $\totalrew{i}{w}{0}{k}\geq ck$ for all sufficiently large $k$. Thus for all counters $i$ satisfying $\rt[i]>0$ and for almost all runs $w$ of $\M_{\B}$ initiated in $q(0)$ we have that $\lim_{k\rightarrow \infty} \totalrew{i}{w}{0}{k}=\infty$.

For every $n\in \Nset$ we denote by $R_n$ the set of all runs $w$ initiated in $q(0)$ such that $\totalrew{i}{w}{0}{k}> -n$ for all $k$ and all $i$ satisfying $\rt[i]>0$. By the above argument, $\calP(\bigcup_{n} R_n)=1$.
Hence, there must be $n$ such that $\calP(R_n)>0$.

Let $q\vec{u}$ be any configuration that is above $n$ and satisfies $\vec{u}[d]=0$. Then $\Upsilon_d(\run(q\vec{u},\Zminusi{d}))\supseteq R_n$ and hence $\calP(\run(q\vec{u},\Zminusi{d}))\geq \calP(R_n)>0$. By our assumption, such a configuration $q\vec{u}$ is reachable from $p\uin$ via a $\Zminusi{d}$-safe path, and thus $\calP(\run(p\uin,D))>0$.

 $\Rightarrow$:
  We show that for almost
  all $w \in \run(p\uin,D)$ and all $i \in \{1,\ldots,d-1\}$, one of
  the following conditions holds:
  \begin{enumerate}
  \item[(A)] $\rt[i]>0$ and $\liminf_{k\rightarrow \infty} \cval_i(w(k))=\infty$,
  \item[(B)] $\rt[i]=0$ and $\cval_i(w(k))\geq \bottominf_i(\state(w(k)))$
    for all $k$'s large enough.
  \end{enumerate}

  Concerning~(A), note that for almost all runs $w$ of $\M_\B$ initiated in $q(0)$ where $q\in D$ we have that
  \[
\lim_{k\rightarrow \infty} \totalrew{i}{w}{0}{k}\, /\, k=\rt[i]>0
\]
which implies, as above, that $\lim_{k\rightarrow \infty} \totalrew{i}{w}{0}{k}=\infty$. Let $q\vec{u}$ be a configuration of $\A$ which is oc-above $1$ and satisfies $\vec{u}[d]=0$.
Then almost all runs $w$ of $\Upsilon_d(\run(q\vec{u},\Zminusi{d}))$ satisfy
$\lim_{k\rightarrow \infty} \totalrew{i}{w}{0}{k}=\infty$, and hence also almost all runs $w$ of $\run(q\vec{u},\Zminusi{d})$ satisfy $\liminf_{k\rightarrow \infty} \cval_i(w(k))=\infty$. As almost every run of $\run(p\uin,D)$ visits $q\vec{u}$ for some $\vec{u}$ that is oc-above $1$ nad satisfying $\vec{u}[d]=0$, almost all runs $w$ in $\run(p\uin,D)$ satisfy
$\liminf_{k\rightarrow \infty} \cval_i(w(k))=\infty$.

  Concerning (B), note that almost all runs $w \in \run(p\uin,D)$
  satisfying $\cval_i(w'(k)) < \bottominf_i(\state(w(k)))$
  for infinitely many $k$'s eventually visit zero in some counter
  (there is a path of length at most $3|Q|^3$ from each such
  $w(k)$ to a configuration with zero in counter $i$, or in one of the
  other counters).

  The above claim immediately implies that for every $n \in \Nset$,
  almost every run of $\run(p\uin,D)$ visits a configuration
  $q\vec{u}$ oc-above~$n$.

  The other implication is proven similarly as in Lemma~\ref{lem-diverging}.
 \end{proof}

\smallskip

Following~\cite{ABKM:Num-Analysis-SIAM} the \textsc{Square-Root-Sum}
problem is defined as follows.  Given natural numbers $d_1, \ldots,
d_n \in \Nset$ and $k \in \Nset$, decide whether $\sum_{i=1}^n
\sqrt{d_i} \ge k$.  Membership of square-root-sum in NP has been open
since 1976.
It is known that \textsc{Square-Root-Sum} 
reduces to PosSLP and hence lies in the counting
hierarchy, see~\cite{ABKM:Num-Analysis-SIAM} and the references
therein for more information on square-root-sum, PosSLP, and the
counting hierarchy.

\begin{reftheorem}{Proposition}{\ref{sqrt-hard}}
  The qualitative $\Zminusi{d}$-reachability problem is
  \textsc{Square-Root-Sum}-hard, even for two-dimensional pMC
  where $\vec{e}[q]<\infty$ for all $q\in D$ in
  every BSCC of $\X_\B$.
\end{reftheorem}
\begin{proof}
We adapt a reduction from~\cite{EWY:one-counter-PE}.
Let $d_1, \ldots, d_n, k \in \Nset$ be an instance of the\textsc{Square-Root-Sum}  problem.
Let $m := \max \{d_1, \ldots, d_n, k\}$.
Define $c_i := \frac12 (1 - d_i/m^2)$ for $i \in \{1, \ldots, n\}$.

We construct a pMC $\A = (Q,\gamma,W)$ as follows.
Take $Q := \{q, r_1, \ldots, r_n, s_+, s_-\}$ and set of rules~$\gamma$ as listed below (we omit labels and some irrelevant rules).
The weight assignment~$W$ is, for better readability, specified in terms of probabilities rather than weights, with the obvious intended meaning.
\begin{align*} \textstyle
\textstyle \frac{1}{2 n} & : (q, (0,0), \emptyset, r_i) && \text{ for all $i \in \{1, \ldots, n\}$} \\
\textstyle \frac{1}{2}   & : (q, (0,-1), \emptyset, s_+) \\
\textstyle \frac{1}{2}   & : (r_i, (0,+1), \emptyset, r_i) && \text{ for all $i \in \{1, \ldots, n\}$} \\
\textstyle c_i           & : (r_i, (0,-1), \emptyset, r_i) && \text{ for all $i \in \{1, \ldots, n\}$} \\
\textstyle \frac12 - c_i & : (r_i, (0,0), \emptyset, s_-) && \text{ for all $i \in \{1, \ldots, n\}$} \\
\textstyle 1             & : (r_i, (0,+1), \{2\}, q) && \text{ for all $i \in \{1, \ldots, n\}$} \\
\textstyle 1             & : (s_-, (0,-1), \emptyset, s_-) \\
\textstyle 1             & : (s_-, (-1,+1), \{2\}, q) \\
\textstyle \frac{k}{n m} & : (s_+, (+1,+1), \{2\}, q) \\
\textstyle 1-\frac{k}{n m} & : (s_+, (0,+1), \{2\}, q)
\end{align*}
We claim that $\calP(\run(q\vec{1},\{\{1\}\})) = 1$ holds if and only if $\sum_i \sqrt{d_i} \ge k$ holds.
It is shown in~\cite{EWY:one-counter-PE} that $r_i(1,1)$ reaches, with probability~$1$, the configuration $r_i(1,0)$ or $s_-(1,0)$
 before reaching any other configuration with $0$ in the second counter.
In fact, it is shown there that the probability of reaching~$s_-(1,0)$ is $\sqrt{d_i}/m$, and of reaching~$r_i(1,0)$ is $1 - \sqrt{d_i}/m$.
The only BSCC~$D$ of~$\X_\B$ is $\{r_1, \ldots, r_n, s_+, s_-\}$.
It follows for the invariant distribution $\vec{\mu}_{oc}$ that
 $\vec{\mu}_{oc}[s_+] = \frac12$ and $\vec{\mu}_{oc}[s_-] = \frac{1}{2 n m} \sum_i \sqrt{d_i}$.
From the construction it is clear that $\vec{\delta}_1[s_+] = +\frac{k}{n m}$ and $\vec{\delta}_1[s_-] = -1$
 and $\vec{\delta}_1[r_i] = 0$ for all $i \in \{1, \ldots, n\}$.
Hence we:
\begin{align*}
\rt[i] & = \left(\vec{\mu}^T_{oc} \cdot \vec{\delta}_i\right)/\left(\vec{\mu}^T_{oc} \cdot \vec{e}\right) \\
 & = \textstyle \left( \frac12 \cdot \frac{k}{n m} - \frac{1}{2 n m} \sum_i \sqrt{d_i} \right) /\left(\vec{\mu}^T_{oc} \cdot \vec{e}\right)
\end{align*}
So we have $\rt[i] \le 0$ if and only if $\sum_i \sqrt{d_i} \ge k$ holds.
The statement then follows from Lemma~\ref{lem-diverging-2}.
\end{proof}

\section{Martingale}

\subsection{Matrix Notation}

In the following, $Q$ will denote a finite set (of control states).
We view the elements of $\Rset^Q$ and $\Rset^{Q \times Q}$ as vectors and matrices, respectively.
The entries of a vector $\vec{v} \in \Rset^Q$ or a matrix $M \in \Rset^{Q \times Q}$ are denoted by $\vec{v}[p]$ and $M[p,q]$ for $p,q \in Q$.
Vectors are column vectors by default; we denote the transpose of a vector~$\vec{v}$ by $\vec{v}^T$, which is a row vector.
For vectors $\vec{u}, \vec{v} \in \Rset^Q$ we write $\vec{u} \le \vec{v}$ (resp.\ $\vec{u} < \vec{v}$)
 if the respective inequality holds in all components.
The vector all whose entries are~$0$ (or~$1$) is denoted by $\vec{0}$ (or~$\vec{1}$, respectively).
We denote the identity matrix by $I \in \{0,1\}^Q$ and the zero matrix by~$0$.
A matrix $M \in [0,1]^{Q \times Q}$ is called \emph{stochastic} (\emph{substochastic}),
 if each row sums up to~$1$ (at most~$1$, respectively).
A nonnegative matrix $M \in [0,\infty)^Q$ is called \emph{irreducible},
 if the directed graph $(Q, \{(p,q) \in Q^2 \mid M[p,q] > 0\})$ is strongly connected.
We denote the spectral radius (i.e., the largest among the absolute values of the eigenvalues) of a matrix~$M$ by~$\rho(M)$.

\subsection{Proof of Lemma~\ref{lem-g-zero-exists}}

The proof of Lemma~\ref{lem-g-zero-exists} is based on the notion of \emph{group inverses} for matrices~\cite{Erdelyi67}.
Close connections of this concept to (finite) Markov chains are discussed in~\cite{Meyer75}.
We have the following lemma:

\begin{lemma} \label{lem-group-inverse}
Let $P$ be a nonnegative irreducible matrix with $\rho(P) = 1$.
Then there is a matrix, denoted by $(I - P)^\#$, such that $(I - P) (I - P)^\# = I - W$,
 where $W$ is a matrix whose rows are scalar multiples of the dominant left eigenvector of~$P$.
\end{lemma}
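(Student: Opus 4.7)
The plan is to derive the statement from the Perron--Frobenius theorem together with the standard construction of the spectral projector at a simple eigenvalue. First, since $P$ is nonnegative and irreducible with $\rho(P) = 1$, Perron--Frobenius guarantees that $1$ is a \emph{simple} eigenvalue of $P$, and there exist strictly positive right and left eigenvectors $\vec{u}, \vec{v} \in (0,\infty)^Q$ with $P\vec{u} = \vec{u}$ and $\vec{v}^T P = \vec{v}^T$. I normalize them so that $\vec{v}^T \vec{u} = 1$ and set
\[
  W \;:=\; \vec{u}\,\vec{v}^T \in \Rset^{Q \times Q}.
\]
Then $W^2 = \vec{u}(\vec{v}^T\vec{u})\vec{v}^T = W$, so $W$ is a projector, and every row of $W$ is $\vec{u}[p]\cdot\vec{v}^T$, i.e.\ a scalar multiple of the dominant left eigenvector of~$P$, as required.

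Next I would establish the direct sum decomposition
\[
  \Rset^Q \;=\; \ker(I - P) \,\oplus\, \operatorname{range}(I - P),
\]
which is the step that really uses simplicity of the eigenvalue~$1$. Because $1$ is simple, its algebraic and geometric multiplicities both equal $1$, so the generalized eigenspace of $P$ at $1$ coincides with $\ker(I-P) = \operatorname{span}(\vec{u})$. By the Jordan decomposition of $P$, the complementary invariant subspace is exactly $\operatorname{range}(I-P)$, and $I-P$ restricted to this complement is invertible. The projector $W = \vec{u}\vec{v}^T$ is precisely the projection onto $\ker(I-P)$ along $\operatorname{range}(I-P)$: indeed $Wz = \vec{u}(\vec{v}^T z) \in \ker(I-P)$, while for $y = (I-P)x$ we have $\vec{v}^T y = \vec{v}^T(I-P)x = 0$, so $Wy = 0$.

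Now I define $(I-P)^\#$ as follows. Every $x \in \Rset^Q$ splits uniquely as $x = Wx + (I-W)x$ with $(I-W)x \in \operatorname{range}(I-P)$; let $M : \operatorname{range}(I-P) \to \operatorname{range}(I-P)$ be the restriction of $I-P$, which is invertible by the previous paragraph, and set
\[
  (I-P)^\# x \;:=\; M^{-1}\bigl((I-W)x\bigr).
\]
Equivalently, $(I-P)^\#$ is zero on $\ker(I-P)$ and equals $M^{-1}$ on $\operatorname{range}(I-P)$.

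Finally I would verify the desired identity. For any $x \in \Rset^Q$,
\[
  (I-P)(I-P)^\# x \;=\; (I-P)\,M^{-1}\bigl((I-W)x\bigr) \;=\; M M^{-1}\bigl((I-W)x\bigr) \;=\; (I-W)x,
\]
using that $(I-W)x \in \operatorname{range}(I-P)$ and $(I-P)$ acts on this subspace as~$M$. Hence $(I-P)(I-P)^\# = I - W$, completing the proof. The only delicate point is the appeal to Perron--Frobenius to guarantee \emph{simplicity} (not merely existence) of the eigenvalue~$1$; once that is in hand, the rest of the argument is the routine construction of the group inverse at a simple eigenvalue and requires no spectral information about the remaining eigenvalues of~$P$ (which, for a merely irreducible rather than primitive $P$, may also lie on the unit circle).
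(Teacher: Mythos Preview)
Your proposal is correct and follows essentially the same approach as the paper: both hinge on Perron--Frobenius to obtain that the eigenvalue~$1$ is simple, pass to the resulting direct-sum decomposition $\Rset^Q = \ker(I-P)\oplus\operatorname{range}(I-P)$, and identify $W$ with the spectral projector onto $\ker(I-P)$. The paper invokes the general existence criterion for the group inverse from~\cite{Erdelyi67,Meyer75} and then reads off the structure of $I-(I-P)(I-P)^\#$, whereas you fix $W=\vec{u}\vec{v}^T$ up front and build $(I-P)^\#$ explicitly as the inverse of $I-P$ on its range and zero on its kernel; this is the same object constructed by hand rather than quoted, so your version is more self-contained but not conceptually different.
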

\begin{proof}
In~\cite{Meyer75} the case of a stochastic matrix~$P$ is considered.
In the following we adapt proofs from~\cite[Theorems 2.1 and~2.3]{Meyer75}.
For a square matrix $M$, a matrix $M^\#$ is called \emph{group inverse} of~$M$, if we have $M M^\# M = M$ and $M^\# M M^\# = M^\#$ and $M M^\# = M^\# M$.
It is shown in~\cite[Lemma~2]{Erdelyi67} that a matrix~$M$ has a group inverse if and only if $M$ and~$M^2$ have the same rank.
As $P$ is irreducible, the Perron-Frobenius theorem implies that the eigenvalue~$1$ has algebraic multiplicity equal to one.
So $0$ is an eigenvalue of~$M := (I-P)$ with algebraic multiplicity~$1$.
This implies that the Jordan form for~$M$ can be written as
\[
 \begin{pmatrix}
  0 & 0 \\
  0 & J'
 \end{pmatrix}
\]
where the square matrix~$J'$ is invertible.
It follows that $M$ and $M^2$ have the same rank, so $M^\#$ exists.
Using the definition of group inverse, we have $(I - M M^\#) P = (I - M M^\#)$.
In other words, the rows of~$I - (I - P) (I - P)^\#$ are left eigenvectors of~$P$ with eigenvalue~$1$.
The statement then follows by the Perron-Frobenius theorem.
\end{proof}

Now we can prove Lemma~\ref{lem-g-zero-exists}.

\begin{proof}
Recall that the matrix~$A$ is stochastic and irreducible.
Also recall from the main body of the paper that $\alpha^T A = \alpha^T$.
It follows from the Perron-Frobenius theorem that $\rho(A) = 1$.
Define $\vec{g}(0)[D] := (I - A)^\# \vec{r}_0$, where $(I - A)^\#$ is the matrix from Lemma~\ref{lem-group-inverse}.
The non-$D$-components can be set arbitrarily, for instance, they can be set to~$0$.
So we have $\vec{g}(0)[D] = \vec{r}_0 + A \vec{g}(0)[D] - W \vec{r}_0$,
 where the rows of~$W$ are multiples of~$\vec{\alpha}^T$.
We have:
\begin{align*}
  \vec{\alpha}^T \vec{r}_0
  & = \vec{\alpha}^T \left( \vec{\delta}_1 - \frac{\vec{\alpha}^T \vec{\delta}_1}{\vec{\alpha}^T \vec{e}} \vec{e} \right)
   && \text{by the definitions of~$\vec{r}_0$ and~$t$} \\
  & = 0\,.
\end{align*}
So \eqref{eq-mart-fund} follows.
\end{proof}

\subsection{Proof of Proposition~\ref{prop-martingale}}

For notational convenience, we assume in the following that $\A$ is a 2-dimensional pMC corresponding to the labelled $1$-dimensional pMC~$\D$
 from the main body; i.e., the first counter of~$\A$ encodes the rewards of~$\D$, the second counter of~$\A$ encodes the unique counter of~$\D$.

Define the substochastic matrices $Q_\rightarrow \in [0,1]^{D \times D}$,  $Q_\uparrow \in [0,1]^{D \times Q}$,
 $P_\downarrow, P_\rightarrow, P_\uparrow \in [0,1]^{Q \times Q}$ as follows:
\begin{align}
 Q_\rightarrow[p,q] & := \sum \{y \mid \exists x_1: p(1,0) \ltran{y} q(x_1,0)\} \label{eq-mart-Q0} \\
 Q_\uparrow[p,q] & := \sum \{y \mid \exists x_1: p(1,0) \ltran{y} q(x_1,1)\} \label{eq-mart-Q+} \\
 P_\downarrow[p,q] & := \sum \{y \mid \exists x_1: p(1,1) \ltran{y} q(x_1,0)\} \label{eq-mart-P-} \\
 P_\rightarrow[p,q] & := \sum \{y \mid \exists x_1: p(1,1) \ltran{y} q(x_1,1)\} \label{eq-mart-P0} \\
 P_\uparrow[p,q] & := \sum \{y \mid \exists x_1: p(1,1) \ltran{y} q(x_1,2)\} \;, \label{eq-mart-P+}
\end{align}
where the transitions $p(1,0) \ltran{y} q(x_1,0)$, etc.\ are in the Markov chain~$\M_\A$.
Note that $Q_\rightarrow + Q_\uparrow$ and $P_\downarrow + P_\rightarrow + P_\uparrow$ are stochastic.
Observe that we have, e.g., that
 $Q_\rightarrow[p,q] = \sum \{y \mid p(0) \ltran{y} q(0)\}$,
 where the transition $p(0) \ltran{y} q(0)$ is in the Markov chain~$\M_\D$.

The matrix~$G$ from the main body of the paper is (see e.g.\ \cite{EWY:one-counter-PE}) the least (i.e., componentwise smallest) matrix with $G \in [0,1]^{Q \times Q}$ and
\begin{equation}
 G = P_\downarrow + P_\rightarrow G + P_\uparrow G G \,. \label{eq-mart-G}
\end{equation}
Recall from the main body that $G$ is stochastic.

For the matrix~$A$ defined in the main body we have
\begin{align}
 A = Q_\rightarrow + Q_\uparrow G[D] \,, \label{eq-mart-A}
\end{align}
where $G[D] \in [0,1]^{Q \times D}$ denotes the matrix obtained from~$G$ by deleting the columns with indices in $Q \setminus D$.
Recall from the main body that $A$ is stochastic and irreducible.

Define
\begin{align}
 B & := P_\rightarrow + P_\uparrow G + P_\uparrow \in [0,1]^{Q \times Q} \,. \label{eq-mart-B}
\end{align}
Define the vectors $\vec{\delta}_{=0!} \in [-1,1]^{D}$, $\vec{\delta}_{>0!} \in [-1,1]^Q$ with
\begin{align}
 \vec{\delta}_{=0!}[p]  & := \sum \{ y x_1 \mid \exists q \in Q \ \exists x_2 : p(1,0) \ltran{y} q(1+x_1, x_2) \} \label{eq-mart-delta-0-bang} \\
 \vec{\delta}_{>0!}[p] & := \sum \{ y x_1 \mid \exists q \in Q \ \exists x_2 : p(1,1) \ltran{y} q(1+x_1, x_2) \} \,, \label{eq-mart-delta-pos-bang}
\end{align}
where the transitions $p(1,0) \ltran{y} q(1+x_1, x_2)$ and $p(1,1) \ltran{y} q(1+x_1, x_2)$ are in the Markov chain~$\M_\A$.
We have that $\vec{\delta}_{=0!}[p]$ is the expected reward incurred in the next step when starting in $p(0)$.
Similarly, $\vec{\delta}_{>0!}[p]$ is the expected reward incurred in the next step when starting in $p(x_2)$ for $x_2 \ge 1$.

\begin{lemma}
The following equalities hold:
\begin{align}
 \vec{e}_\downarrow      & = \vec{1} + B \vec{e}_\downarrow                  \label{eq-mart-t-minus} \\
 \vec{\delta}_\downarrow & = \vec{\delta}_{>0!} + B \vec{\delta}_\downarrow  \label{eq-mart-delta-minus}
\end{align}
\end{lemma}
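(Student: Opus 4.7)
The plan is to prove both equations by a first-step analysis starting from a configuration $p(1)$, combined with the translation-invariance of $\M_\A$ when the counter is strictly positive. I would first observe that by the assumption $\vec{e}_\downarrow < \infty$, almost every run from $p(1)$ hits counter value~$0$ in finitely many steps, so the expectations defining $\vec{e}_\downarrow$ and $\vec{\delta}_\downarrow$ are legitimate and the one-step conditioning below is justified by the tower property.

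Partition the first transition out of $p(1)$ according to whether the counter drops to~$0$, stays at~$1$, or rises to~$2$: by \eqref{eq-mart-P-}--\eqref{eq-mart-P+} these three cases carry total probability $P_\downarrow[p,q]$, $P_\rightarrow[p,q]$, $P_\uparrow[p,q]$ of landing in state~$q$, respectively, and by \eqref{eq-mart-delta-pos-bang} the expected reward of this first step is $\vec{\delta}_{>0!}[p]$. In the $P_\downarrow$-branch the journey terminates, contributing only the first step. In the $P_\rightarrow$-branch, the continuation from $q(1)$ contributes, by definition, expected duration $\vec{e}_\downarrow[q]$ and expected reward $\vec{\delta}_\downarrow[q]$. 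In the $P_\uparrow$-branch we are in $q(2)$; since reaching height~$0$ is almost sure and continuous in the counter, the run must first hit height~$1$, and the state at which it does so has distribution $G[q,\cdot]$ by the defining property of~$G$. Applying the strong Markov property at this first descent time and invoking translation invariance (the dynamics of $\M_\A$ at strictly positive counter values depend only on the control state, not on the actual height), the expected time and reward spent in going from $q(2)$ to the first visit of height~$1$ equal $\vec{e}_\downarrow[q]$ and $\vec{\delta}_\downarrow[q]$; a further $\vec{e}_\downarrow[r]$ steps and $\vec{\delta}_\downarrow[r]$ units of reward are then incurred from the landing state $r$ at height~$1$.

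Summing these contributions over $q$ (and over $r$ in the last case) and using $B = P_\rightarrow + P_\uparrow + P_\uparrow G$, the total expected time becomes
\[
  \vec{e}_\downarrow[p] \;=\; 1 + \sum_{q} (P_\rightarrow + P_\uparrow)[p,q]\,\vec{e}_\downarrow[q] + \sum_{q,r} P_\uparrow[p,q] G[q,r] \vec{e}_\downarrow[r] \;=\; 1 + (B\vec{e}_\downarrow)[p],
\]
and the analogous computation with rewards replacing times yields $\vec{\delta}_\downarrow = \vec{\delta}_{>0!} + B\vec{\delta}_\downarrow$, since the reward accumulated in the first step averages to $\vec{\delta}_{>0!}[p]$ regardless of where the transition lands.

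The main obstacle, and the point I would treat most carefully, is the justification of the two uses of translation invariance that underlie the $P_\uparrow$-branch: that from $q(2)$ the \emph{exit distribution} on height~$1$ is exactly $G[q,\cdot]$, and that the expected time (resp.\ reward) of that descent equals $\vec{e}_\downarrow[q]$ (resp.\ $\vec{\delta}_\downarrow[q]$). Both facts rest on the observation that the only transitions available in $\M_\A$ at any configuration $s(x_2)$ with $x_2\ge 1$ are governed by the same rules of~$\A$ with $c=\emptyset$, so shifting the counter by a constant yields an isomorphism of sub-Markov chains as long as the counter stays positive; this is the standard one-counter decomposition underlying equation~\eqref{eq-mart-G} for~$G$ itself, and the same argument transports both the duration and the accumulated reward of the excursion.
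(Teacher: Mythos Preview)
Your proof is correct and follows essentially the same approach as the paper: both argue by a one-step analysis from $p(1)$, splitting on whether the counter decreases, stays, or increases, and both handle the $P_\uparrow$ branch by decomposing the descent from level~$2$ to level~$0$ into a first return to level~$1$ (distributed as $G[q,\cdot]$, with expected duration $\vec{e}_\downarrow[q]$ by translation invariance) followed by a fresh descent to level~$0$. The paper packages the four contributions as explicit vectors $\vec{e}_1,\ldots,\vec{e}_4$ and then observes their sum equals $\vec{1}+B\vec{e}_\downarrow$, whereas you write the sum directly; your treatment of the justifications (strong Markov property, translation invariance) is if anything more explicit than the paper's.
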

\begin{proof}
Define the following vectors:
\begin{align*}
 \vec{e}_1 & := P_\downarrow \vec{1} \\
 \vec{e}_2 & := P_\rightarrow (\vec{1} + \vec{e}_\downarrow) \\
 \vec{e}_3 & := P_\uparrow (\vec{1} + \vec{e}_\downarrow) \\
 \vec{e}_4 & := P_\uparrow G \vec{e}_\downarrow
\end{align*}
Observe that $\vec{e}_1 + \vec{e}_2 + \vec{e}_3 + \vec{e}_4$ is the right-hand side of~\eqref{eq-mart-t-minus},
 so we have to show that $\vec{e}_\downarrow = \vec{e}_1 + \vec{e}_2 + \vec{e}_3 + \vec{e}_4$.
Let $q \in Q$.
For concreteness we consider the configuration $q(1)$.
We have that $\vec{e}_1[q]$ is the probability that the first step decreases the counter by~$1$.
Note that we can view $\vec{e}_1[q]$ also as the probability that the first step decreases the counter by~$1$ (namely, to~$0$),
 multiplied with the conditional expected time to reach the $0$-level from~$q(1)$,
 conditioned under the event that the first step decreases the counter by~$1$.
We have that $\vec{e}_2[q]$ is the probability that the first step keeps the counter constant (at~$1$),
 multiplied with the conditional expected time to reach the $0$-level from $q(1)$,
 conditioned under the event that the first step keeps the counter constant.
We have that $\vec{e}_3[q]$ is the probability that the first step increases the counter by~$1$ (namely, to~$2$),
 multiplied with the conditional expected time to reach the $1$-level (again) from $q(1)$,
 conditioned under the event that the first step increases the counter by~$1$.
Finally, $\vec{e}_4[q]$ is the probability that the first step increases the counter by~$1$ (namely, to~$2$),
 multiplied with the conditional expected time to reach the $0$-level \emph{after} having returned to the $1$-level,
 conditioned under the event that the first step increases the counter by~$1$.
So, $\left( \vec{e}_1 + \vec{e}_2 + \vec{e}_3 + \vec{e}_4 \right)[q]$ is the expected time to reach the $0$-level.
Hence \eqref{eq-mart-t-minus} is proved.
The proof of~\eqref{eq-mart-delta-minus} is similar, with reward replacing time.
\end{proof}

By combining \eqref{eq-mart-t-minus} and~\eqref{eq-mart-delta-minus} with the definition of~$\vec{r}_\downarrow$ we obtain:
\begin{equation}
 \vec{r}_\downarrow  = \vec{\delta}_{>0!} - t \vec{1} + B \vec{r}_\downarrow          \label{eq-mart-r-down}
\end{equation}

From the definitions we obtain
\begin{align}
 \vec{\delta}_1 & := \vec{\delta}_{=0!} + Q_\uparrow \vec{\delta}_\downarrow \label{eq-mart-delta-0} \\
 \vec{e}      & := \vec{1} + Q_\uparrow \vec{e}_\downarrow \,. \label{eq-mart-t-0}
\end{align}

By combining \eqref{eq-mart-delta-0} and~\eqref{eq-mart-t-0} with the definition of~$\vec{r}_0$ we obtain:
\begin{equation}
   \vec{r}_0  = \vec{\delta}_{=0!} - t \vec{1} + Q_\uparrow \vec{r}_\downarrow \;. \label{eq-mart-r-0}
\end{equation}

Now we can prove Proposition~\ref{prop-martingale}.
\begin{proof}
We have
\begin{align*}
 & \calE \left( \ms{\ell+1} - \ms\ell \;\middle\vert\; w(\ell), \ \xs\ell_2 = 0 \right) \\
 & = \left( \vec{\delta}_{=0!} - t \vec{1} + Q_\rightarrow \vec{g}(0) + Q_\uparrow \vec{g}(1) - \vec{g}(0)[D] \right)[\ps\ell]
       && \text{by \eqref{eq-mart-m}, \eqref{eq-mart-delta-0-bang}, \eqref{eq-mart-Q0}, \eqref{eq-mart-Q+}} \\
 & = \left( \vec{\delta}_{=0!} - t \vec{1} + Q_\rightarrow \vec{g}(0) + Q_\uparrow \left( \vec{r}_\downarrow + G \vec{g}(0) \right) - \vec{g}(0)[D] \right)[\ps\ell]
       && \text{by \eqref{eq-mart-g}} \\
 & = \left( \vec{r}_{0} + (A - I) \vec{g}(0)[D] \right)[\ps\ell]
       && \text{by \eqref{eq-mart-r-0}, \eqref{eq-mart-A}} \\
 & = 0 && \text{by \eqref{eq-mart-fund}}
\intertext{and}
 & \calE \left( \ms{\ell+1} - \ms\ell \;\middle\vert\; w(\ell), \ \xs\ell_2 > 0 \right) \\
 & = \left( \vec{\delta}_{>0!} - t \vec{1}
     + P_\downarrow \vec{g}\big(\xs\ell_2-1\big)
     + P_\rightarrow \underbrace{\vec{g}\big(\xs\ell_2\big)}_{\stackrel{\eqref{eq-mart-g}}{=}\ \vec{r}_\downarrow + G \vec{g}(\xs\ell_2-1)}
     + \ P_\uparrow \underbrace{\vec{g}\big(\xs\ell_2+1\big)}_{\stackrel{\eqref{eq-mart-g}}{=}\ \vec{r}_\downarrow + G \vec{g}(\xs\ell_2)} - \ \vec{g}(\xs\ell_2) \right)[\ps\ell]
       && \text{by \eqref{eq-mart-m}, \eqref{eq-mart-delta-pos-bang}, \eqref{eq-mart-P-}--\eqref{eq-mart-P+}} \\
 & = \left( \vec{\delta}_{>0!} - t \vec{1} + (P_\rightarrow + P_\uparrow G + P_\uparrow) \vec{r}_\downarrow + \left( P_\downarrow + P_\rightarrow G + P_\uparrow G G \right) \vec{g}\big(\xs\ell_2-1\big) - \vec{g}(\xs\ell_2) \right)[\ps\ell]
       && \text{by \eqref{eq-mart-g}} \\
 & = \left( \vec{r}_\downarrow + G \vec{g}\big(\xs\ell_2-1\big) - \vec{g}(\xs\ell_2) \right)[\ps\ell]
       && \text{by \eqref{eq-mart-B}, \eqref{eq-mart-r-down}, \eqref{eq-mart-G}} \\
 & = 0 && \text{by \eqref{eq-mart-g}\,.}
\end{align*}
\end{proof}

\subsection{Proof of Lemma~\ref{lem:weights-bound}}

Define $\emax := 1 + \max_{q \in Q} \vec{e}_\downarrow[q] \ge 2$.

We first prove the following lemma:
\begin{lemma} \label{lem-mart-potential}
There exists a vector $\vec{g} \in \Rset^D$ with $\vec{g} = \vec{r}_0 + A \vec{g}$
 and
 \[
  0 \le \vec{g}[q] \le \frac{\emax |D|}{\ymin^{|D|}} \qquad \text{for all $q \in D$,}
 \]
 where $\ymin$ denotes the smallest nonzero entry in the matrix~$A$.
\end{lemma}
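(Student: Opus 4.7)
The approach is to identify~$\vec{g}$ as a solution of the Poisson equation associated with the irreducible stochastic matrix~$A$, and then to bound it via a standard hitting-time estimate together with a constant shift. The starting point is the identity
\[
  \vec{\mu}_{oc}^T \vec{r}_0 \;=\; \vec{\mu}_{oc}^T \vec{\delta}_1 \,-\, t\cdot\vec{\mu}_{oc}^T \vec{e} \;=\; 0,
\]
which is immediate from the definition of the oc-trend~$t$. Since $A$ is irreducible and stochastic, the kernel of $I-A$ is spanned by~$\vec{1}$ and its left kernel by~$\vec{\mu}_{oc}$; therefore $(I-A)\vec{g}=\vec{r}_0$ is solvable and any two solutions differ by a multiple of~$\vec{1}$. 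Hence it suffices to exhibit one solution with controlled magnitude and then add a constant to enforce nonnegativity.

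To exhibit such a solution, I would fix an arbitrary reference state $q_0\in D$, consider the Markov chain $(X_k)_{k\ge 0}$ on~$D$ with transition matrix~$A$, and introduce the hitting time $\tau := \inf\{k\ge 0 : X_k = q_0\}$. Then I would define
\[
  \vec{g}[q] \;:=\; \E_q\!\left[\sum_{k=0}^{\tau-1}\vec{r}_0[X_k]\right] \qquad (q\in D),
\]
with the convention that the empty sum is~$0$, so that $\vec{g}[q_0]=0$. For $q\neq q_0$ a one-step decomposition immediately gives $\vec{g}[q]=\vec{r}_0[q]+\sum_{q'}A[q,q']\vec{g}[q']$. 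For $q=q_0$ the same identity has to be verified separately: its right-hand side equals the expected cumulative reward accumulated along one full return cycle from~$q_0$, which by the standard renewal identity equals $\E_{q_0}[T_{q_0}]\cdot\vec{\mu}_{oc}^T \vec{r}_0 = 0$, where $T_{q_0}$ is the first return time to~$q_0$. Thus~$\vec{g}$ satisfies the Poisson equation.

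For the quantitative bound, I would use $|\vec{\delta}_1[q]|\le \vec{e}[q]\le \emax$ (since per-step rewards lie in $[-1,1]$) and $|t|\le 1$, giving $\|\vec{r}_0\|_\infty \le 2\emax$. Irreducibility of~$A$ on $|D|$ states with smallest positive entry~$\ymin$ implies that from every state the chain reaches~$q_0$ within $|D|$ steps with probability at least $\ymin^{|D|}$; the usual geometric-tail estimate then yields $\E_q[\tau]\le |D|/\ymin^{|D|}$, and hence $|\vec{g}[q]|\le 2\emax\,|D|/\ymin^{|D|}$. Replacing~$\vec{g}$ by $\vec{g}-(\min_{q\in D}\vec{g}[q])\cdot\vec{1}$ still solves the Poisson equation (because $A\vec{1}=\vec{1}$) and is componentwise nonnegative, which delivers a bound of the required order. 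The only genuinely delicate step is the renewal identity invoked at $q_0$; the rest is routine bookkeeping, and the precise constant stated in the lemma can be recovered by bounding $\max_q\vec{g}[q]-\min_q\vec{g}[q]$ directly rather than through $2\|\vec{g}\|_\infty$.
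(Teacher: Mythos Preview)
Your argument is correct and complete in all essentials. It is, however, a genuinely different route from the paper's. The paper first invokes the group inverse $(I-A)^\#$ (Lemma~\ref{lem-g-zero-exists}) to obtain \emph{some} solution of the Poisson equation, then shifts it by a multiple of~$\vec{1}$ so that the maximum equals $\emax|D|/\ymin^{|D|}$, and finally proves nonnegativity by a direct graph-distance induction: starting from the state where $\vec{g}$ is maximal, they show that a state at distance~$i$ in the transition graph satisfies $\vec{g}[p]\ge\gmax-\emax i/\ymin^{i}$, which for $i\le|D|-1$ gives $\gmin\ge 0$. Your approach instead constructs the solution explicitly via the hitting-time representation $\vec{g}[q]=\E_q\bigl[\sum_{k<\tau}\vec{r}_0[X_k]\bigr]$ and bounds it through $\E_q[\tau]\le |D|/\ymin^{|D|}$; this avoids the group-inverse machinery and is arguably more transparent. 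The price is the constant: your route yields at best $\gmax-\gmin\le 4\emax|D|/\ymin^{|D|}$ (via $2\|\vec{g}\|_\infty$ with $\|\vec{r}_0\|_\infty\le 2\emax$), whereas the paper's propagation argument---which uses only the one-sided bound $\vec{r}_0\le\emax\vec{1}$, valid here because $t>0$---gives the oscillation bound $\emax|D|/\ymin^{|D|}$ on the nose. Your closing remark that the exact constant requires bounding the oscillation ``directly'' is right, and that direct argument is precisely the paper's induction; short of reproducing it, your proof delivers the statement up to a harmless constant factor.
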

\begin{proof}
Recall that by Lemma~\ref{lem-g-zero-exists} there is a vector $\vec{g}(0)[D] \in \Rset^D$ with
 \[
  \vec{g}(0)[D] = \vec{r}_0 + A \vec{g}(0)[D] \,.
 \]
Since $A$ is stochastic, we have $A \vec{1} = \vec{1}$.
So there is $\kappa \in \Rset$ such that with $\vec{g} := \vec{g}(0)[D] + \kappa \vec{1}$ we have
\begin{equation}
  \vec{g} = \vec{r}_0 + A \vec{g} \label{eq-mart-potential-proof}
\end{equation}
 and $\gmax = \emax |D| / \ymin^{|D|}$,
 where we denote by $\gmin$ and $\gmax$ the smallest and largest component of~$\vec{g}$, respectively.
We have to show $\gmin \ge 0$.
Let $q \in D$ such that $\vec{g}[q] = \gmax$.
Define the \emph{distance} of a state $p \in D$, denoted by $\eta_p$,
as the distance of~$p$ from~$q$ in the directed graph induced by~$A$.
Note that $\eta_q = 0$
and all $p \in D$ have distance at most $|D|-1$, as $A$ is
irreducible. We prove by induction that a state~$p$ with
distance~$i$ satisfies $\vec{g}[p] \ge \gmax - \emax i/ \ymin^i$.
The claim is obvious for the induction base ($i=0$).  For the induction
step, let $p$ be a state such that~$\eta_p = i+1$. Then there is a
state $r$ such that $A[r,p] > 0$ and $\eta_r = i$.
We have
\begin{align*}
 \vec{g}[r] & =   (A \vec{g})[r] + \vec{r}_0[r]  && \text{by \eqref{eq-mart-potential-proof}} \\
            & \le (A \vec{g})[r] + \emax         && \text{as $\vec{r}_0 \le \emax \vec{1}$} \\
            & = \big(A[r,p] \cdot \vec{g}[p] +
                      \sum_{p' \neq p} A[r,p'] \cdot \vec{g}[p']\big) + \emax \\
            & \le A[r,p] \cdot \vec{g}[p] + (1-A[r,p])\cdot \gmax + \emax && \text{as $A$ is stochastic.}
\end{align*}
By rewriting the last inequality and applying the induction hypothesis
to $\vec{g}[r]$ we obtain
\begin{align*}
  \vec{g}[p] & \geq \gmax - \frac{\gmax - \vec{g}[r] + \emax}{A[r,p]} \geq
  \gmax - \frac{\gmax - (\gmax - \emax i/ \ymin^i) + \emax}{\ymin} \geq
  \gmax - \frac{\emax (i+1)}{\ymin^{i+1}} \,.
\end{align*}
This completes the induction step. Hence we have $\gmin \ge 0$ as desired.
\end{proof}

Now we prove Lemma~\ref{lem:weights-bound}:
\begin{proof}
We need the following explicit expression for~$\vec{g}$:
\begin{equation}
 \vec{g}(n) = G^{n} \vec{g}(0) + \sum_{i=0}^{n - 1} G^i \vec{r}_\downarrow
  \qquad \text{for all $n \ge 0$} \label{eq-mart-g-explicit}
\end{equation}
Let us prove~\eqref{eq-mart-g-explicit} by induction on~$n$.
For the induction base note that the cases $n = 0,1$ follow immediately from the definition~\eqref{eq-mart-g} of~$\vec{g}$.
For the induction step let $n \ge 1$.
We have:
\begin{align*}
 \vec{g}(n + 1)
  & = \vec{r}_\downarrow + G \vec{g}(n) && \text{by~\eqref{eq-mart-g}} \\
  & = G^{n+1} \vec{g}(0) + \sum_{i=0}^{n} G^i \vec{r}_\downarrow  && \text{by the induction hypothesis}
\end{align*}
So \eqref{eq-mart-g-explicit} is proved.
In the following we assume that $\vec{g}(0)$ is chosen as in Lemma~\ref{lem-mart-potential}.
We then have:
\begin{align*}
 |\vec{g}(n)| & \le |\vec{g}(0)| + n |\vec{r}_\downarrow|  && \text{by~\eqref{eq-mart-g-explicit} and as $G$ is stochastic} \\
 & \le \frac{\emax |D|}{\ymin^{|D|}} + n \emax             && \text{by Lemma~\ref{lem-mart-potential} and as $|\vec{r}_\downarrow| \le |\vec{e}_\downarrow| \le \emax$}
\end{align*}
\end{proof}

\section{Proof of Theorem~\ref{thm:approx-general-case2}}
\label{sec:case2-approx}

We show that $\calP(\run(p\vec{v},\Z_{-d}))$ can be effectively 
approximated up to an arbitrarily small absolute error 
$\varepsilon > 0$.

We will use the fact than the probability of reaching a specific set of states in a 1-dimensional pMC can be effectively approximated.

\begin{lemma}
 \label{lem:1pmc-reachability}
Let $\A'$ be any one-dimensional pMC and let $Q$ be its set of states. Given an initial configuration $q(k)$, a set $S\subseteq Q$ and $\eps>0$ we can effectively approximate, up to the absolute error $\eps$, the probability of reaching a configuration $r(j)$ with $r\in S$ from $q(k)$.
\end{lemma}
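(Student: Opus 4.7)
My plan is to reduce the one-dimensional reachability problem to approximating reachability in a truncated finite-state Markov chain, combining the divergence bounds of Lemma~\ref{lem-divergence} with the finite chain $\X_{\A'}$ of Definition~\ref{def:X}. Without loss of generality, I will first modify $\A'$ by making every $r \in S$ absorbing (replacing its outgoing rules with a single self-loop of the form $(r,0,\emptyset,r)$); call the result~$\A''$. The sought probability then equals $1 - \calP(E)$, where $E$ is the event that a run of $\M_{\A''}$ from $q(k)$ never visits an $S$-state.

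Any run in $E$ exhibits one of two long-term behaviors: either (I) the counter diverges and the control state eventually stays in a BSCC $C \subseteq Q \setminus S$ of $\F_{\A''}$ with positive trend, or (II) the counter returns to zero infinitely often and the visits to zero eventually stay in a BSCC $D \subseteq Q \setminus S$ of $\X_{\A''}$. For (I), Lemma~\ref{lem-divergence} yields a computable threshold $K$ of size $\calO(|\A'|^{c}\log(1/\eps))$ such that from any $r(K)$ with $r \in C$ the counter revisits zero with probability at most $\eps/(2|Q|)$; by ergodicity, the divergent trajectory visits every state of~$C$ almost surely, so the (I)-contribution is approximated within $\eps/2$ by the probability of reaching some $r(K)$ with $r \in C$ via a $\Zall$-safe path avoiding~$S$. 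For (II), Lemma~\ref{lem-measures} identifies the probability of eventually staying in~$D$ with a reachability probability in the finite chain $\X_{\A''}$, whose transition probabilities $[r{\downarrow}r']$ are first-return probabilities of a one-dimensional pMC and are effectively approximable by the methods of~\cite{ESY:polynomial-time-termination,BKK:pOC-time-LTL-martingale}.

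Assembling these ingredients, I will construct a finite Markov chain $\calM_K$ on state space $\{r(j) : r \in Q,\ 0 \le j \le K\}$ augmented with an absorbing sink $\qdown$ that represents the event ``$S$ is reached''. Transitions with source counter value below~$K$ are inherited from $\M_{\A''}$, with every entry into an $S$-state redirected to~$\qdown$. At a boundary configuration $r(K)$, the transition is replaced by: a direct edge to $\qdown$ if $r$ lies in a positive-trend BSCC of $\F_{\A''}$ meeting~$S$; a self-loop if $r$ lies in a positive-trend BSCC disjoint from~$S$ (absorbing the divergent non-$S$ behavior); and, for $r$ outside any positive-trend BSCC, a stochastic edge to $r'(K)$ weighted by an approximation of $[r{\downarrow}r']$. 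The probability of reaching $\qdown$ from $q(k)$ in $\calM_K$, computable by standard linear algebra on an absorbing finite chain, then $\eps$-approximates the target probability.

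The main obstacle will be bookkeeping the cumulative error: the first-return probabilities $[r{\downarrow}r']$ are generally irrational and must themselves be approximated, and the induced perturbations propagate through the linear system solving reachability in~$\calM_K$. I plan to handle this by standard perturbation bounds for absorbing finite Markov chains, choosing the precision of each individual first-return approximation fine enough (polynomial in $|\A'|$ and $\log(1/\eps)$ suffices) so that the total cumulative error stays below~$\eps$.
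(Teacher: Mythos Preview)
Your approach is considerably more elaborate than the paper's. The paper simply truncates the counter at height $k+i$ (placing self-loops at the cap) and argues that the probability of first reaching height $k+i$ while avoiding~$S$ and \emph{afterwards} still hitting~$S$ becomes small for large~$i$; no BSCC decomposition, no return-probability approximation, and no perturbation analysis enter. The only ingredient is the observation that from any $t(j)$ with $j \ge |Q|$ and $t$ having a path to~$S$ in~$\F_{\A'}$, there is a path of length at most~$|Q|$ into~$S$. Reachability is then computed exactly in the resulting finite chain.

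Your decomposition into (I) divergent and (II) zero-recurrent behaviour has two gaps. First, it is not exhaustive: a BSCC $C \subseteq Q \setminus S$ of $\F_{\A''}$ with trend zero that is \emph{not} decreasing (in the sense of Section~\ref{sec-case1}) yields runs that stay in~$C$ with counter bounded away from zero yet not diverging to infinity; such runs belong to~$E$ but fall into neither (I) nor (II), and your boundary treatment does not absorb them. Second, the boundary rule for ``$r$ outside any positive-trend BSCC'' invokes the wrong quantity: $[r{\downarrow}r']$ is by definition the first-return probability from $r(0)$ to $r'(0)$, which uses a zero-rule in its first step. What you actually need at level~$K$ is the termination matrix~$G$ (first passage from level~$1$ to level~$0$), applied to the state reached \emph{after} the upward step from~$r(K)$, with the defect $\vec{1} - G\vec{1}$ routed to a divergence sink. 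With these corrections your construction becomes sound, but it is then essentially the standard reduction of one-counter reachability to the (approximate) computation of~$G$, which is heavier machinery than the paper's direct truncation.
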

\begin{proof}
 The crucial observation is that if there is a path from a state $t$ to $S$ in $\F_{\A'}$, then for every $j \geq |Q|$ there is a path of length at most $n$ from $q(j)$ to a configuration with the control state in $S$. If there is no path from $t$ to $S$ in $\F_{\A'}$, then a configuration with the control state in $S$ cannot be reached from $q(j)$ for any $j$. Thus, the probability that a run initiated in $q(k)$ visits a counter value $q(k+i)$ without visiting $S$ \emph{and then} visits $S$ is at most $(1-\pmin^{|Q|})^{\frac{i}{|Q|}}$, where $\pmin$ is the minimal non-zero probability in $\A'$. For a given $\eps$, we can effectively compute $i$ such that $(1-\pmin^{|Q|})^{\frac{i}{|Q|}} \leq \eps$ and effectively construct a finite-state Markov chain $\calM$ in which the configurations of $\A'$ with counter value $\leq i + k$ are encoded in the finite-state control unit (i.e., $\calM$ can be defined as a Markov chain obtained from $\calM_{\A'}$ by removing all configurations with counter height $>i+k$ 
together with their adjacent transitions and replace all transitions outgoing from configurations of the form $r(i+k)$ with self-loops on $r(i+k)$).

Using standard methods for finite-state Markov chains we can compute the probability of reaching the set $S'=\{r(j)\mid r\in S\}$ from $q(k)$ in $\calM$. From the discussion above it follows that this value is an $\eps$-approximation of the probability that $r(j)$ with $r\in S$ is reached in $\A'$.
\end{proof}

The proof closely follows the proof of Theorem~\ref{thm:approx-general}. We first show how to approximate the probability under the assumption that $p$ is in some BSCC $D$ of $\X_\B$. It is then easy to drop this assumption.

\newcommand{\Amax}{A_{\mathit{max}}}
\begin{proposition}
\label{prop:case2-approx}
There is an algorithm which, for a given \mbox{$d$-dimensional} pMC $\A$, its initial configuration $p\vec{v}$ such that $p$ is in a BSCC of $\X_\B$, and a given $\eps>0$ computes a number $\nu$ such that $|\calP(\run(p\vec{v},\Z_{-d}))-\nu|\leq d\cdot\eps$.
\end{proposition}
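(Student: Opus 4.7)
The plan is to mimic the inductive construction of Proposition~\ref{prop:approx} with two adjustments dictated by Case~II: I will use Lemma~\ref{lem:two-counter-divergence} in place of Lemma~\ref{lem-divergence} to find a threshold beyond which individual stopping counters are unlikely to ever reach zero, and I will replace the exact reachability computations in finite-state Markov chains by the $\eps$-approximate reachability computations in one-dimensional pMCs supplied by Lemma~\ref{lem:1pmc-reachability}. The conceptual novelty is that counter~$d$ is never a stopping counter, so after bounding every stopping counter by a sufficiently large constant the resulting object is not a finite-state chain but a one-dimensional pMC in counter~$d$; this is precisely what Lemma~\ref{lem:1pmc-reachability} is designed to accommodate.

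Concretely, I would first invoke Theorem~\ref{thm:qual-d-algorithm} to test whether $\calP(\run(p\vec{v},\Zminusi{d}))=1$, returning $\nu=1$ in that case. Otherwise Lemma~\ref{lem:case2-subcrit-not-diverging} forces every reward to be oc-diverging in the BSCC $D$ containing $p$. For each stopping counter $i\in\{1,\ldots,d{-}1\}$ with $\rt[i]>0$, Lemma~\ref{lem:two-counter-divergence} applied to the two-counter pMC obtained by isolating counter~$i$ and counter~$d$ furnishes a threshold $K_i$ such that, starting from any configuration whose $i$-th stopping counter value is at least $K_i$, the probability that the corresponding cumulative reward ever drops by $K_i$ is at most $\eps/d$; for $\rt[i]=0$ I use the bound $\bottominf_i(q)\leq 3|Q|^3$ from Lemma~\ref{lem-bottominf}. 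Let $K$ be the maximum of these thresholds.

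The induction runs on the number $d-1$ of stopping counters. In the base case $d=2$, I would encode the sole stopping counter up to value $K$ in the finite control, producing a one-dimensional pMC $\M^2_K$ whose only counter is counter~$d$; zero-hits of the stopping counter are rerouted to a fresh sink $\qdown$ and transitions that would push it above $K$ to a fresh sink $\qup$. By the choice of $K$, the probability of reaching $\qdown$ in $\M^2_K$ from $p\vec{v}$ differs from $\calP(\run(p\vec{v},\Zminusi{d}))$ by at most $\eps$; Lemma~\ref{lem:1pmc-reachability} then $\eps$-approximates this reachability, giving the announced bound $2\eps=d\eps$. In the inductive step $d>2$, I would build $\M^d_K$ in the style of Proposition~\ref{prop:approx}: at every configuration $q\vec{u}$ with $\mindiv(q\vec{u})=i$, the outgoing transitions are replaced by a transition to $\qdown$ of probability equal to a recursively computed $((d{-}1)\eps)$-approximation of $\calP_{\A_{-i}}(\run(q\vec{u}_{-i},\Zminusi{d-1}))$ and a complementary transition to $\qup$; direct zero-hits of stopping counters are routed to $\qdown$. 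A further application of Lemma~\ref{lem:1pmc-reachability} then $\eps$-approximates the probability of reaching $\qdown$ in $\M^d_K$, and the standard error aggregation (one $\eps$ from the tail bound, $(d{-}1)\eps$ from the recursion, one $\eps$ from the final approximation; the last two are not both charged in the same branch) yields $|\nu - \calP(\run(p\vec{v},\Zminusi{d}))| \leq d\eps$.

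The main obstacle I expect is the hypothesis ``control state in a BSCC of the associated $\X$-chain'' required by the recursive call on $\A_{-i}$: forgetting counter~$i$ changes the underlying one-counter abstraction and hence its BSCC structure, so $q\vec{u}_{-i}$ need not satisfy the assumption directly. I plan to resolve this by wrapping the recursive call with a bounded unfolding in the spirit of Theorem~\ref{thm:approx-general}, using a tail-decay bound analogous to Lemma~\ref{lem:F_A-BSCC} but for the chain $\X_{\B_i}$ of the reduced pMC, cutting off unfolding after a number of steps sufficient to guarantee entry into a BSCC of the new $\X$-chain up to any prescribed error. Apart from this bookkeeping, every other ingredient is a routine adaptation of the Case~I argument, with Lemma~\ref{lem:1pmc-reachability} supplying the ``finite-state engine'' that in Case~I was plain finite-Markov-chain reachability.
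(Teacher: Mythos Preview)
Your approach is essentially the paper's own: induction on the number of stopping counters, Lemma~\ref{lem:two-counter-divergence} to fix the cut-off~$K$, encoding the bounded stopping counters into the finite control to produce a one-dimensional pMC in counter~$d$, and Lemma~\ref{lem:1pmc-reachability} for the final approximation. The error bookkeeping is also the same.

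The obstacle you anticipate, however, is not real, and the workaround you sketch is unnecessary. When you delete a \emph{stopping} counter~$i\in\{1,\ldots,d{-}1\}$ to form $\A_{-i}$, the one-counter abstraction of the non-stopping counter does not change: the rules of $\B=\B_d$ come only from rules of~$\A$ with zero-test set $\emptyset$ or~$\{d\}$, and forgetting counter~$i$ merely projects away the $i$-th label component while leaving the transition probabilities between configurations $q(k)$ of~$\B$ intact. Consequently $\X_\B$ for $\A_{-i}$ coincides (as an unlabelled chain) with $\X_\B$ for~$\A$, the BSCCs are identical, and the state~$q$ at which you make the recursive call is automatically in a BSCC of the new $\X$-chain. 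The induction hypothesis therefore applies directly, with no extra unfolding needed. The paper's proof takes this for granted and simply invokes the algorithm for dimension $d-1$.
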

\begin{proof}
 Clearly we need to consider only $d\geq 2$. We proceed by induction on $d$. The base case and the induction step are solved in almost identical way (which was the case also in the proof of Proposition~\ref{prop:approx}). Therefore, below we present the proof of the induction step and only highlight the difference between the induction step and the base case when needed.

We again assume that $\calP(\run(p\vec{v},\Z_{-d}))<1$. This can be checked effectively due to Theorem~\ref{thm:qual-d-algorithm} and if the condition does not hold, we may output $\nu=1$. In particular we assume that all rewards in $D$ are oc-diverging.

 Recall from the proof of Proposition~\ref{prop:approx} that for any counter $i$ and any vector $\vec{\beta}\in\{-1,0,1\}^d$ we denote by $\vec{\beta}_{-i}$ the $(d-1)$-dimensional vector obtained from $\vec{\beta}$ by deleting its $i$-component; and by $\A_{-i}$ the $(d-1)$-dimensional pMC $\A_{-i}$ obtained from $\A$ by ``forgetting'' the $i$-th counter. (See the proof of Proposition~\ref{prop:approx} for a formal definition). %

Let $\rt$ be the oc-trend of $D$. For every counter $i\in\{1,\dots,d-1\}$ such that $rt[i]>0$ we compute the number $A_0$ of Lemma~\ref{lem-divergence} for $D$ in $\X_\D$, and denote this number by $A_i$. We put $\Amax=\max\{A_i\mid \rt[i]>0\}$. Then we compute a number  such that $\frac{\Amax^K}<\eps/2$. For any $(d-1)$-dimensional vector $\vec{x}$ we denote by $\mindiv(\vec{x})$ the smallest $i\in\{1,\dots,d-1\}$ such that either $\rt[i]>0$ and $\vec{x}[i]\geq K$ or $\vec{t}[i]=0$ and $\vec{x}[i]\geq 3|Q|^3$ (if such $i$ does not exist, we put $\mindiv(\vec{x})=\bot$).

Consider a 1-dimensional pMC $\A_{K} = (Q',\gamma',W')$ which can be obtained from $\A$ as follows:
\begin{itemize}
 \item $Q'$ consists of all tuple $(q,\vec{u})$, where $q\in Q$ and $\vec{u}$ is an arbitrary $(d-1)$-dimensional vector of non-negative integers whose every component is  bounded by $K$; additionally, $Q'$ contains two special states $q{\uparrow}$ and $q{\downarrow}$.
 \item $((q,\vec{u}),j,c,(r,\vec{z}))\in \gamma'$ iff $\mindiv(\vec{u})=\bot$ and $(q,\langle\vec{z}-\vec{u},j\rangle_d,c,r) \in \gamma$. 
 \item For every $1\leq i \leq d-1$ and every $(q,\vec{u})\in |Q|$ such that $\mindiv(\vec{u})\neq \bot$ we have rules $((q,\vec{u}),0,\emptyset,q{\uparrow})$ and  $((q,\vec{u}),0,\emptyset,q{\uparrow})$ in $\gamma'$.
 \item $W'((q,\vec{u}),j,c,(r,\vec{z})) = W(q,\langle\vec{z}-\vec{u},j\rangle_d,c,r)$ for all rules in $\gamma'$ of this shape.
 \item $W'((q,\vec{u}),0,\emptyset,q{\downarrow}) = x$, where $x$ is some $((d-1)\cdot\eps)$-approximation of $\calP_{\A_{-i}}(\run(q\vec{u}_{-i},\Z_{-d}))$ (which can be computed using the algorithm for dimension $d-1$).
 \item $W'((q,\vec{u}),0,\emptyset,q{\uparrow}) = 1-W'((q,\vec{u}),0,\emptyset,q{\downarrow})$.
\end{itemize}

In other words $\A_K$ is obtained from $\A$ by encoding all configurations where all of the first $d-1$ counters are bounded by $K$ explicitly into the state space. If one of these counters surpasses $K$, we ``forget'' about this counter and approximate the 0-reachability in the resulting configuration recursively.

By induction $\A_K$ can be effectively constructed.

Now for an initial configuration $p\vec{v}$ in which the first $d-1$ counters are bounded by $K$ let $P(p\vec{v})$ be the probability of reaching, when starting in $p\vec{v}$ in $\A_K$, either the state $\qdown$ or a state in which at least one of the first $d-1$ counters is 0. Due to Lemma~\ref{lem:1pmc-reachability} we can approximate $P(p\vec{v})$ effectively up to $\eps/2$. We claim that $|\calP(\run(p\vec{v},\Z_{-d}))-P(p\vec{v})|\leq d\cdot \eps$.

Indeed, let us denote $\Div$ the set of all configurations $q\vec{y}$ of $\A$ such that $\vec{y}_{-d}$ is bounded by $K$ and $\mindiv({\vec{y}_{-d}})\neq \bot$. For every $q\vec{u}\in \Div$ we denote by $x_{q\vec{u}}$ the probability of the transition leading from some $(q(k),\vec{u}_{-d})$ to $\qdown$ in $\calM_{\A_K}$ (note that this probability is independent of $k$ and is equal to the weight of the corresponding rule in $\A_K$). Then $|\calP(\run(p\vec{v},\Z_{-d}))-P(p\vec{v})|\leq \max_{q\vec{u}\in\Div}| \calP(\run(q<\vec{u},\Z_{-d}) - x_{q\vec{u}} |$. Now $\calP(\run(q\vec{u},\Z_{-d})\leq P_1(q\vec{u}) + P_2(q\vec{u})$, where $P_1(q\vec{u})$ is the probability that a run initiated in $q\vec{u}$ in $\A$ visits a configuration with $i$-th counter 0 via a $\Z_{-i,d}$-safe path, and $P_{2}(q\vec{u})$ is the probability that a run initiated in $q\vec{u}$ in $\A$ visits a configuration with some counter equal to 0 via an $\{i\}$-safe path.

So let us fix $q\vec{u}\in \Div$ and denote $i=\mindiv(\vec{u}_{-d})$. If $\vec{t}[i]=0$, then 
we have $P_1(q\vec{u})=0$, by Lemma~\ref{lem-bottominf}. Otherwise $P_1(q\vec{u})$ is bounded by the probability that a run $w$ initiated in $q(K)$ in $\B$ satisfies $\inf_{j \geq 0} \totalrew{i}{w}{}{j}\leq -K$ From  Lemma~\ref{lem:two-counter-divergence} we get that this is bounded by $\Amax^K \leq \eps/2$, where the last inequality follows from the choice of $K$. 

For $P_2(q\vec{u})$ note that $P_2{(q\vec{u})}=\calP_{\A_{-i}}(\run(q\vec{u}_{-i},\Z_{-d}))$ and thus by the construction of $\A_{K}$ we have $|P_2(q\vec{u}) - x_{q\vec{u}}|\leq (d-1)\cdot\eps$.

Altogether we have
\begin{align*}
 &|\calP(\run(p\vec{v},\Z_{-d}))-P(p\vec{v})|\leq| P_1(q\vec{u}) + P_2(q\vec{u}) - x_{q\vec{u}}| \leq \eps/2 + (d-1)\cdot \eps.
\end{align*}

Now it is clear that approximating $P(p\vec{v})$ up to $\eps/2$ and returning this value as $\nu$ yields the desired result. As in case 1, if some component of $\vec{v}$ surpasses $K$, we can immediately reduce the problem to the approximation for $(d-1)$-dimensional case.

Note that for the base case $d=2$ the same approach can be used, the only difference that the weight of the rule $((q,\vec{u}),0,\emptyset,q{\uparrow})$ in $\A_K$ is 1 and the weight of $((q,\vec{u}),0,\emptyset,q{\downarrow})$ is 0.

\end{proof}

To prove Theorem~\ref{thm:approx-general-case2} in its full generality it suffices to note, that we can effectively compute a constant $b\in(0,1)$ such that the probability that a run does not visit a configuration $q\vec{u}$ with $q$ in some BSCC of $\X_\B$ or $Z(\vec{u})\neq \emptyset$ in at most $i$ steps is bounded by $b^i$ (see Lemma~\ref{lem:F_A-BSCC} and Lemma~\ref{lem:1pmc-reachability}). Therefore, to approximate the probability for $p\vec{v}$ with $\vec{v}$ not belonging to a BSCC of $\X_\B$ we can use the same approach as in case 1: we unfold $\A$ into a suitable number of steps and approximate the termination value in configurations where the state belongs to some $D$ using the algorithm from the previous proposition. See the proof of Theorem~\ref{thm:approx-general} for further details.

\end{document}